\newif\ifdraft \global\drafttrue
\colorlet{darkblue}{blue!50!black}
\newcounter{smallarabics}
\newcounter{smallroman}
\newcommand{\ben}{\begin{enumerate}[(1)]}
\newcommand{\een}{\end{enumerate}}
\newtheorem{theorem}{Theorem}[section]
\newtheorem{proposition}[theorem]{Proposition}
\newtheorem{lemma}[theorem]{Lemma}
\newtheorem{corollary}[theorem]{Corollary}
\newtheorem{definition}[theorem]{Definition}
\newtheorem{remark}[theorem]{Remark}
\newtheorem{example}[theorem]{Example}
\newtheorem{example*}{Example}
\def\bep{\begin{proposition}}
\def\eep{\end{proposition}}
\def\bel{\begin{lemma}}
\def\eel{\end{lemma}}
\def\bet{\begin{theorem}}
\def\eet{\end{theorem}}
\def\bed{\begin{definition}}
\def\eed{\end{definition}}
\def\bec{\begin{corollary}}
\def\eec{\end{corollary}}
\newcommand{\R}{{\mathbb R}}
\newcommand{\Z}{{\mathbb Z}}
\newcommand{\IP}{{\mathbb P}}
\newcommand{\Q}{{\mathbb Q}}
\newcommand{\I}{{\mathbb I}}
\newcommand{\argdot}{{\bm \cdot }}
\newcommand{\bdot}[1]{{\overset{\bm .}{#1}}}
\def\rr{{\mathbb R}}
\def\zz{{\mathbb Z}}
\def\cc{{\mathbb C}}
\def\Z{{\mathbb Z}}
\def\textsl{{}}
\def\Re{\mathrm{Re}\,}
\newcommand{\Id}{\mathop{\mathrm{Id}}\nolimits}
\def\c0inf{C_0^\infty}
\def\cV{{\cal  V}}
\def\cA{{\cal A}}
\def\PP{\mathcal{P}}
\def\EE{\mathcal{E}}
\def\QQ{\mathbb{Q}}
\def\II{\mathcal{I}}
\newcommand{\beq}{\begin{equation}}
\newcommand{\eeq}{\end{equation}}
\newcommand{\bear}[1]{\begin{array}{#1}}
\newcommand{\ear}{\end{array}}
\newcommand{\eee}{\mathrm{e}}
\renewcommand{\d}{\mathrm{d}}
\newcommand{\ep}{\mathrm{ep}}
\def\qed{$\Box$\medskip}
\def\cP{{\cal P}}
\def\cA{{\cal A}}
\def\bar{\overline}
\def\12{\frac{1}{2}}
\def\dd{{\textup d}}
\def\d{\mathrm{d}}
\def\Ent{\mathrm{Ent}}
\def\P{\mathbb P}
\def\ie{\textit{i.e., }}
\def\wP{{\widehat{\mathbb P}}}
\def\e{{\epsilon}}
\newcommand{\aA}{{\cal A}}
\newcommand{\BB}{{\cal B}}
\newcommand{\FF}{{\cal F}}
\newcommand{\GG}{{\cal G}}
\newcommand{\VV}{{\cal V}}
\newcommand{\N}{{\mathbb N}}
\def\fp{\mathfrak{p}}
\newcommand{\ssss}{{\mathfrak s}}
\newcommand{\hhhh}{{\mathfrak h}}
\newcommand{\tA}{\hyperlink{thm.A}{\textbf{A}}\xspace}
\newcommand{\tB}{\hyperlink{thm.B}{\textbf{B}}\xspace}
\numberwithin{equation}{section}
\begin{document}
\def\today{}
\title{Fluctuation Theorem and Thermodynamic Formalism}
\author{ Noé~Cuneo$^{1,3}$, Vojkan Jak\v{s}i\'c$^{1}$, Claude-Alain  Pillet$^{2}$,
Armen~Shirikyan$^{1,3,4}$
\\ \\ \\
$^1$Department of Mathematics and Statistics, McGill University\\ 
805 Sherbrooke Street West, Montreal, QC, H3A 2K6, Canada 
\\ \\
$^2$Aix Marseille Univ, Université de Toulon, CNRS, CPT, Marseille, France  
\\ \\
$^3$Department of Mathematics, University of Cergy--Pontoise, CNRS UMR 8088\\
2 avenue Adolphe Chauvin, 95302 Cergy--Pontoise, France
\\ \\
$^4$Centre de Recherches Math\'ematiques, CNRS UMI 3457\\
Universit\'e de Montr\'eal, Montr\'eal,  QC, H3C 3J7, Canada}

\maketitle

{\small\textbf{Abstract.} 
We study the Fluctuation Theorem (FT) for entropy production in chaotic
discrete-time  dynamical systems on compact metric spaces, and extend it to
empirical measures, all continuous potentials, and all weak Gibbs states. In
particular, we establish the FT in the phase transition regime. These results
hold under minimal chaoticity assumptions (expansiveness and specification) and
require no ergodicity conditions. They are also valid  for systems that are not
necessarily invertible and involutions other than time reversal. Further
extensions involve asymptotically additive potential sequences and the
corresponding weak Gibbs measures. The generality of these results allows to
view the FT as a structural facet of the thermodynamic formalism of dynamical
systems.

\bigskip

\noindent\textbf{AMS subject classifications:} 37A30, 37A50, 37A60, 37B10, 37D35,
47A35, 54H20, 60F10, 82C05.

\smallskip\noindent\textbf{Keywords:}  chaotic dynamical systems, entropy
production, fluctuation theorem, fluctuation relation, large deviations, 
periodic orbits, Gibbs measures, non-equilibrium statistical
mechanics.}

\tableofcontents
\setcounter{section}{-1}


\bigskip

{\bf Note added in Feb.~2026}: most of the content of the present preprint has been published, in extended form, in the monograph \cite{CJPS_phys}. In the meantime, it was proved in \cite{cuneo_asympt_2019} that asymptotically additive potentials are equivalent to additive ones. This resolves the open question raised in Remark~\ref{rem:openQaa} at the end of the present preprint and renders part of the proofs below superfluous, although technically correct.

\section{Introduction}
\setlength{\parskip}{4pt}

This work concerns the mathematical theory of the so-called Fluctuation Relation
(FR) and Fluctuation Theorem (FT) in the setting of discrete-time continuous
dynamical systems on compact  metric spaces. The FR is a universal property of
the statistics of entropy production linked to time-reversal and the FT refers
to a related Large  Deviation Principle (LDP).

The discovery of the FR goes back to numerical experiments on the probability of
violation of the 2$^\mathrm{nd}$ Law of Thermodynamics~\cite{ECM-1993} and
associated theoretical works~\cite{ES-1994,GC-1995,GC-1995a,gallavotti-1995} in
the early 90's. In particular, the first formulation and mathematical proof of
the FT were given in~\cite{GC-1995} in the context of Anosov diffeomorphisms of
compact Riemannian manifolds. Further steps in the mathematical development of
the subject were taken in~\cite{kurchan-1998,LS-1999,maes-1999,ruelle-1999}.
These discoveries generated an enormous body of theoretical, numerical and
experimental works which have fundamentally improved our understanding of
non-equilibrium physics, with applications extending to chemistry and biology.
For a  review of  these historical developments we refer the reader
to~\cite{ES-2002,JQQ2004,gaspard-2005,RM-2007,JPRB-2011} and to the forthcoming 
review articles~\cite{CJPS_phys,JNPPS}; see also Example~\ref{e3} below.  The
general mathematical structure and interpretation of the FR and FT from a modern
point of view is briefly discussed in Section~\ref{s1};
see~\cite{CJPS_phys,JNPPS} for additional information.

We shall consider dynamical systems $(M,\varphi)$, where~$M$ is a compact metric
space and $\varphi:M\to M$ is a continuous map. This is precisely the setting in
which the FR and FT were initially discovered. We shall assume that~$(M,
\varphi)$ is chaotic in the  sense that~$\varphi$ is expansive and satisfies
Bowen's specification property.\footnote{See~\cite[Section~18.3.c]{KH1995} and
Remark~\ref{rBSP} below. This assumption is made only for simplicity of
exposition---all our results hold under a weaker assumption, see
Section~\ref{s2.1} for a precise statement.} In this context, the
thermodynamic formalism singles out two sets of measures associated with
$(M,\varphi)$: equilibrium states and Gibbs states. Accordingly, we shall prove
two conceptually independent sets of results. The first one deals with the
entropy production of equilibrium states defined through their approach via
periodic orbits (see Proposition~\ref{p2.8}).
We shall refer to this set of results as the \textit{Periodic
Orbits Fluctuation Principle\/} (POFP). The second one concerns
extensions of the  classical Gibbs type FR and FT to weak Gibbs measures. We
shall refer to this second set of results as the \textit{Gibbs Fluctuation
Principle\/} (GFP). Together, the POFP and GFP constitute  a technical and
conceptual extension of the previously known results on FR and FT. For example,
the POFP  holds for any continuous potential, and, more generally, for any
asymptotically additive (not necessarily continuous) potential sequence. In the
case of two-sided  subshifts of finite type, the GFP holds for all Gibbs states
(translation invariant or not) of  any summable interaction~$\varPhi$. The first
result is new while the second (and only in part) was known to hold for
interactions~$\varPhi$ satisfying Bowen's regularity assumption  and, in
particular, admitting a unique Gibbs state; see Example~\ref{e1} below.

In the usual sense, the FT and FR are related to time reversal and require the
map $\varphi$ to be invertible. In this paper, we shall consider also
involutions (other than time reversal) that do not require the invertibility of
$\varphi$, and show that the FT and FR naturally extend to that case (see
Section~\ref{s2.3} for precise definitions).

We now describe some of our typical results. For simplicity, we consider only
the invertible case in the remaining part of this introduction. We assume that
$\varphi$ is a homeomorphism, and introduce the following notion of 
\textit{reversal map\/}: there is a continuous map $\theta:M\to M$ such
that
\begin{equation}\label{1.2}
\theta\circ\theta=\Id_M, \quad
\varphi^{-1}=\theta\circ\varphi\circ\theta,
\end{equation} 
where $\Id_M$ stands for the identity map on~$M$. Although in the main text
of the paper our results are stated and proven in the general setting of the
asymptotically additive thermodynamic formalism,  we  shall start with the
familiar additive setting before turning to that level of
generality.\footnote{We shall freely use the standard notions of the usual
thermodynamic formalism~\cite{ruelle2004,walters1982}. For the asymptotically
additive extensions see~\cite{Bar2011} and  Section~\ref{s1b}.} We  fix an
arbitrary continuous function\footnote{Following the usual terminology, we shall
often refer to $G$ as a \textit{potential\/}. The adjective \textit{additive\/} 
refers to the property $S_{n+m}G= S_mG + S_nG\circ \varphi^m$ of the sequence 
$\{S_nG\}$.}
$G:M\to\R$,  and
set
\[
S_nG=G+G\circ\varphi+\cdots+G\circ\varphi^{n-1}.
\] 

We start with the POFP. Denote by $M_n$ the set of $n$-periodic points of
$\varphi$. Under our assumptions $M_n$ is non-empty, finite, invariant under
$\theta$, and $\bigcup_n M_n$ is dense in $M$. We define a family of
probability measures on~$M$ by
\begin{equation}\label{2.8add}
\IP_n(\dd y)=Z_n^{-1}\sum_{x\in M_n}\eee^{S_nG(x)}\delta_x(\dd y), \quad
Z_n=Z_n(G)=\sum_{x\in M_n}\eee^{S_nG(x)},
\end{equation}
where $n\ge1$ and $\delta_x$ denotes the Dirac mass at $x$. Let 
\begin{equation}\label{1.5}
\widehat\IP_n(\dd y)=(\IP_n\circ\theta)(\dd y)
=Z_n^{-1}\sum_{x\in M_n}\eee^{S_nG\circ\theta(x)}\delta_x(\dd y).
\end{equation}
The measures $\widehat\IP_n$ and~$\IP_n$ are absolutely continuous with respect to each other, and the logarithm of the corresponding density is given by
\begin{equation*}
\log\frac{\dd\IP_n}{\dd\widehat\IP_n}(x)=S_n\sigma(x)\quad\text{for }x\in M_n,
\end{equation*}
where we write 
\begin{equation}\label{eq:defsigmaGG}
	\sigma=G-G\circ\theta
\end{equation}
for the \textit{entropy production observable\/}. Any weak limit point $\IP$ of the
sequence $\P_n$ is an equilibrium measure for $G$ (see Proposition~\ref{p2.8}).
Note that if $\P_{n_k}\rightharpoonup\P$, then
$\wP_{n_k}\rightharpoonup\wP=\P\circ \theta$. The mathematical statement of the
POFP  is the Large Deviation Principle (LDP) for the empirical measures and
ergodic averages of~$\sigma$ with respect to~$\P_n$. Its interpretation, on
which we shall elaborate in Section~\ref{s1}, quantifies the separation 
between~$\wP_{n_k}$ and~$\P_{n_k}$, as these sequences of measures approach
their  limits~$\wP$ and~$\P$.

Let $\PP(M)$ be the set of all probability measures on~$M$ endowed with the
topology of weak convergence. The following theorem summarizes the POFP.

\medskip
\hypertarget{thm.A}{\textbf{Theorem~A.}}
{\itshape For any continuous function $G:M\to\R$, the following assertions hold.
\begin{description}
\item[Large deviations.] 
There is a lower semicontinuous  function $\I:\PP(M)\to[0,+\infty]$  such that the sequence of empirical measures 
\beq
\mu_n^x=\frac1n\sum_{k=0}^{n-1}\delta_{\varphi^k(x)}
\label{oc-me}
\eeq
under the law~$\IP_n$  satisfies the LDP with the rate function~$\I$.

\item[Fluctuation theorem.] 
The sequence  $\frac 1nS_n \sigma$ under the law~$\IP_n$ satisfies the LDP with a rate function~$I$ given by the contraction of\/~$\I:$
\begin{equation} \label{1.7}
I(s)=\inf\biggl\{\I(\Q):\Q\in\PP(M),\int_M\sigma\,\dd\Q=s\biggr\}. 
\end{equation}
\item[Fluctuation relations.] 
The rate functions\/~$\I$ and~$I$ satisfy the relations
\begin{gather}
\I(\widehat \Q)=\I(\Q)+\int_M\sigma\,\dd\Q,\label{1.8}\\[4pt]
I(-s)=I(s)+s,\label{1.8b}
\end{gather}
where $\Q\in\PP(M)$, $s\in\R$ are arbitrary, and $\widehat\Q=\Q\circ\theta$.
\end{description}
}
\begin{remark}
The importance of periodic orbits for the study of chaotic dynamics in the modern
theory of dynamical systems goes back to seminal works of
Bowen~\cite{bowen-1970} and Manning~\cite{manning-1971}. In the context of the
FT and FR, periodic orbits played an important role in the early numerical
works~\cite{ECM-1993}. Ruelle's proof of the Gallavotti--Cohen fluctuation
theorem for Anosov diffeomorphisms~\cite{ruelle-1999} was technically centered
around periodic orbits. Further insights were obtained in~\cite{MV-2003} where,
following the general scheme of~\cite{LS-1999,maes-1999}, the pairs $(\P_n,
\wP_n)$ and the entropy production observable~$\sigma$ were introduced, and the
transient fluctuation relation was discussed. The work~\cite{MV-2003} primarily
concerned Gibbs type FT for Bowen-regular potentials~$G$,  and we shall comment
further on it in Example~\ref{e3} below.
\end{remark}

\begin{remark}
In the early physics literature on the subject, the fluctuation relation was 
usually stated as the universal large-$n$ asymptotics
\[
\frac{\mathrm{Prob}\{S_n\sigma=-sn\}}{\mathrm{Prob}\{S_n\sigma=sn\}}\simeq\eee^{-sn},
\]
where the quotient on the left hand side should be interpreted as a
Radon-Nikodym derivative. It was first emphasized by Gallavotti and
Cohen~\cite[Section~7]{GC-1995} that a proper mathematical statement of this
fact was the FT, \ie the LDP satisfied by $n^{-1}S_n\sigma$,
together with relation~\eqref{1.8b} for the associated rate function.
In some sense, relation~\eqref{1.8} satisfied by the rate function $\I$ governing
the large deviations of the measures \eqref{oc-me} is
more general. In fact, it implies~\eqref{1.8b} for the systems
considered in this paper (see the last paragraph in the proof of 
Theorem~\ref{t1.14}). The FR \eqref{1.8} and its formal connection with~\eqref{1.8b}
was already noticed by Bodineau and Lefevere~\cite[Section~3.4]{BoLe-2008}
and by Barato and Chetrite in~\cite[Section~5]{BC-2015} in 
somewhat different contexts.
 We refer the reader to~\cite{CJPS_phys} for further
discussion of this point.
\end{remark}

We now turn to the GFP. We shall assume that~$\IP$ is a  weak Gibbs measure for
some potential~$G\in C(M)$\footnote{$C(M)/B(M)$ denotes the usual Banach space
of real-valued continuous/bounded Borel functions on $M$.}  (see
Definition~\ref{d1.16} with $G_n=S_nG$).

\medskip
\hypertarget{thm.B}{\textbf{Theorem~B.}}
{\itshape Let~$\IP$ be a weak  Gibbs measure for a potential $G\in C(M)$. Then the
three assertions of Theorem~\textbf{A} remain valid if we replace~$\IP_n$
with~$\IP$. }

\begin{remark}\label{r2}
On a technical level the key point of Theorems~\tA and~\tB is the 
LDP for the empirical measures~\eqref{oc-me}, while the remaining properties are
easy consequences of it.
The respective rate functions in Theorems~\tA
and~\tB coincide.\footnote{This fact is related to the  \textit{principle of
regular entropic fluctuations\/} introduced in \cite{JPRB-2011};  see
Section~\ref{s1.5}.} The FR for the rate function~$\I$ can be derived from an
explicit formula (see \eqref{4.012first}), while  the FR for~$I$ is implied by the contraction
relation~\eqref{1.7}. For a more conceptual derivation of the FR for~$I$ see
Section~\ref{s1.1}.
\end{remark}

\begin{remark}\label{r3}
In the additive setup discussed here, the LDP for the empirical measures~\eqref{oc-me} in Theorems~\tA 
and~\tB is known \cite{comman-2009,PS-2018} (see Remarks~\ref{rem:additivecaseknown} and \ref{rem:commentt44} below).
The proof that we provide applies to asymptotically additive potentials, 
and for these the result seems to be new. We will actually state and
prove a theorem in Section~\ref{s:abstractLevel3} which implies the LDP for the empirical measures
of both Theorem~\tA and Theorem~\tB.
The proof of the LDP involves, as
usual, two steps: the LD upper bound, which is a simple consequence of the
existence of the topological pressure (see Propositions~\ref{Thm:USCh}
and~\ref{p1.10}), and the LD lower bound, which is more involved. A prototype of
our argument appeared in the proofs of Theorem~3.1 in
F\"ollmer--Orey~\cite{FO-1988} and Theorem~2.1 in Orey--Pelikan~\cite{OP-1988},
in which the Shannon--McMillan--Breiman (SMB) theorem is used to derive the LD
lower bound for Gibbs states of~$\zz^d$ spin systems (see also \cite{comets_86,Olla_88}).  By using Markov
partitions, the same result was established for transitive Anosov
diffeomorphisms on compact manifolds~\cite{OP-1989}.\footnote{It is interesting
to note that  this result and the contraction principle immediately yield the
Gallavotti--Cohen FT.}  In our context, the SMB theorem is naturally replaced by
its dynamical systems counterpart, the Brin--Katok local entropy
formula~\cite{BK-1983}. The rest of our argument is related  to the
papers~\cite{young90,EKW-1994,PS-2005}. Although there
the Brin--Katok theorem is not used directly, the key estimates entering
Proposition~4.2 in~\cite{EKW-1994} and Proposition~3.1 in~\cite{PS-2005} are
also important ingredients in the proof of the Brin--Katok formula and can be
traced back to another work of Katok~\cite[Theorem~1.1]{katok-1980}.
\end{remark}

\begin{remark}\label{r4} 
As Remarks~\ref{r2} and~\ref{r3} indicate, on the technical level
Theorems~\tA and~\tB are closely related. We have separated them
for historical reasons, for reasons of interpretation, and due to the role
played by the specification property in the proofs. Regarding the first two points, see
Example~\ref{e3} below and Section~\ref{s1}. Regarding the third one, in
Theorem~\tB, the specification is only needed to allow the use of
Proposition~\ref{Prop:EntropyDensity}, whereas in Theorem~\tA a weak form
of specification is crucial also in the proof of the lower bound of the LDP.
There are alternative assumptions under which the conclusions of
Proposition~\ref{Prop:EntropyDensity} can be established. For example, Pfister
and Sullivan~\cite{PS-2005} prove it for dynamical systems with the so-called
$g$-product property and apply it to $\beta$-shifts. Hence, Theorem~\tB
holds in that setting. See also~\cite{comman-2017} for other criteria
ensuring the validity of the conclusion of
Proposition~\ref{Prop:EntropyDensity}.
\end{remark}

Before turning to the asymptotically  additive setting, we briefly discuss
several prototypical additive examples; see also Example~\ref{interval} in
Section~\ref{s2.3}. For details and  additional examples, we refer the reader
to the accompanying review article \cite{CJPS_phys}.

\begin{example}[Two-sided subshift of finite type]\label{e1}
Let\,\footnote{Here and in the sequel $\llbracket 1,\ell\rrbracket=[1, \ell]\cap
\zz$.} $\aA=\llbracket 1,\ell\rrbracket$ be a finite alphabet with the 
discrete metric and let $\Omega=\aA^\Z$ be the product space of two-sided 
sequences endowed with the usual metric
\begin{equation*}
\Omega\times\Omega\ni(x,y)\mapsto
d(x, y)=2^{-\min\{j\in\Z_+\,:\,x_j\ne y_j\text{ or }
x_{-j}\ne y_{-j}\}}.
\end{equation*}
The left shift $\varphi:\Omega\to\Omega$ defined by $\varphi(x)_j=x_{j+1}$
is obviously an expansive homeomorphism. We assume that $(M, \varphi)$ is  a
subshift of finite type: given an $\ell\times\ell$ matrix $A=[A_{ij}]$ with
entries $A_{ij}\in\{0,1\}$ and such that, for some $m\geq 1$, all entries of the 
matrix $A^m$ are strictly positive, one sets
\[
M=\{x=(x_j)_{j\in\Z}\in\Omega: A_{x_jx_{j+1}}=1\text{ for all }j\in\Z\}. 
\]
In this case, $(M,\varphi)$ is topologically mixing and satisfies Bowen's
specification property. Let $p$ be an involutive permutation of $\cA$ and set
$\theta(x)_j=p(x_{-j})$. Then~$\theta$ is a homeomorphism of $\Omega$
satisfying~\eqref{1.2}. Thus, if in addition~$\theta$ preserves $M$, then it is
a reversal of~$(M,\varphi)$. For a subshift of finite type, that is the case
whenever the permutation matrix~$P$ associated to the map $p$ satisfies $PAP=A^T$. 

Theorem~\tA applies to any $G\in C(M)$, and hence in situations where~$G$
exhibits phase transitions and  the set of equilibrium states for $G$ is not a
singleton. Theorem~\tA also covers the  cases where $G$ exhibits
pathological behavior from the phase transition point of view; 
see~\cite[Section 3.17]{ruelle2004} and~\cite[Section V.2]{israel2015convexity}.
For example, if ${\cal P}=\{\P_1, \cdots, \P_n\}$ is any finite collection of
ergodic measures of the dynamical system~$(M, \varphi)$, then there exists a
potential~$G$  whose set of ergodic equilibrium states is precisely~${\cal P}$.
There is a dense set of~$G$'s in $C(M)$ with uncountably many ergodic
equilibrium states. Although such general potentials could  be considered 
non-physical, the POFP remains valid.

Regarding Theorem~\tB, consider a spin chain on $M$ defined by a  summable
translation-invariant interaction $\varPhi$. We shall follow the notation  of
the classical monograph~\cite{ruelle2004}, and assume that $\varPhi$ belongs to
the Banach space ${\cal B}$ of summable interactions introduced in Section 4.1
therein. We denote by $K_\varPhi\subset  {\cal P}(M)$ the set of all Gibbs
states for $\varPhi$. Then~$K_\varPhi$ is a closed convex set and some elements
of~$K_\varPhi$ may not be $\varphi$-invariant. The set of $\varphi$-invariant
elements of~$K_\varPhi$ is precisely the set of equilibrium  states for the
potential~$A_\varPhi$ (the contribution of one lattice site to the energy of a
configuration) defined in~\cite[Section~3.2]{ruelle2004}. If $A_\varPhi$
satisfies Bowen's regularity condition (see~\cite{bowen-1974} and
\cite[Definition 20.2.5]{KH1995}), then~$K_\varPhi$ is a singleton, but in
general~$K_\varPhi$ may have many distinct elements.  However,
any $\P\in K_\varPhi$ is a weak Gibbs measure for the
potential~$A_\varPhi$ (see \cite[Lemma 3.2]{EKW-1994}, \cite{PS-2019} and  \cite{CJPS_phys} for
details), and Theorem~\tB applies. These results extend to  $\Omega={\cal
A}^{\zz^d}$ for any $d\geq 1$.
\end{example}

\begin{example}[Uniformly hyperbolic systems]\label{e2}
Let $\Omega$ be a compact connected Riemannian manifold and
$\varphi:\Omega\to\Omega$ a $C^1$-diffeomorphism. Let $M\subset \Omega$ be a
locally maximal invariant hyperbolic set such that $\varphi|_M$ is transitive. 
Then  the map $\varphi$ is an expansive homeomorphism of $M$ satisfying Bowen's 
specification property. Hence Theorems~\tA and~\tB hold for $(M,
\varphi)$; see~\cite{bowen1975,PP-1990}.
\end{example}

\begin{example}[Anosov diffeomorphisms]\label{e3}
Continuing with the previous example, if $M=\Omega$, then $(\Omega,\varphi)$ is
a transitive Anosov system. This is the original setting in which the first FR
and FT were proven. We denote by $D(x)=|\det\varphi'(x)|$ the Jacobian
of~$\varphi$ at~$x$ and set
\[
D^{s/u}(x)=|\det(\varphi'(x)|_{E_x^{s/u}})|,
\]
where~$E_x^{s/u}$ denotes the stable/unstable tangent subspace at $x\in M$. 
The $C^1$-regularity of~$\varphi$ implies that the maps
\begin{equation} \label{vir}
x\mapsto D(x),\qquad x\mapsto D^{s/u}(x),
\end{equation}
are continuous. The potential   
\begin{equation} \label{potential}
G(x)=-\log D^{u}(x)
\end{equation}
is of particular importance~\cite{ER-1985,EP-1986}, and in the context of the FR
its relevance goes back to the pioneering work~\cite{ECM-1993}. As a special
case of Example~\ref{e2}, Theorem~\tA holds for this $G$ and any continuous
reversal $\theta$. Theorem~\tB holds for any weak Gibbs measure for $G$.

If $\varphi$ is $C^{1+\alpha}$ for some $\alpha>0$, then the maps~\eqref{vir}
are H\"older continuous and the potential~$G$ has a unique equilibrium state,
the SRB probability measure~$\P_\mathrm{srb}$. In this case, denoting by 
$\P_\mathrm{vol}$ the normalized Riemannian volume measure on $M$,    the empirical
measures~\eqref{oc-me} converge weakly to~$\P_\mathrm{srb}$ for
$\P_\mathrm{vol}$-a.e.~$x\in M$. The measure~$\P_\mathrm{srb}$
enjoys very strong ergodic properties and,
in particular, is weak Gibbs for $G$, so that Theorem~\tB applies to $\P_\mathrm{srb}$. In this case, the LDP part of Theorem~\tB goes back
to~\cite{OP-1989}. Since $\P_\mathrm{vol}$ is also a weak Gibbs measure for
$G$,\footnote{This follows from the Volume Lemma;  see~\cite[Lemma
4.7]{bowen1975} and~\cite[Lemma 20.4.2]{KH1995}.} Theorem~\tB applies to
$\P_\mathrm{vol}$ as well.
\end{example}

\begin{example}[Anosov diffeomorphisms: historical perspective]\label{exa:anosovhisto}
The original formulation of the  Gal\-lavotti--Cohen FT~\cite{GC-1995,GC-1995a}
concerns~$C^{1+\alpha}$ transitive Anosov diffeomorphisms with the additional
assumptions that the reversal map~$\theta$ is~$C^1$ and that the Riemannian metric is $\theta$-invariant. The entropy production
observable is taken to be  the phase space contraction rate
\begin{equation}
\widetilde{\sigma}(x)=-\log D(x),
\end{equation}
and the LDP concerns the time averages $n^{-1}S_n\widetilde{\sigma}$. Since 
$\theta$ is $C^1$, the tangent map $\theta'(x)$ provides an isomorphism 
between~$E_x^{s/u}$ and~$E_{\theta(x)}^{u/s}$, and
\begin{equation*}
\log D^u \circ \theta =-\log D^s \circ \varphi^{-1}.
\end{equation*}
As observed in \cite{MV-2003},  this relation gives that for some $C>0$, all $x\in M$ 
and all $n$, 
\[
|S_n\widetilde{\sigma}(x)-S_n\sigma(x)| <C,
\]
where $\sigma=G-G\circ \theta$ with $G$ given by \eqref{potential}.
Hence, under the assumptions of~\cite{GC-1995,GC-1995a}, the Gallavotti--Cohen 
FT and the FT of Theorem~\tB are identical statements.

The assumption that $\theta$ is~$C^1$ is essential for the Gallavotti--Cohen FT.
Porta~\cite{porta-2010} has exhibited examples of $C^\infty$ Anosov
diffeomorphisms on the torus~${\mathbb T}^2$ which admit continuous but not
differentiable reversals, and for which the Gallavotti--Cohen FT \textit{fails\/} 
in the sense that the LDP rate function for the averages
$n^{-1}S_n\widetilde{\sigma}$ does not satisfy the second relation
in~\eqref{1.8}. For his examples Porta also identifies the entropy production
observable $\sigma= G- G\circ \theta$, noticing that the LDP holds for it with a
rate function satisfying the FR~\eqref{1.8}.

Porta's observation was a rediscovery of an important insight of Maes and
Verbitskiy. Returning to our  general setting $(M, \varphi)$, in~\cite{MV-2003}
the entropy production observable $\sigma= G-G\circ \theta$ is introduced for an
arbitrary potential $G$, and the Gibbs FR and FT  were established for the
averages $n^{-1}S_n\sigma$ assuming that~$G$ satisfies the Bowen regularity
condition. In this case~$\P$ is again the unique equilibrium measure for~$G$ and
enjoys very strong ergodic properties.  The proofs of~\cite{MV-2003} are further
simplified in~\cite{JPRB-2011}; see Section~\ref{s1.4} below.
\end{example}

We now turn to the asymptotically additive setting.

\begin{definition}
A sequence of functions~$\GG=\{G_n\}_{n\geq 1}\subset B(M)$
is  called \textup{asymptotically additive}  if there is a sequence  
$\{G^{(k)}\}_{k\geq 1}\subset  C(M)$ such that
\begin{equation}\label{eq:defasymadd}
\lim_{k\to\infty}\limsup_{n\to\infty}\,n^{-1} 
\bigl\|G_n-S_n G^{(k)}\bigr\|_\infty=0. 	
\end{equation}
The set of all asymptotically additive sequences of functions on~$M$ is denoted
by~${\cA}(M)$, and a family $\{G^{(k)}\}_{k\geq 1} \subset C(M)$ 
satisfying~\eqref{eq:defasymadd} is called an\/ \textup{approximating sequence}
\footnote{Note that $G_n$ is not required to be continuous, but $G^{(k)}$ is. The 
notion of asymptotically additive potential was first introduced 
in~\cite{FH-2010}, and there $G_n$ is required to be continuous (see 
Section~\ref{sec:charac} for a detailed discussion of this point).} for~$\GG$.
\end{definition} 

Except in Section~\ref{ss:LDPlev1}, the elements of $\cA(M)$ will play the role
of \textit{potentials}, and hence we shall often refer to them as 
\textit{asymptotically additive potential sequences\/}.

\begin{remark}
An obvious example of an asymptotically additive   potential sequence  is $\GG =
\{S_n G\}$, where $G\in C(M)$. We shall refer to this special case as
\textit{additive}. Some other conditions, which either imply asymptotic  
additivity or
are equivalent to it, are given in Theorem~\ref{prop:characterizationAA}. There,
we prove in particular that if for some $G\in B(M)$ the sequence $\GG = \{S_n
G\}$ has \textit{tempered variation\/} (a condition which is weaker than the 
continuity of $G$), then $\GG$ is asymptotically additive. The tempered
variation condition, which to the best of our knowledge goes back
to~\cite{kessebohmer-2001} (see also~\cite{barreira-2006}), holds in particular
if $G$ satisfies the \textit{bounded variation\/} condition of~\cite{ruelle-1992}.
Another class of examples is given by \textit{weakly almost additive\/}
potentials,\footnote{In the literature, the special case where $C_n=C$ is often
called \textit{almost additive}, and we shall use this convention in the sequel;
see~\cite{Bar2011}.} which are characterized by the following property: there is
a sequence $\{C_n\}_{n\geq 1}\subset\R$ such that $\lim_{n\rightarrow
\infty}n^{-1} C_n=0$ and
\begin{equation}\label{eq:weaklyalmostdec}
-C_m + G_m +G_n\circ\varphi^m \leq G_{m+n}\leq C_m + G_m +G_n\circ\varphi^m, 
\quad m, n\geq 1.
\end{equation}
If a family $\GG\subset C(M)$ is weakly almost additive, then it is
asymptotically additive with $G^{(k)} = k^{-1}G_k$; see Lemma~\ref{p4.30}.
\end{remark}

\begin{remark}\label{rem:equivrel} 
Note that $\cA(M)$ is a vector space on which the seminorm defined by
\begin{equation}\label{eq:nm1Gnfini}
\|\GG\|_*=\limsup_{n\to\infty} n^{-1} \|G_n\|_\infty
\end{equation}
induces the natural equivalence relation: $\GG \sim \GG'$ iff $\|\GG - \GG' \|_*
= 0$ (finiteness of~\eqref{eq:nm1Gnfini} follows immediately
from~\eqref{eq:defasymadd}). As mentioned in~\cite[Remark A.6 (ii)]{FH-2010}
(see also the beginning of Section~3.2 in~\cite{bomfim_multifractal_2015}),
equivalent potential sequences share many important properties and, in
particular, have the same approximating sequences. Furthermore, if $V, V' \in
C(M)$ are such that $V'-V=U-U\circ \varphi$  for some $U\in C(M)$, then $\{S_n
V\} \sim \{S_n V'\}$, so that this concept of equivalence generalizes the
standard notion of equivalence for potentials. Moreover, by the definition of
asymptotic additivity, for all $\GG\in \cA(M)$ we have $\lim_{k\to\infty}\|\GG -
\{S_n G^{(k)}\}\|_* = 0$, so that the additive potential sequences  are dense in
the quotient space $\cA(M)/{\sim}$. Finally, we note that each equivalence
class admits a representative $\GG\subset C(M)$ (see
Remark~\ref{rem:remcharac}).
\end{remark}

\begin{remark}
To the best of our knowledge, the first extension of the classical thermodynamic
formalism of Ruelle and Walters~\cite{ruelle2004,walters1982} beyond the
additive setting goes back to the work of
Falconer~\cite{falconer_subadditive88}. This and later  extensions were
principally motivated by the multifractal analysis of certain classes of
self-similar sets, and in this context the subject has developed rapidly; see
for example~\cite{cao_thermodynamic_2008,FH-2010,zhao_asymptotically_2011,Bar2011,
varandas_weak_2015,iommi_weak_2016} and references therein.

It is likely that the subject will continue to flourish with an expanding number
of applications that cannot be reached within the classical theory; see
Example~\ref{em} below and recent works~\cite{BJPP-2017,BCJPP-2017} for
applications to the theory of repeated quantum measurement processes.
\end{remark}

Theorems~\tA and~\tB extend to asymptotically additive potential
sequences with the following notational changes. Given $\GG=\{G_n\}\in\aA(M)$,
one defines a sequence of probability measures on~$M$ by the relations
(compare with~\eqref{2.8add})
\begin{equation}\label{add}
\IP_n(\dd y)=Z_n^{-1}\sum_{x\in M_n}\eee^{G_n(x)}\delta_x(\dd y), \quad
Z_n=Z_n(\GG)=\sum_{x\in M_n}\eee^{G_n(x)}.
\end{equation}
The time-reversal operation is now defined as $\theta_n=\theta\circ\varphi^{n-1}$. Let us set 
\begin{equation*}
\widehat\IP_n(\dd y)=(\IP_n\circ\theta_n)(\dd y)
=Z_n^{-1}\sum_{x\in M_n}\eee^{G_n\circ\theta\circ \varphi^{n-1}(x)}\delta_x(\dd y)
\end{equation*}
and remark that this relation coincides with~\eqref{1.5} in the case of additive potentials.   We also note that 
\begin{equation*}
\log\frac{\dd\IP_n}{\dd\widehat\IP_n}(x)=
\sigma_n(x) \quad\mbox{for $x\in M_n$,}
\end{equation*}
where we write 
\beq
\sigma_n=G_n-G_n\circ\theta_n
\label{Eq:SigmaDef}
\eeq
for the \textit{entropy production in time~$n$\/}. Accordingly, the ergodic
averages $n^{-1}S_n\sigma$ are now replaced by $n^{-1}\sigma_n$. With the above 
notational changes Theorems~\tA and~\tB hold for any  $\GG\in{\cal
A}(M)$. Starting with Section~\ref{s1b} we shall work exclusively in the
asymptotically additive setting.

\begin{example}[Boundary terms in spin chains]
Consider the spin chains discussed in Example~\ref{e1} in the case where $M$ is
the full two-sided shift (\ie when $A_{ij} = 1$ for all $i,j\in\cA$). 
Since the set $\cA^n$ of configurations of a chain of length $n$ is in one-to-one
correspondence with the set $M_n$ of orbits of period $n$ of $(M,\varphi)$,
the measures $\P_n$ of Theorem~\tA can be interpreted as Gibbs measures on 
finite-size systems. The case $G_n = S_n A_\varPhi$
then corresponds to periodic boundary conditions, while other boundary
conditions lead to asymptotically additive sequences of the kind $G_n = S_n
A_\varPhi + g_n$, where the boundary term $g_n$ satisfies
$\lim_{n\to\infty}n^{-1}\|g_n\|_\infty = 0$.
\end{example}

\begin{example}[Matrix product potentials]\label{em}
Perhaps the best known examples of asymptotically additive potential sequences
arise through matrix products. Denote by ${\mathbb M}_N(\cc)$ the algebra of
all complex $N\times N$ matrices. Let ${\cal M}:M\to{\mathbb M}_N(\cc)$
be a continuous map such that
\[
{\cal M}(x){\cal M}(\varphi(x))\cdots {\cal M}(\varphi^{n-1}(x))\not=0
\]
for all $n\geq 1$ and $x\in M$. The potential sequences of the form
\beq
G_n(x)=\log \|{\cal M}(x){\cal M}(\varphi(x))\cdots
{\cal M}(\varphi^{n-1}(x))\|
\label{rai}
\eeq 
arise in multifractal analysis of self-similar sets, see for
example~\cite{FO-2003,feng-2003,Feng2009,Bar2011}. Sequences of this type also
describe the statistics of some important classes of repeated quantum
measurement processes~\cite{BJPP-2017,BCJPP-2017}. Except in very special cases, the
sequence $\GG=\{G_n\}$ is not additive. Note that the upper almost additivity
\begin{equation}\label{eq:quasibernoulli1}
G_{n+m}\leq C + G_m + G_n\circ \varphi^m
\end{equation} 
always holds with  a constant $C$ depending only on the choice of the matrix
norm on ${\mathbb M}_N(\cc)$. If the entries of ${\cal M}(x)$ are strictly
positive for all $x\in M$, or if $N=2$ and ${\cal M}$ satisfies the cone
condition of~\cite{BarGel06}\footnote{This type of condition can be traced back
to~\cite{Ruelle1979}. See also~\cite[Definition 11.2.1]{Bar2011}} (in the
context of nonconformal repellers), then one can show that
\begin{equation}\label{eq:quasibernoulli2}
-C + G_m + G_n\circ \varphi^m \leq G_{n+m}
\end{equation}  
for some $C>0$, so that $\GG$ is almost additive. In many interesting examples,
however, \eqref{eq:quasibernoulli2} fails, but $\GG$ remains asymptotically
additive and hence our results apply. When the potential defined in~\eqref{rai}
is not asymptotically additive, it can exhibit a very singular behavior from
the thermodynamic formalism point of view.\footnote{In some cases, this
singularity depends on the number-theoretic properties of the entries of
${\cal M}(x)$; see~\cite{BCJPP-2017} for a discussion.} For reasons of space we
postpone the discussion of the last point  to the forthcoming
articles~\cite{CJPS-2017,BCJPP-2017}.
\end{example} 

\begin{example}
Let $(M,\varphi)$ and $\theta$ be as in Example~\ref{e1}, and let $\P$ be any
fully-supported $\varphi$-invariant measure on $M$. For all $n\geq 1$, define 
$G_n\in C(M)$ by 
\[
G_n(x)=\log\P\{y\in M: y_j = x_j\text{ for }j=1,\ldots,n\}.
\] 
Then, if $\GG=\{G_n\}$ is asymptotically additive, the measure
$\P$ is weak Gibbs with respect to $\GG$, and hence Theorem~\tB applies. In
this setup, \eqref{eq:weaklyalmostdec} is interpreted as ``weak dependence.'' In
particular, all invariant quasi-Bernoulli measures (\ie satisfying
\eqref{eq:quasibernoulli1} and \eqref{eq:quasibernoulli2}) on $M$ are weak Gibbs
with respect to an asymptotically additive potential.
\end{example}

We finish with the following general remarks. 

\begin{remark}
To summarize, the contribution of our paper is two-fold. Firstly, to the best of
our knowledge, the POFP has not appeared previously in the literature and
provides a rather general formulation and proof of the FT and FR in the context of
chaotic dynamical systems on compact metric spaces.  Furthermore, the GFP
extends the FT and FR of~\cite{MV-2003} to weak Gibbs measures (which do not
even need to be invariant).  In particular, this extends the validity of the FT
and FR to the phase transition regime, as illustrated in Example~\ref{e1}. Both
results hold for any asymptotically additive potential sequence. Secondly, the
FR for the rate function of the empirical measures (the first relation
in~\eqref{1.8}) is new and  we plan to investigate it further in  other models
of relevance to non-equilibrium statistical mechanics.
\end{remark}

\begin{remark}
To the best of our knowledge, the FT and FR in the phase transition regime have
not been previously discussed in the physics and mathematics literature, apart
from stochastic lattice gases; see~\cite{BDGJL-2006,BDGJL-2015}.  On the other
hand, a considerable amount of effort in the dynamical systems community over
the last two decades has been devoted to the extension of multifractal analysis
to the phase transition regime; for instance,
see~\cite{makarov-1998,FO-2003,testud-2006}.  Given the link between
multifractal analysis and  large deviations
theory~\cite{DK-1999,kessebohmer-2001}, the two research directions are related,
and this connection remains to be investigated in the future.
\end{remark}

\begin{remark}
Although the  conceptual emphasis of this paper has been on the FT and  FR for
entropy production generated by a  reversal operation, the LDP parts of
Theorems~\tA and~\tB are  of independent interest and have  wider
applicability; see~\cite{CJPS-2017}  for a more general approach than the one
adopted here,  and~\cite{BCJPP-2017} for some concrete applications in the
context of repeated quantum measurement processes.
\end{remark}

The paper is organized as follows. Section~\ref{s1} is a continuation of the
introduction where  we review the general mathematical structure and
interpretation of the FR and FT from a modern point of view. In
Section~\ref{s1b} we collect preliminaries needed for the formulations and
proofs of our results, including an overview of the asymptotically additive
thermodynamic formalism. Section~\ref{s2} is devoted to the POFP and
Section~\ref{s4} to the GFP.  Our main technical results regarding the LDP are
stated and proven in Section~\ref{s:abstractLevel3}. Finally, in
Section~\ref{sec:charac} we discuss some properties and characterizations of
asymptotically  additive potential sequences.

This work is accompanied by a review article~\cite{CJPS_phys} where the reader
can find additional information and examples regarding the FT and FR.

\bigskip
\textbf{Acknowledgments.}  We are grateful to R.~Chetrite and H.~Comman for providing us with useful references, and to L.~Bruneau for helpful comments. This research was supported by
the \textit{Agence Nationale de la Recherche\/} through the grant NONSTOPS
(ANR-17-CE40-0006-01, ANR-17-CE40-0006-02, ANR-17-CE40-0006-03), the CNRS
collaboration grant \textit{Fluctuation theorems in stochastic systems\/}, and the
\textit{Initiative d'excellence Paris-Seine\/}. NC was supported by Swiss National
Science Foundation Grant 165057. VJ acknowledges the support of NSERC. The work
of CAP has been carried out in the framework of the Labex Archim\`ede
(ANR-11-LABX-0033) and of the A*MIDEX project (ANR-11-IDEX-0001-02), funded by
the ``Investissements d'Avenir'' French Government programme managed by the
French National Research Agency (ANR). The research of AS was carried out within
the MME-DII Center of Excellence (ANR-11-LABX-0023-01).

\section{Prologue: what is the Fluctuation Theorem?}
\label{s1}

\subsection{Transient fluctuation relations}
\label{s1.1}

Our starting point is a family of probability spaces $(\Omega_n,\FF_n,\IP_n)$
indexed by a parameter~$n\in {\mathbb N}$. Each of these spaces is equipped with
a measurable involution $\Theta_n:\Omega_n\to\Omega_n$ called \textit{reversal\/}
(the map~$\Theta_n$ is its own inverse).  In many cases of interest the
probability space $(\Omega_n,\FF_n,\IP_n)$ describes the space-time statistics
of the physical system under consideration over the finite time
interval~$[0,n]$, and the map $\Theta_n$ is related to time-reversal.

Let us set $\widehat\IP_n=\IP_n\circ\Theta_n$ and impose the 
following hypothesis:
\begin{description}
\item[(R) Regularity.] 
\textit{The measures $\widehat\IP_n$ and $\IP_n$ are equivalent.} 
\end{description}
Under Assumption~\textbf{(R)}, one defines 
\begin{equation*}
\sigma_n=\log\frac{\dd\IP_n}{\dd\widehat\IP_n}. 
\end{equation*}
This is a real-valued random variable on~$\Omega_n$, and we denote
by~$P_n$ its law under~$\IP_n$. The very definition
of~$\sigma_n$ implies a number of simple, yet important properties.

\smallskip
\textbf{Relative entropy.} The relative entropy\footnote{Also called 
Kullback--Leibler divergence.} of~$\IP_n$ with respect 
to~$\widehat\IP_n$ is given by the relation
\[
\Ent(\IP_n\,|\,\widehat\IP_n) = \int_{\Omega_n}
\log\left(\frac{\dd\IP_n}{\dd\widehat\IP_n}\right)\dd\IP_n
=\int_{\Omega_n}\sigma_n\dd\IP_n
=\int_\R s\, P_n(\dd s). 
\]
Since this quantity is non-negative, we obtain 
\begin{equation} \label{4.2}
\int_\R s\,P_n(\dd s)\ge0,
\end{equation}
which asserts that under the law~$\IP_n$, positive values 
of~$\sigma_n$ are favored.

\smallskip
\textbf{R\'enyi entropy}. R\'enyi's relative $\alpha$-entropy 
of~$\widehat\IP_n$ with respect to~$\IP_n$ is defined by 
\[
\Ent_\alpha(\IP_n\,|\,\widehat\IP_n)
=\log\int_{\Omega_n}
\left(\frac{\dd\widehat{\IP}_n}{\dd\IP_n}\right)^\alpha
\dd\IP_n
=\log\int_{\Omega_n}\eee^{-\alpha\sigma_n}\dd\IP_n
=\log \int_\R\eee^{-\alpha s}P_n(\dd s)=: e_n(\alpha).
\]
The function $\R\ni\alpha\mapsto e_n(\alpha)\in]-\infty,+\infty]$
is convex and lower semicontinuous. It vanishes at $\alpha=0$ and $\alpha=1$, so 
that~$e_n(\alpha)$ is non-positive and finite on~$[0,1]$, and
non-negative outside~$[0,1]$. It admits an analytic continuation to the strip
$\{z\in{\mathbb C}:0<\Re z<1\}$ which is continuous on its closure. Expressing
the relation $e_n(1)=0$ in terms of~$P_n$, we derive 
\begin{equation*}
\int_\R\eee^{-s}P_n(\dd s)=1.
\end{equation*}
In the physics literature this relation is sometimes called \textit{Jarzynski's 
identity}.

\begin{proposition} \label{p4.1}
In the above setting, the following two relations hold:
\begin{gather}
e_n(\alpha)=e_n(1-\alpha)\quad\text{for }\alpha\in\R,
\label{4.4}\\[6pt]
\frac{\dd P_n}{\dd \widehat P_n}(s)=\eee^s
\quad\text{for }s\in\R,
\label{4.5}
\end{gather}
where $\widehat P_n$ is the image of~$P_n$ under the reflection 
$\vartheta(s)=-s$. 
\end{proposition}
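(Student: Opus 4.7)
The plan is to reduce everything to the identity $\sigma_n\circ\Theta_n=-\sigma_n$ (valid $\IP_n$-a.s.) and then compute. First I would check this basic identity. Writing $f=\tfrac{\dd\IP_n}{\dd\widehat\IP_n}=\eee^{\sigma_n}$ and using that $\Theta_n$ is an involution, the image measure relation $\widehat\IP_n=\IP_n\circ\Theta_n$ inverts to $\IP_n=\widehat\IP_n\circ\Theta_n$, so for any measurable $A$,
\[
\widehat\IP_n(A)=\IP_n(\Theta_n^{-1}A)=\int_{\Theta_n^{-1}A}f\,\dd\widehat\IP_n
=\int_A (f\circ\Theta_n)\,\dd(\widehat\IP_n\circ\Theta_n^{-1})
=\int_A(f\circ\Theta_n)\,\dd\IP_n.
\]
Hence $\tfrac{\dd\widehat\IP_n}{\dd\IP_n}=f\circ\Theta_n$. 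But by assumption \textbf{(R)} and the very definition of $f$, the same derivative equals $f^{-1}=\eee^{-\sigma_n}$. Taking logarithms gives $\sigma_n\circ\Theta_n=-\sigma_n$ (where both sides are defined, i.e.\ $\IP_n$-a.s.).

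Next I would prove \eqref{4.5}. For a bounded Borel function $h:\R\to\R$, successively use the definition of $P_n$ as the pushforward of $\IP_n$ by $\sigma_n$, the Radon--Nikodym derivative, and the change of variables along $\Theta_n$:
\[
\int_\R h(s)\,P_n(\dd s)
=\int_{\Omega_n}h(\sigma_n)\,\dd\IP_n
=\int_{\Omega_n}h(\sigma_n)\,\eee^{\sigma_n}\,\dd\widehat\IP_n
=\int_{\Omega_n}h(\sigma_n\!\circ\!\Theta_n)\,\eee^{\sigma_n\circ\Theta_n}\,\dd\IP_n,
\]
which by the identity above equals $\int_{\Omega_n}h(-\sigma_n)\eee^{-\sigma_n}\dd\IP_n=\int_\R h(-s)\eee^{-s}P_n(\dd s)$. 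The last integral is, by definition of $\widehat P_n$ as the pushforward of $P_n$ under $s\mapsto-s$, equal to $\int_\R h(s)\eee^s\widehat P_n(\dd s)$. Since $h$ was arbitrary, this yields $\tfrac{\dd P_n}{\dd\widehat P_n}(s)=\eee^s$.

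Finally, \eqref{4.4} is an immediate consequence of \eqref{4.5}: applying the just-proven density identity to $h(s)=\eee^{-\alpha s}$ gives
\[
\eee^{e_n(\alpha)}=\int_\R \eee^{-\alpha s}\,P_n(\dd s)
=\int_\R \eee^{-\alpha s}\eee^{s}\,\widehat P_n(\dd s)
=\int_\R \eee^{-(1-\alpha)s}\,P_n(\dd s)=\eee^{e_n(1-\alpha)},
\]
where in the penultimate step I substituted $s\mapsto-s$ using the definition of $\widehat P_n$. Alternatively one can derive \eqref{4.4} directly by repeating the computation of the previous paragraph with $h(\sigma_n)$ replaced by $\eee^{-\alpha\sigma_n}$, which avoids invoking \eqref{4.5}. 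The only real subtlety is the bookkeeping around the involution and the pushforward conventions; once $\sigma_n\circ\Theta_n=-\sigma_n$ is secured, the rest is a one-line change of variables, and I do not anticipate any genuine obstacle.
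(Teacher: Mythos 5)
Your proof is correct, and it reverses the logical order of the paper's argument in a way that is worth noting. The paper first establishes \eqref{4.4} from the R\'enyi entropy symmetry $\Ent_{1-\alpha}(\IP_n\,|\,\widehat\IP_n)=\Ent_\alpha(\widehat\IP_n\,|\,\IP_n)$ (which is exactly your ``alternative'' direct computation with $h(\sigma_n)$ replaced by $\eee^{-\alpha\sigma_n}$), and then \emph{deduces} \eqref{4.5} from \eqref{4.4}: since the identity $\int\eee^{-\alpha s}P_n(\dd s)=\int\eee^{-\alpha s}(\eee^s\widehat P_n)(\dd s)$ is a priori only known for those real $\alpha$ where the Laplace transforms are finite, the paper must invoke the analyticity of $e_n(z)$ in the strip $0<\Re z<1$, its continuity on the closure, and uniqueness of characteristic functions to upgrade this to equality of the measures $P_n$ and $\eee^s\widehat P_n$. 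You instead prove \eqref{4.5} first, by testing against arbitrary bounded Borel $h$ and performing the change of variables along $\Theta_n$; this identifies the two measures directly and makes \eqref{4.4} a one-line corollary (the identity of measures extends to non-negative unbounded integrands such as $\eee^{-\alpha s}$ by monotone convergence, with both sides possibly $+\infty$). Your route is the more elementary one, as it bypasses the analytic-continuation step entirely, which is the only genuinely non-trivial point in the paper's proof; what the paper's order buys is the conceptual identification of \eqref{4.4} as an instance of the general R\'enyi symmetry. Both arguments rest on the same pivot $\sigma_n\circ\Theta_n=-\sigma_n$, which the paper calls ``elementary'' and does not verify; your careful derivation of it from the involutivity of $\Theta_n$ and Assumption \textbf{(R)} is a welcome addition.
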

\begin{remark}
Relations~\eqref{4.4} and~\eqref{4.5} are in fact equivalent: the validity of
one of them implies the other. We shall refer to them  as the \textit{transient
FR\/}. It implies and refines~\eqref{4.2},
and its basic appeal is its universal form. In
applications to non-equilibrium physics, the transient FR is a fingerprint of
time-reversal symmetry breaking and emergence of the 2$^\mathrm{nd}$ Law of
Thermodynamics.
\end{remark}
\begin{proof}{ of Proposition~\ref{p4.1}}
Relation~\eqref{4.4} is a simple consequence of a symmetry property of 
R\'enyi's entropy:
\begin{align*}
e_n(1-\alpha)&=\Ent_{1-\alpha}(\IP_n\,|\,\widehat\IP_n)
=\Ent_\alpha(\widehat\IP_n\,|\,\IP_n)
=\log\int_{\Omega_n}\eee^{\alpha\sigma_n}\dd\widehat\IP_n\\
&=\log\int_{\Omega_n}\eee^{\alpha\sigma_n\circ\Theta_n}
\dd(\widehat\IP_n\circ\Theta_n)
=\log\int_{\Omega_n}\eee^{-\alpha\sigma_n}\dd\IP_n
=e_n(\alpha),
\end{align*}
where we used the elementary relation $\sigma_n\circ\Theta_n=-\sigma_n$. 

To prove~\eqref{4.5}, we exponentiate~\eqref{4.4} and rewrite the result in terms 
of~$P_n$:
\[
\int_\R \eee^{-\alpha s}P_n(\dd s)
=\int_\R \eee^{-(1-\alpha)s}P_n(\dd s)
=\int_\R \eee^{-\alpha s}(\eee^s\widehat P_n)(\dd s). 
\]
Using now the analyticity of the function~$e_n(z)$ in the open strip $0<\Re z<1$, 
its continuity in the closed strip $0\leq \Re z \leq 1$, and the
fact that the characteristic function uniquely defines the corresponding
measure, we deduce that~$P_n(\dd s)$ 
and~$(\eee^s\widehat P_n)(\dd s)$ coincide. This is equivalent 
to~\eqref{4.5}. 
\hfill\qed
\end{proof}

\subsection{Fluctuation Theorem and Fluctuation Relation}
\label{s1.2}

\begin{definition} \label{d1.2}
We shall say that the \textup{Fluctuation Theorem} holds for the family 
$(\Omega_n,\FF_n,\IP_n,\Theta_n)$ if there is a lower semicontinuous  function 
$I:\rr \to[0,+\infty]$  such that,  for any Borel set $\Gamma\subset\rr$,
\begin{equation} \label{4.6}
\begin{split}
-\inf_{s\in \bdot\Gamma} I(s)
&\le \liminf_{n\to\infty}n^{-1}
\log \IP_n\{n^{-1}\sigma_n\in\Gamma\}\\[2mm]
&\le \limsup_{n\to\infty}n^{-1}
\log \IP_n\{n^{-1}\sigma_n\in\Gamma\}
\le -\inf_{s\in \overline\Gamma}I(s), 
\end{split}
\end{equation}
where $\bdot\Gamma/\overline\Gamma$ denotes the interior/closure of $\Gamma$. 
\end{definition}
Let us note that $\IP_n\{n^{-1}\sigma_n\in\Gamma\}=P_n(n\Gamma)$,
so that~\eqref{4.6} can be rewritten as 
\begin{equation} \label{4.8}
-\inf_{s\in \bdot\Gamma} I(s)
\le \liminf_{n\to\infty}n^{-1}\log P_n(n\Gamma)
\le \limsup_{n\to\infty}n^{-1}\log P_n(n\Gamma)
\le -\inf_{s\in \overline\Gamma}I(s).
\end{equation}
The following result shows that the transient FR implies a symmetry relation for 
the rate function in the FT.
\begin{proposition} \label{p4.3}
Suppose that the FT holds for a family 
$(\Omega_n,\FF_n,\IP_n,\Theta_n)$. Then 
the corresponding rate function~$I$ satisfies the relation
\begin{equation} \label{4.7}
I(-s)=I(s)+s\quad \hbox{for $s\in \rr$}.
\end{equation}
\end{proposition}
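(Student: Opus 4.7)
The plan is to deduce the pointwise identity $I(-s)=I(s)+s$ from the transient FR~\eqref{4.5}, which already encodes the symmetry at the level of the laws~$P_n$, together with the LDP~\eqref{4.8}. Rewriting \eqref{4.5} in integral form, for any Borel set $A\subset\rr$ we have
\[
P_n(A)=\int_A \eee^s\,\dd\widehat P_n(s)=\int_{-A}\eee^{-s}\,\dd P_n(s),
\]
since $\widehat P_n$ is the push-forward of $P_n$ by $s\mapsto -s$.

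First I would fix $s\in\rr$ and $\delta>0$ and apply the above identity with $A=nB_\delta(s)$, where $B_\delta(s)=(s-\delta,s+\delta)$. Bounding $\eee^{-r}$ from above and below on the interval $-nB_\delta(s)=nB_\delta(-s)$ yields
\[
\eee^{n(s-\delta)}\,P_n\bigl(nB_\delta(-s)\bigr)\le P_n\bigl(nB_\delta(s)\bigr)\le \eee^{n(s+\delta)}\,P_n\bigl(nB_\delta(-s)\bigr).
\]
Taking $n^{-1}\log$ and then $\liminf_{n\to\infty}$ in the left inequality, and $\limsup_{n\to\infty}$ in the right one, I would feed in the LDP bounds of~\eqref{4.8}: the upper bound applied to the closed ball $\overline{B}_\delta(\pm s)$ and the lower bound applied to the open ball $B_\delta(\pm s)$. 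This gives the two chains of inequalities
\[
(s-\delta)-\inf_{r\in B_\delta(-s)}I(r)\le -\inf_{r\in\overline{B_\delta(s)}}I(r),\qquad
-\inf_{r\in B_\delta(s)}I(r)\le (s+\delta)-\inf_{r\in\overline{B_\delta(-s)}}I(r),
\]
equivalently $\inf_{B_\delta(-s)}I\ge (s-\delta)+\inf_{\overline{B_\delta(s)}}I$ and $\inf_{\overline{B_\delta(-s)}}I\le (s+\delta)+\inf_{B_\delta(s)}I$.

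Finally, I would let $\delta\downarrow 0$. The lower semicontinuity of $I$ (guaranteed by Definition~\ref{d1.2}) ensures that
\[
\lim_{\delta\downarrow 0}\inf_{r\in\overline{B_\delta(s_0)}}I(r)=\lim_{\delta\downarrow 0}\inf_{r\in B_\delta(s_0)}I(r)=I(s_0)
\]
for every $s_0\in\rr$: the first equality follows because $I(s_0)\le \inf_{\overline{B_\delta(s_0)}}I(r)$ for every $\delta$, while by lsc for every $\eps>0$ there exists $\delta>0$ with $I(r)>I(s_0)-\eps$ on $\overline{B_\delta(s_0)}$; the same bound applies to the open ball, which also always contains $s_0$. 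Applying this to both $s$ and $-s$, the two inequalities collapse to $I(-s)\ge s+I(s)$ and $I(-s)\le s+I(s)$, respectively, yielding~\eqref{4.7}.

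The argument is essentially a finite-$\delta$ computation followed by a passage to the limit, and I do not anticipate real obstacles: the only delicate point is keeping careful track of open versus closed balls so that the upper LDP bound is paired with closures and the lower LDP bound with interiors, and so that the $\delta\downarrow 0$ limits of the two infima meet at the common value $I(s_0)$ thanks to lower semicontinuity. Note that an alternative route would be to exponentiate \eqref{4.5} into a relation between logarithmic moment generating functions of $P_n$ and use Varadhan/Laplace asymptotics, but the direct bound above is shorter and more transparent.
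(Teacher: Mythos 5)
Your argument is correct and is essentially the paper's own proof: both derive from the transient FR~\eqref{4.5} a comparison between $P_n(n\Gamma)$ and $P_n(-n\Gamma)$ on small intervals, feed it into the LDP~\eqref{4.8} pairing the upper bound with closures and the lower bound with interiors, and conclude via lower semicontinuity of $I$ as $\delta\downarrow0$ (the paper gets one inequality and substitutes $a\mapsto-a$, you get both at once from the two-sided bound --- a cosmetic difference). One typo: in your justification of $\lim_{\delta\downarrow0}\inf_{\overline{B_\delta(s_0)}}I=I(s_0)$ the inequality should read $\inf_{\overline{B_\delta(s_0)}}I\le I(s_0)$, since $s_0$ belongs to the ball.
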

\begin{proof}{} 
In view of~\eqref{4.5}, for any Borel set $\Gamma\subset\R$ we have 
\[
P_n(\Gamma)\le\eee^{\sup \Gamma}P_n(-\Gamma). 
\]
Replacing~$\Gamma$ with~$n\Gamma$ and using~\eqref{4.8} we see that 
\[
-\inf_{s\in\bdot{\Gamma}}I(s)\le 
\liminf_{n\to\infty}n^{-1}\log P_n(n\Gamma)
\le \limsup_{n\to\infty}
n^{-1}\log\bigl(\eee^{n \sup \Gamma}P_n(-n\Gamma)\bigr)
\le \sup \Gamma-\inf_{s\in\overline{\Gamma}}I(-s). 
\]
Taking $\Gamma=]a-\e,a+\e[$ with $\e>0$, we derive
\begin{equation} \label{4.9}
\inf_{|s+a|<2\e}I(s)\le\inf_{|s+a|\le\e}I(s)
\le a+\e+\inf_{|s-a|<\e}I(s).
\end{equation}
Since the function~$I$ is lower semicontinuous, we have 
$I(a)=\lim_{\e\downarrow0}\inf_{|s-a|<\e}I(s)$. Passing to the limit 
in~\eqref{4.9} as $\e\downarrow 0$, we obtain
\[
I(-a)\le a+I(a) 
\]
for any $a\in\rr$. Replacing~$a$ by~$-a$ and comparing the two inequalities, we 
arrive at~\eqref{4.7}. 
\hfill\qed 
\end{proof}

\subsection{Entropic pressure}
\label{s1.4}

Suppose that the Fluctuation Theorem (Definition~\ref{d1.2}) holds. Then under 
very general conditions Varadhan's lemma~\cite[Theorem~4.3.1]{DZ2000} implies 
that the limit 
\begin{equation} \label{4.13}
e(\alpha)=\lim_{n\to\infty}n^{-1}e_n(\alpha)
\end{equation}
exists for all $\alpha\in\rr$ and that 
\begin{equation}\label{eq:ealphalegendre}
e(\alpha)=-\inf_{s\in\R}\bigl(s\alpha+I(s)\bigr)
=\sup_{s\in\R}\bigl(-s\alpha-I(s)\bigr).	
\end{equation}
The function~$e$ is called the \textit{entropic pressure\/} of the family 
$(\Omega_n,\FF_n,\IP_n,\Theta_n)$. 
Elementary arguments show that:
\begin{itemize}
\item[\textbf{(a)}] 
$\rr\ni\alpha\mapsto e(\alpha)\in]-\infty,+\infty]$ is a closed proper convex 
function;\footnote{\ie it is convex, lower semicontinuous and not everywhere 
infinite.}
\item[\textbf{(b)}]  
it is  non-positive on~$[0,1]$ and non-negative outside~$[0,1]$, with a global 
minimum at $\alpha=1/2$;
\item[\textbf{(c)}] 
it satisfies the relations $e(0)=e(1)=0$ and 
\begin{equation*}
e(\alpha)=e(1-\alpha)\quad\mbox{for all $\alpha\in\R$}.
\end{equation*}
\end{itemize}
If the rate function~$I$ is convex on $\rr$, then inverting the Legendre 
transform~\eqref{eq:ealphalegendre} gives
\begin{equation} \label{4.15}
I(s)=-\inf_{\alpha\in\R}\bigl(s\alpha +e(\alpha)\bigr)
=\sup_{\alpha\in\R}\bigl(s\alpha-e(-\alpha)\bigr).
\end{equation} 
If $I$ is not convex on~$\rr$, the same relation holds if~$I$ is replaced by its 
lower convex envelope. 

The above discussion can be turned around. Suppose that limit~\eqref{4.13}
exists and that the entropic pressure $e$ is differentiable on~$\rr$. Then the
G\"artner--Ellis  theorem~\cite[Section~2.3]{DZ2000} implies that the
Fluctuation Theorem holds with the convex lower semicontinuous rate function $I$
 given by~\eqref{4.15}. This gives a technical route to prove the Fluctuation
Theorem. Since the seminal work~\cite{LS-1999}, this route has been dominant in
mathematical approaches to the FT.

Returning to the dynamical system~$(M, \varphi)$, the above route yields a quick
proof of Theorem~\tA if  the potentials~$G$  and $G\circ \theta$ are 
Bowen-regular. We follow~\cite{JPRB-2011}. By a classical result of Bowen, the
limit
\[
e(\alpha)=\lim_{n\rightarrow \infty}\frac{1}{n}\log 
\int_{M_n}\eee^{-\alpha S_n \sigma}\d \P_n
\]
exists for all $\alpha\in\R$ and is equal to the topological pressure of the 
potential $(1-\alpha) G + \alpha G\circ \theta$; see also Theorem~\ref{Thm:USCh} 
in Section~\ref{s2.4}. By another classical result of Bowen~\cite{bowen1975}, the
potential $(1-\alpha)G+\alpha G\circ\theta$ has unique equilibrium state for
all $\alpha$, and~\cite[Theorem 9.15]{walters1982} shows that $e$
is differentiable on~$\rr$. Thus, the  G\"artner--Ellis theorem applies and 
gives the FT and the second FR in~\eqref{1.8}. Assuming that the vector space of
all Bowen-regular potentials in dense in~$C(M)$ (for instance, this is the case
in Examples~\ref{e1}--\ref{e3} of the introduction), Kifer's
theorem~\cite{kifer-1990} implies that the LDP part of Theorem~\tA holds.
In this case the first FR in~\eqref{1.8} follows from a computation given in
Section~\ref{s2.3}. The same proof applies verbatim to Theorem~\tB, and in 
particular recovers the results of~\cite{MV-2003}.

The novelty of Theorems~\tA and~\tB is that they hold for potentials~$G$
for which the entropic pressure is not necessarily differentiable, hence in the
phase transition regime. In this case the proof follows a different strategy:
one first proves the LDP for empirical measures, and then uses the 
contraction principle to prove the FT. Another novelty is that these results also 
extend to asymptotically additive potential sequences.

\subsection{What does the Fluctuation Theorem mean?}
\label{s1.3}

Returning to the level of generality of Sections~\ref{s1.1} and~\ref{s1.2}, the
interpretation of the rate function~$I$ in the FT is given in terms of
\textit{hypothesis testing error exponents\/} of the family $\{(\IP_n,\widehat\IP_n)\}$.
These exponents describe the rate of separation between the measures~$\IP_n$
and~$\widehat\IP_n$ as~$n\to\infty$. If the elements of~${\mathbb N}$ are
instances of time and~$\Theta_n$ is related to time-reversal, these exponents
quantify the emergence of the arrow of time and can be viewed as a fine form of
the second law of thermodynamics. Let us recall the definition of the three
types of error exponents relevant to our study and state some results without
proofs, which can be found in~\cite{JOPS-2012,JNPPS}. 
All the dynamical systems discussed in this paper give rise to a convex rate function
in the FT, and we shall  restrict our attention to this case in Propositions~\ref{prop:chernoff} and \ref{prop:hoeffding} below.

\smallskip
\textbf{Stein error exponents}. {\itshape Given $\gamma\in]0,1[$, we set 
\begin{equation*}
\ssss_\gamma(\IP_n,\widehat\IP_n)
=\inf\bigl\{\widehat\IP_n(\Gamma):
\Gamma\in\FF_n,\IP_n(\Gamma^c)\le\gamma\bigr\}. 
\end{equation*}
The lower and upper Stein exponents of the family $\{(\IP_n,\widehat{\IP}_n)\}$ 
are defined by
\begin{align*}
\underbar{$\ssss$}&=\inf\Bigl\{\liminf_{n \rightarrow \infty}n^{-1}\log\widehat\IP_n(\Gamma_n):
\Gamma_n\in\FF_n,\lim_{n \rightarrow \infty }\IP_n(\Gamma_n^c)=0\Bigr\},\\
\bar\ssss
&=\inf\Bigl\{\limsup_{n \rightarrow \infty }n^{-1}\log\widehat\IP_n(\Gamma_n):
\Gamma_n\in\FF_n,\lim_{n \rightarrow \infty }\IP_n(\Gamma_n^c)=0\Bigr\}.
\end{align*}
}%
The following result establishes a link between the large-$n$ asymptotics
of~$\ssss_\gamma(\IP_n,\widehat\IP_n)$, the Stein exponents, the weak law of 
large numbers and the FT.
\begin{proposition}
\ben
\item Suppose that $n^{-1}\sigma_n$ converges in probability to a deterministic
limit~$\ep$. Then, for any $\gamma\in]0,1[$,
\begin{equation*}
\lim_{n\to\infty}n^{-1}
\log \ssss_\gamma(\IP_n,\widehat\IP_n)=\underbar{$\ssss$}=\bar\ssss=-\ep. 
\end{equation*}
\item If the FT holds and the corresponding rate function~$I$ vanishes at a 
unique point $\underline{s}\in\R$, then $n^{-1}\sigma_n$ converges in probability 
to $\underline{s}$ and hence Part~(1) holds with $\ep=\underline{s}$.
\item If, in addition, the entropic pressure defined by~\eqref{eq:ealphalegendre} is  
differentiable at $\alpha=0$, then 
\[
\lim_{n \rightarrow \infty}n^{-1}\Ent(\P_n|\wP_n)=\ep=-e^\prime(0).
\]
\een
\end{proposition}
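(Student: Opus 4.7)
The plan is to exploit throughout the identity
\[
\widehat\IP_n(\Gamma)=\int_\Gamma \eee^{-\sigma_n}\dd\IP_n\qquad (\Gamma\in\FF_n),
\]
together with $\Ent(\IP_n\,|\,\widehat\IP_n)=\int \sigma_n\,\dd\IP_n$.

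For Part~(1), I would fix $\delta>0$ and let $B_n^\delta=\{|n^{-1}\sigma_n-\ep|<\delta\}$, which has $\IP_n(B_n^\delta)\to 1$ by hypothesis. On $B_n^\delta$, the integrand $\eee^{-\sigma_n}$ lies in $[\eee^{-n(\ep+\delta)},\eee^{-n(\ep-\delta)}]$. Taking $\Gamma_n=B_n^\delta$ gives $\widehat\IP_n(B_n^\delta)\le \eee^{-n(\ep-\delta)}$; since $B_n^\delta$ is itself admissible in the definition of $\ssss_\gamma$ for every $\gamma\in]0,1[$ once $n$ is large, this yields $\bar\ssss\le -\ep+\delta$ and the analogous upper bound for $\limsup n^{-1}\log\ssss_\gamma$. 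For the matching lower bound, any admissible sequence $\{\Gamma_n\}$ with $\IP_n(\Gamma_n^c)\le \gamma<1$ satisfies $\IP_n(\Gamma_n\cap B_n^\delta)\ge 1-\gamma-o(1)$, so
\[
\widehat\IP_n(\Gamma_n)\ge \int_{\Gamma_n\cap B_n^\delta}\eee^{-\sigma_n}\dd\IP_n\ge \eee^{-n(\ep+\delta)}\bigl(1-\gamma-o(1)\bigr),
\]
forcing $\liminf n^{-1}\log\widehat\IP_n(\Gamma_n)\ge -(\ep+\delta)$. Letting $\delta\downarrow 0$ closes the squeeze.

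For Part~(2), I invoke the convention of Section~\ref{s1.3} that $I$ is convex. A convex, lower semicontinuous, nonnegative function vanishing only at $\underline{s}$ is monotone on each side of $\underline{s}$, so $c_\delta:=\inf\{I(s):|s-\underline{s}|\ge \delta\}\ge \min\bigl(I(\underline{s}-\delta),I(\underline{s}+\delta)\bigr)>0$ for each $\delta>0$. Applying the upper bound in~\eqref{4.6} to the closed set $\{s:|s-\underline{s}|\ge\delta\}$ then gives exponential decay of $\IP_n(|n^{-1}\sigma_n-\underline{s}|\ge\delta)$, hence convergence in probability, and Part~(1) applies with $\ep=\underline{s}$.

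For Part~(3), differentiating $e_n(\alpha)=\log\int\eee^{-\alpha\sigma_n}\dd\IP_n$ at $\alpha=0$ yields
\[
n^{-1}\Ent(\IP_n\,|\,\widehat\IP_n)=-n^{-1}e_n'(0).
\]
Under the FT, Varadhan's lemma gives the pointwise convergence $n^{-1}e_n(\alpha)\to e(\alpha)$ on $\R$; since $n^{-1}e_n$ is convex and $e$ is differentiable at $0$, the classical convex-analytic fact that pointwise convergence of convex functions forces convergence of derivatives at each point of differentiability of the limit implies $n^{-1}e_n'(0)\to e'(0)$. On the other hand, the Legendre relation~\eqref{4.15} combined with differentiability of $e$ at $0$ shows that the unique minimizer $\underline{s}$ of $I$ satisfies $\underline{s}=-e'(0)$. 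Combining the three pieces yields the claim.

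The main technical point is the derivative-convergence step of Part~(3): one must verify that Varadhan's lemma is applicable (which follows from the bound $e_n\le 0$ on $[0,1]$ provided by~\eqref{4.4} together with the symmetric exponential control outside $[0,1]$, ensuring the exponential tightness needed for~\eqref{4.13}) and then invoke the standard convex-analytic result on convergence of derivatives. Parts~(1) and~(2) are essentially concentration arguments and pose no difficulty beyond careful bookkeeping of the $\delta$'s and $\gamma$'s.
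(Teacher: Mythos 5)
The paper states this proposition \emph{without proof} (Section~\ref{s1.3} explicitly defers to~\cite{JOPS-2012,JNPPS}), so there is no in-text argument to compare against; I can only assess your reconstruction on its own terms. Parts~(1) and~(2) are correct. Part~(1) is the standard Stein-type squeeze based on $\dd\widehat\IP_n/\dd\IP_n=\eee^{-\sigma_n}$, and the bookkeeping with $\delta$ and $\gamma$ is right. In Part~(2) you are also right to invoke the convexity convention announced at the start of Section~\ref{s1.3}: lower semicontinuity plus a unique zero alone would \emph{not} give $\inf\{I(s):|s-\underline{s}|\ge\delta\}>0$ (one needs convexity or goodness of $I$), so this is a necessary, not cosmetic, ingredient.

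Part~(3) has a gap. The skeleton — $n^{-1}\Ent(\IP_n\,|\,\widehat\IP_n)=-n^{-1}e_n'(0)$ plus convergence of derivatives of convex functions — is the right one, but that convex-analytic lemma requires pointwise convergence $n^{-1}e_n\to e$ on a \emph{two-sided} neighbourhood of $0$. The moment condition for Varadhan's lemma is indeed automatic on $]0,1[$ (take the exponent $\gamma=1/\alpha>1$ and use $\int\eee^{-\sigma_n}\dd\IP_n=1$), but for $\alpha<0$ the symmetry $e_n(\alpha)=e_n(1-\alpha)$ only maps you to $\alpha>1$, where $e_n$ is merely non-negative and may be $+\infty$ or grow superlinearly in $n$; there is no ``symmetric exponential control outside $[0,1]$'' as you claim. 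With convergence only on $[0,1]$ you obtain $\limsup_n n^{-1}e_n'(0)\le e'(0)$, hence $\liminf_n n^{-1}\Ent(\IP_n\,|\,\widehat\IP_n)\ge -e'(0)$, but the matching upper bound needs an additional exponential moment hypothesis such as $\limsup_n n^{-1}\log\int\eee^{\eta\sigma_n}\dd\IP_n<\infty$ for some $\eta>0$ (automatic in the paper's dynamical applications, where $\|\sigma_n\|_\infty=O(n)$, and implicit in the paper's standing assumption that the limit~\eqref{4.13} exists and agrees with~\eqref{eq:ealphalegendre}). The same issue appears if one argues instead via uniform integrability: the negative part of $n^{-1}\sigma_n$ is controlled by the Jarzynski identity, which gives $\IP_n\{\sigma_n<-t\}\le\eee^{-t}$, but the positive part again requires the exponential moment bound. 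You should either add this hypothesis explicitly or note that it is part of what ``the entropic pressure is differentiable at $0$'' is meant to encode.
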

Recall that, for two probability measures~$\IP$ and~$\Q$ on a measurable 
space~$(\Omega,\FF)$, the \textit{total variation distance\/}  is given by 
\begin{equation*} 
\|\IP-\Q\|_{\mathrm{var}}
=\sup_{\Gamma\in\FF}|\IP(\Gamma)-\Q(\Gamma)|
=1-\int_\Omega (\Delta\wedge 1)\dd\Q,
\end{equation*}
where the second equality holds if $\IP$ is absolutely continuous with respect 
to~$\Q$, and~$\Delta$ stands for the corresponding density. Note that
$\|\IP-\Q\|_{\mathrm{var}}\le1$ with equality iff $\IP$ and $\Q$ are mutually
singular. The Chernoff exponents of the family $\{(\IP_n,\widehat{\IP}_n)\}$
quantify the exponential rate in the convergence
$\|\IP_n-\widehat{\IP}_n\|_{\mathrm{var}}\to1$.

\smallskip
\textbf{Chernoff error exponents.} {\itshape The lower and upper Chernoff exponents are 
defined by} 
\begin{equation*}
\underline{c}=\liminf_{n\to\infty}n^{-1}
\log\bigl(1-\|\IP_n-\widehat\IP_n\|_{\mathrm{var}}\bigr),\quad
\bar{c}=\limsup_{n\to\infty}n^{-1}
\log\bigl(1-\|\IP_n-\widehat\IP_n\|_{\mathrm{var}}\bigr). 
\end{equation*}%
The following result provides a connection between these exponents and the 
FT.
\begin{proposition}\label{prop:chernoff}
Suppose that the FT holds with a convex rate function~$I$. Then the upper and 
lower Chernoff exponents coincide, and their common value is given by\footnote{Since $I$ is assumed to be convex, one has 
$-I(0)=e(1/2)=\min\{e(\alpha):\alpha\in\rr\}$.} 
\begin{equation*}
\underline{c}=\bar{c}=
\lim_{n\to\infty}n^{-1}
\log\bigl(1-\|\IP_n-\widehat\IP_n\|_{\mathrm{var}}\bigr)=-I(0).
\end{equation*}
\end{proposition}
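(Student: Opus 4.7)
The plan is to compute $1-\|\IP_n-\widehat\IP_n\|_{\mathrm{var}}$ explicitly as an integral of $\sigma_n$ under $\IP_n$, and then sandwich its exponential decay rate between two bounds provided by the FT. From the formula for the variation distance recalled just above the statement and the identity $\dd\widehat\IP_n = \eee^{-\sigma_n}\dd\IP_n$, one obtains immediately
\begin{equation*}
1-\|\IP_n-\widehat\IP_n\|_{\mathrm{var}} = \int\min(1,\eee^{-\sigma_n})\,\dd\IP_n.
\end{equation*}

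For the upper bound, I would apply the elementary estimate $\min(a,b)\leq\sqrt{ab}$ with $a=1$ and $b=\eee^{-\sigma_n}$, which bounds the integrand by $\eee^{-\sigma_n/2}$ and yields
\begin{equation*}
1-\|\IP_n-\widehat\IP_n\|_{\mathrm{var}} \leq \int\eee^{-\sigma_n/2}\,\dd\IP_n = \eee^{e_n(1/2)}.
\end{equation*}
Taking $n^{-1}\log$ and using the upper half of Varadhan's lemma---available from the LD upper bound of the FT applied to the continuous function $s\mapsto -s/2$---gives $\bar{c}\leq e(1/2)$.

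For the matching lower bound, I would fix $\eps>0$ and observe that on $\{|\sigma_n|<n\eps\}$ the integrand is bounded below by $\eee^{-n\eps}$, hence
\begin{equation*}
1-\|\IP_n-\widehat\IP_n\|_{\mathrm{var}} \geq \eee^{-n\eps}\,\IP_n\{|n^{-1}\sigma_n|<\eps\}.
\end{equation*}
Applying the LD lower bound of the FT to the open interval $(-\eps,\eps)$ then gives $\underline{c}\geq -\eps - \inf_{|s|<\eps}I(s)$, and letting $\eps\downarrow 0$ together with the lower semicontinuity of $I$ at $0$ produces $\underline{c}\geq -I(0)$.

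The last ingredient is the identity $e(1/2)=-I(0)$, which is where the convexity hypothesis on $I$ enters (and which is the content of the footnote in the statement). The FR $I(-s)=I(s)+s$ makes the function $s\mapsto I(s)+s/2$ symmetric under $s\to -s$; if $I$ is convex, so is this function, and its minimum is then attained at $s=0$ with value $I(0)$, so
\begin{equation*}
e(1/2) = \sup_s\bigl(-s/2-I(s)\bigr) = -\inf_s\bigl(I(s)+s/2\bigr) = -I(0).
\end{equation*}
Chaining $\bar{c}\leq e(1/2) = -I(0) \leq \underline{c} \leq \bar{c}$ collapses everything to equality and proves both the existence of the limit and the stated value. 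The only genuine technical point in this plan is the continuity $\inf_{|s|<\eps}I(s)\to I(0)$ as $\eps\downarrow 0$, which is a standard consequence of the lower semicontinuity of $I$ at $0$; the rest is a clean two-sided Chernoff argument.
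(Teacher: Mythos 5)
The paper states this proposition without proof (Section~\ref{s1.3} explicitly defers the proofs of these hypothesis-testing results to~\cite{JOPS-2012,JNPPS}), so there is no internal argument to compare against; judged on its own, your proposal is essentially correct and is the standard two-sided Chernoff-bound computation. The identity $1-\|\IP_n-\widehat\IP_n\|_{\mathrm{var}}=\int\min(1,\eee^{-\sigma_n})\,\dd\IP_n$, the lower bound via $\IP_n\{|n^{-1}\sigma_n|<\eps\}$ (for which you only need the trivial inequality $\inf_{|s|<\eps}I(s)\le I(0)$, valid because $0$ lies in the open interval --- lower semicontinuity is not actually needed there), and the Legendre computation $\sup_s(-s/2-I(s))=-I(0)$ from convexity plus the FR are all fine. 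The one step you gloss over is the opposite of the one you flag as delicate: the upper half of Varadhan's lemma is \emph{not} an automatic consequence of the LD upper bound for the unbounded function $s\mapsto -s/2$; it requires a tail/moment condition such as $\limsup_n n^{-1}\log\int\eee^{-\gamma\sigma_n/2}\,\dd\IP_n<\infty$ for some $\gamma>1$. Fortunately this holds for free here, since $e_n(\alpha)\le 0$ for $\alpha\in[0,1]$, so one may take any $\gamma\in\,]1,2[$. Alternatively, you can bypass Varadhan altogether: splitting the integral over $\{\sigma_n\ge 0\}$ and $\{\sigma_n<0\}$ and using $\sigma_n\circ\Theta_n=-\sigma_n$ (equivalently, the transient FR~\eqref{4.5}) gives $1-\|\IP_n-\widehat\IP_n\|_{\mathrm{var}}=\IP_n\{\sigma_n\le 0\}+\IP_n\{\sigma_n<0\}$, so both bounds reduce to the LDP applied to $]-\infty,0]$ and $]-\eps,\eps[$ together with the identity $\inf_{s\le 0}I(s)=I(0)$, which follows from convexity and the FR exactly as in your last step.
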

Thus, if $I(0)>0$, then the measures~$\IP_n$ and~$\widehat\IP_n$ concentrate on
the complementary subsets $\{\sigma_n>0\}$ and~$\{\sigma_n<0\}$, respectively,
and separate with an exponential rate~$-I(0)$:
\[
\lim_{n\to\infty}n^{-1}\log\IP_n\{\sigma_n<0\}
=\lim_{n\to\infty}n^{-1}\log\widehat\IP_n\{\sigma_n>0\}
=-I(0). 
\]

\smallskip
\textbf{Hoeffding error exponents}.
{\itshape Given $r\in\R$, the Hoeffding exponents are defined by  
\begin{align*}
\underline{\hhhh}(r)
&=\inf\Bigl\{\liminf_{n \rightarrow 
\infty}n^{-1}\log\widehat\IP_n(\Gamma_n):\Gamma_n\in\FF_n,
\limsup_{n \rightarrow \infty}n^{-1}\log\IP_n(\Gamma_n^c)<-r\Bigr\},\\ 
\bar\hhhh(r)
&=\inf\Bigl\{\limsup_{n \rightarrow 
\infty}n^{-1}\log\widehat\IP_n(\Gamma_n):\Gamma_n\in\FF_n,
\limsup_{n \rightarrow \infty}n^{-1}\log\IP_n(\Gamma_n^c)<-r\Bigr\},\\
\hhhh(r)
&=\inf\Bigl\{\lim_{n \rightarrow 
\infty}n^{-1}\log\widehat\IP_n(\Gamma_n):\Gamma_n\in\FF_n,
\limsup_{n \rightarrow \infty}n^{-1}\log\IP_n(\Gamma_n^c)<-r\Bigr\},
\end{align*}
where the infimum in the last relation is taken over all families~$\{\Gamma_n\}$ 
for which the limit exists.}

The following proposition gives some elementary properties of these exponents and
their connections with the FT.
\begin{proposition}\label{prop:hoeffding}
\ben 
\item The functions~$\underline{\hhhh}$, $\bar\hhhh$, and~$\hhhh$ take the 
value~$-\infty$ on~$]-\infty,0[$ and are non-decreasing and non-positive 
on~$[0,+\infty[$. Moreover, the inequalities
$\underline{\hhhh}(r)\le\bar\hhhh(r)\le\hhhh(r)\le 0$ hold for $r\ge0$.
\item Suppose that the FT holds with a convex rate function~$I$. Then
the Hoeffding exponents coincide, and their common value is given in terms of the 
entropic pressure~\eqref{eq:ealphalegendre} by
\begin{equation}\label{eq:deffinfar} 
\rr\ni r\mapsto\underline{\hhhh}(r)=\bar\hhhh(r)=\hhhh(r)
=-f(r):=\inf_{\alpha\in[0,1[}\frac{\alpha r+e(\alpha)}{1-\alpha}.
\end{equation} 
\item Let $\underline{s}=\inf\{s\in\rr:I(s)=0\}$, then $\underline{s}=\partial^-e(1)$.%
\footnote{$\partial^-$ denotes the left derivative.} The rate function $I$ is strictly decreasing on the interval $[-\underline{s},\underline{s}]$ and maps this
interval onto $[0,\underline{s}]$. We denote by 
$I^{-1}:[0,\underline{s}]\to[-\underline{s},\underline{s}]$ the inverse function.
\item The function $f$ defined in \eqref{eq:deffinfar} is a closed proper convex function on $\rr$ such that $f(0)=\underline{s}$
and $f(r)=0$ for all $r\ge\underline{s}$.
Moreover, the restriction of $f$ to the interval $[0,\underline{s}]$
is an involutive homeomorphism satisfying
\[
2I(0)-r\le f(r)=r+I^{-1}(r)\le\underline{s}-r,
\]
and in particular $f(I(0))=I(0)$.
\een
\end{proposition}

\subsection{Interpretation of Theorems A and B}
\label{s1.5}

The FT part of Theorem~\tA fits directly into the mathematical framework and
interpretation of the FT discussed in Section~\ref{s1.3} with $\Omega_n=M_n$,
$\P_n$ defined by~\eqref{2.8add}, and $\Theta_n=\theta \circ \varphi^{n-1}$.

The above interpretation does not apply  to the setup of Theorem~\tB. The FT
part of Theorem~\tB is related to the \textit{principle of  regular entropic
fluctuations\/} of~\cite{JPRB-2011} adapted to the setting of this paper, and
provides a \textit{uniformity\/} counterpart to the FT of Theorem~\tA which is
both of conceptual and practical (numerical, experimental) importance. Here we
shall briefly comment on this point, referring the reader 
to~\cite{CJPS_phys,JNPPS} for additional discussion.

Let $\P_n$ be as in  Theorem~\tA. For each fixed $m\geq 0$, since the
function $\sigma$ is bounded, the random variables~$(n+m)^{-1}S_{n+m}\sigma$
and~$n^{-1}S_n\sigma$ are exponentially equivalent under the law~$\IP_{n+m}$ as
$n\to\infty$, and the  theorem implies that for any Borel set $\Gamma \subset
\rr$,
\beq
\begin{split}
-\inf_{s\in \bdot\Gamma} I(s)
&\le \liminf_{n\to\infty}n^{-1}
\log \IP_{n+m}\left\{n^{-1}S_n\sigma\in\Gamma\right\}\\[2mm]
&\le \limsup_{n\to\infty}n^{-1}
\log \IP_{n+m}\left\{n^{-1}S_n\sigma \in\Gamma\right\}
\le -\inf_{s\in \overline\Gamma}I(s).
\end{split}
\label{wed-ti}
\eeq
If, along some subsequence, $\P_{m_k}\rightharpoonup\P$, a natural question is
whether~\eqref{wed-ti} holds with $\P_{n+m}$ replaced by $\P$. Theorem~\tB
gives a positive answer if $\P$ is a weak Gibbs measure  with the same potential
$G$ as in the sequence $\P_n$, and further asserts  that this is the only
requirement for $\P$; neither weak convergence nor invariance of $\P$ play a
role. Such level of uniformity is a somewhat surprising strengthening of the
principle of regular entropic fluctuations in the setting  of chaotic dynamical
systems on compact metric spaces.

\subsection{Outlook}
The general mathematical framework and interpretation of the FT presented in
this section are rooted in pioneering works on the
subject~\cite{ECM-1993,ES-1994,GC-1995,GC-1995a,LS-1999,maes-1999}. As
formulated here, they were developed in special cases
in~\cite{JOPS-2012,BJPP-2017}, and will be studied in full generality in~\cite{JNPPS}. When
a richer mathematical/physical structure is available, one can say more. For
example, for open stochastic or Hamiltonian systems which carry energy fluxes
generated by temperature differentials, the entropy production observable
coincides with thermodynamic entropy production;
see~\cite{JPRB-2011,JPS-2017,CJPS_phys}. For the chaotic dynamical systems
$(M,\varphi)$ considered in this paper, Theorems~\tA and~\tB show that
the FT is a structural feature of the thermodynamic formalism. Relaxing the
chaoticity assumptions (expansiveness and specification)  brings forward a
number of important open problems that remain to be discussed in the future.

\section{Preliminaries}
\label{s1b}

\subsection{A class of continuous dynamical systems}
\label{s2.1}

Let $M$ be a compact metric space with metric~$d$ and Borel
$\sigma$-algebra~$\BB(M)$. We recall that~$C(M)$ (respectively, $B(M)$) denotes
the Banach space of real-valued continuous (bounded measurable) functions~$f$ on
$M$ with the norm $\|f\|_\infty=\sup_{x\in M}|f(x)|$. The set $\PP(M)$ of Borel
probability measures on~$M$ is endowed with the topology of weak convergence
(denoted $\rightharpoonup$) and the corresponding Borel $\sigma$-algebra. Given
$V\in C(M)$ and $\Q\in\PP(M)$, we denote by $\langle V,\Q\rangle$ the integral
of~$V$ with respect to~$\Q$. In the following, we shall always assume:
\newcommand{\hC}{\hyperlink{hyp.C}{\textbf{(C)}}\xspace}
\newcommand{\hH}{\hyperlink{hyp.H}{\textbf{(H)}}\xspace}
\begin{description}
\item[\hypertarget{hyp.C}{\textbf{(C)}}]\label{C}
$\varphi:M\to M$ \textit{is a continuous map.}
\end{description}
On occasions, we shall strengthen the above standing assumption to 
\begin{description}   
\item[\hypertarget{hyp.H}{\textbf{(H)}}]\label{H}
$\varphi:M\to M$ \textit{is a homeomorphism.}
\end{description}
In the sequel we shall always explicitly mention when Condition~\hH is
assumed. The reversal operation as defined in the introduction makes sense only
if~\hH holds. The transformations that lead to the  FR and FT for general
continuous maps are defined in Section~\ref{s2.3}.

The set of $\varphi$-invariant elements of $\PP(M)$ is denoted
by~$\PP_\varphi(M)$, and the set of $\varphi$-ergodic measures  by 
$\EE_\varphi(M)$. The topological entropy of $\varphi$  is denoted by $h_\mathrm{
Top}(\varphi)$. The  Kolmogorov--Sinai entropy of $\varphi$ with respect to
$\Q\in\PP_\varphi(M)$ is denoted by  $h_\varphi(\Q)$.

The \textit{orbit\/} of a point $x\in M$ is defined as
$\{\varphi^k(x)\}_{k\in\zz_+}$. An orbit is said to be $n$\textit{-periodic\/} if
$\varphi^n(x)=x$, and we denote by~$M_n$ the set of fixed points of~$\varphi^n$.

For $I=\llbracket l,m\rrbracket\subset\zz_+$, we denote orbit segments by
\[
\varphi^I(x)=\{\varphi^k(x)\}_{k\in I},
\]
and call \textit{specification} a finite family of such segments
\[
\xi=\{\varphi^{I_i}(x_i)\}_{i\in\llbracket 1,n\rrbracket}.
\]
The integers $n$ and 
$L(\xi)=\max\{|k-k'|:k,k'\in\cup_{i\in\llbracket 1,n\rrbracket}I_i\}$
are called the \textit{rank} and the \textit{length} of $\xi$ respectively.
The specification $\xi$ is $N$\textit{-separated} whenever 
$d(I_i,I_j)=\min_{k_i\in I_i,k_j\in I_j}|k_i-k_j|\ge N$
for all distinct $i,j\in\llbracket 1,n\rrbracket$. It is
$\epsilon$\textit{-shadowed} by $x\in M$ whenever
\[
\max_{i\in\llbracket 1,n\rrbracket}
\max_{k\in I_i}d(\varphi^k(x),\varphi^k(x_i))<\e.
\]
Given $x\in M$, $n\ge0$, and~$\e>0$, the \textit{Bowen ball\/} is defined by
\begin{equation*}
	B_n(x,\e)=\{y\in M:d(\varphi^k(y),\varphi^k(x))<\e\,\mbox{ for }\,0\le k\le n-1\}. 
\end{equation*}

Many variants of the specification property appear in the literature; 
see~\cite{KLO-2016} for a review. We shall make use of the following two forms:
\newcommand{\hWPS}{\hyperlink{hyp.WPS}{\textbf{(WPS)}}\xspace}
\newcommand{\hS}{\hyperlink{hyp.S}{\textbf{(S)}}\xspace}
\begin{description}
\item[\hypertarget{hyp.WPS}{\textbf{(WPS)}}]\label{Cond:WPS}
{\itshape $\varphi$ has the\/ \textup{weak periodic specification property}
if for any $\delta>0$ there is  a sequence of
integers~$\{m_\delta(n)\}_{n\ge n_0}$ such that $0\le m_\delta(n)<n$ for 
$n\ge n_0$, $\lim_{\delta \downarrow 0} \lim_{n\to\infty}n^{-1}m_\delta(n) = 0$, 
and for any $x\in M$ and $n\ge n_0$, we have 
$M_n\cap B_{n-m_\delta(n)}(x,\delta)\ne\varnothing$.}
\item[\hypertarget{hyp.S}{\textbf{(S)}}]\label{Cond:S}
{\itshape $\varphi$ has the\/ \textup{specification property} if
for any $\delta>0$ there is $N(\delta)\ge1$ such that any
$N(\delta)$-separated specification 
$\xi=\{\varphi^{I_i}(x_i)\}_{i\in\llbracket 1,n\rrbracket}$ is $\delta$-shadowed
by some $x\in M$.}
\end{description}

\begin{remark}\label{rBSP}
We shall also refer to \textit{Bowen's specification property\/}~\cite{bowen-1974}
as the property~\hS with the additional constraint that 
$x\in M_{L(\xi)+N(\delta)}$. Bowen's specification obviously implies 
both~\hWPS and~\hS.
\end{remark}

The weak periodic specification property is well suited for the LDP of 
Theorem~\tA. Together with expansiveness (see Definition~\ref{d2.1} below),
it is also sufficient to justify a large part of the thermodynamic formalism
involved in the proof of this result. It is not needed for Theorem~\tB.

The specification Property~\hS is involved in the proof of both
Theorems~\tA and~\tB. However, it is only needed to ensure the
conclusion of the following proposition (see Theorem B in~\cite{EKW-1994}, whose
proof given for case~\hH extends without change to any continuous
$\varphi$). If the conclusion of the latter can be obtained by a different
argument (see Remark~\ref{r4}), then Property~\hS is not needed at all.

\bep\label{Prop:EntropyDensity}
Suppose that $\varphi$ satisfies Condition~\hS.
Then, the set $\EE_\varphi(M)$ of ergodic measures is\/ \textup{entropy-dense} in $\cP_\varphi(M)$, i.e., for any $\IP\in\cP_\varphi(M)$ there exists a 
sequence $\{\IP_m\}\subset\EE_\varphi(M)$ such that
\[
\IP_m\rightharpoonup\IP,\qquad h_\varphi(\IP_m)\to h_\varphi(\IP)
\quad\mbox{as $m\to\infty$}. 
\]
\eep

\subsection{Asymptotic additivity}
In this section we establish three technical results about asymptotically additive
potential sequences. The first  result is related to~\cite[Proposition~A.1~(1)
and Lemma~A.4]{FH-2010}. 
See also~\cite[Proposition~3.2]{bomfim_multifractal_2015}.

\begin{lemma}\label{lem:limiteGGPP}
For all $\IP\in\PP_\varphi(M)$ and~$\GG\in\cA(M)$, the limit
\begin{equation} \label{4.04}
\GG(\IP)=\lim_{n\to\infty}n^{-1}\langle G_n,\IP\rangle
\end{equation}
exists and  is finite. Moreover, for any approximating sequence $\{G^{(k)}\}$ of $\GG$ we have  
\begin{equation} \label{4.04b}
\GG(\IP) = \lim_{k\to\infty} \langle G^{(k)},\IP\rangle.
\end{equation}
The convergence in \eqref{4.04} and \eqref{4.04b} is uniform in $\P\in \cP_\varphi(M)$, and the real-valued function $\IP\mapsto \GG(\IP)$ is continuous on the space~$\PP_\varphi(M)$ endowed with the weak topology.
\end{lemma}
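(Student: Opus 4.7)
\textbf{Proof plan for Lemma~\ref{lem:limiteGGPP}.} The key observation is that for any approximating sequence $\{G^{(k)}\}$ of $\GG$ and any $\IP\in\PP_\varphi(M)$, the invariance of $\IP$ yields $\langle S_n G^{(k)},\IP\rangle = n\langle G^{(k)},\IP\rangle$. Thus the identity
\[
n^{-1}\langle G_n,\IP\rangle - \langle G^{(k)},\IP\rangle
= n^{-1}\langle G_n - S_n G^{(k)},\IP\rangle
\]
holds, and the right-hand side is bounded in absolute value by $n^{-1}\|G_n - S_n G^{(k)}\|_\infty$, uniformly in $\IP\in\PP_\varphi(M)$.

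Given $\e>0$, the definition of asymptotic additivity lets me choose $k=k(\e)$ with $\limsup_{n\to\infty} n^{-1}\|G_n - S_n G^{(k)}\|_\infty < \e$. The previous display then shows that for all sufficiently large $n$,
\[
\bigl|n^{-1}\langle G_n,\IP\rangle - \langle G^{(k)},\IP\rangle\bigr|<\e,
\]
uniformly in $\IP\in\PP_\varphi(M)$. Applying this with two different indices $n_1,n_2$ and using the triangle inequality shows that $\{n^{-1}\langle G_n,\IP\rangle\}_{n\ge1}$ is Cauchy, hence convergent; the same computation, comparing two approximating terms, shows that $\{\langle G^{(k)},\IP\rangle\}_{k\ge1}$ is Cauchy as well. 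Both sequences must have the same limit $\GG(\IP)$ by construction, and letting $n\to\infty$ in the bound above yields $|\GG(\IP)-\langle G^{(k)},\IP\rangle|\le\e$, establishing~\eqref{4.04b}. The uniformity in $\IP$ of the tail bound transfers to both limits.

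Finally, since each $G^{(k)}\in C(M)$, the function $\IP\mapsto\langle G^{(k)},\IP\rangle$ is continuous on $\PP(M)$ endowed with the weak topology, and hence so is its restriction to $\PP_\varphi(M)$. As $\IP\mapsto\GG(\IP)$ is a uniform limit of these continuous functions on $\PP_\varphi(M)$, it is itself continuous. The one subtlety, which will be the only point requiring care, is that $G_n$ need not be continuous, so the naive approach of passing $\IP\mapsto n^{-1}\langle G_n,\IP\rangle$ to the limit in $n$ would not immediately yield continuity; it is crucial to route the argument through the continuous approximants $G^{(k)}$, where the uniformity in $\IP$ obtained above is precisely what makes this possible.
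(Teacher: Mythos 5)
Your argument is correct and follows essentially the same route as the paper's proof: the invariance of $\IP$ converts $\langle S_nG^{(k)},\IP\rangle$ into $n\langle G^{(k)},\IP\rangle$, the triangle inequality through the approximants gives the Cauchy property and the uniform bounds, and continuity follows from uniform convergence of the continuous functions $\IP\mapsto\langle G^{(k)},\IP\rangle$. Your closing remark about why one must route through the $G^{(k)}$ rather than the possibly discontinuous $G_n$ is exactly the right point of care.
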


\begin{proof}{}
For all $k,m,n$ and all $\IP\in\PP_\varphi(M)$ we have
\begin{equation}\label{eq:approxCauchy}
\begin{split}
\left|\frac 1 n \langle G_n, \P\rangle - \frac 1 m \langle G_m, \P\rangle\right| 
&\leq 	\left|\frac 1 n \langle G_n, \P\rangle -  \langle G^{(k)}, 
\P\rangle\right| + \left|\frac 1 m \langle G_m, \P\rangle -  \langle G^{(k)}, 
\P\rangle\right|\\
& \leq \frac 1 n  \left\|  G_n - S_n G^{(k)} \right\|_\infty + \frac 1 m  \left\|  G_m - S_m G^{(k)} \right\|_\infty.
\end{split}
\end{equation}

In view of~\eqref{eq:defasymadd}, we conclude that $\{n^{-1}\langle 
G_n,\IP\rangle\}$ is a Cauchy sequence and hence the limit~\eqref{4.04} exists 
and is finite. Letting $m\to\infty$ in~\eqref{eq:approxCauchy} and using 
again~\eqref{eq:defasymadd}, we conclude that the limit in~\eqref{4.04} is 
uniform in $\IP\in\PP_\varphi(M)$. Since
\begin{equation*}
\sup_{\P\in\cP_\varphi}\left|\frac1n\langle G_n,\P\rangle 
-\langle G^{(k)},\P\rangle\right|  
\leq\frac1n\left\|G_n-S_nG^{(k)}\right\|_\infty,
\end{equation*}
the uniform convergence in~\eqref{4.04} and relation~\eqref{eq:defasymadd} give 
that the convergence in~\eqref{4.04b} is also uniform in $\IP\in\PP_\varphi(M)$. 
This last uniform convergence and the continuity of $\IP\mapsto\langle 
G^{(k)},\IP\rangle$ yield the continuity of $\IP\mapsto\GG(\IP)$.
\hfill\qed
\end{proof}

The second result is a version of Birkhoff's ergodic theorem
for asymptotically additive potentials (see \cite[Proposition A.1]{FH-2010} for the case of asymptotically \textit{sub}additive potentials).

\begin{lemma}\label{lem:ergoasymadd}
Let $\P \in \cP_\varphi(M)$ and $\GG\in\aA(M)$. Then, there exists a $\varphi$-invariant function
$\overline G \in B(M)$ such that $\langle \overline G, \P \rangle = \GG(\P)$ and such that, for
$\P$-almost every $x\in M$,
\begin{equation}\label{eq:birkgn}
\lim_{n\to\infty}\frac 1 n G_n(x)  = \overline G(x).
\end{equation}
Moreover, if $\P\in \EE_\varphi(M)$, then $\overline G(x) = \GG(\P)$ for $\P$-almost every $x\in M$.
\end{lemma}
\begin{proof}{} Let $\{G^{(k)}\}$ be an approximating sequence for $\GG$. 
By Birkhoff's ergodic theorem, there exists a $\varphi$-invariant set $M_* \subset M$ with $\P(M_*) = 1$ such that for all $x\in M_*$ and all $k\geq 1$, the limit
\begin{equation}\label{eq:convGbark}
\overline G^{(k)}(x) : = \lim_{n\to\infty}\frac1nS_n G^{(k)}(x)
\end{equation}
exists.
For all $m,n,k\geq 1$ and $x\in  M_*$, we have
\begin{align*}
\left|\frac{G_n(x)}{n}-\frac{G_m(x)}{m}\right|	& \leq \left|\frac{G_n(x)}{n}-\frac{S_n G^{(k)}(x)}{n}\right| + \left|\frac{G_m(x)}{m}-\frac{S_m G^{(k)}(x)}{m}\right| \\
&\qquad \qquad +  \left|\frac{S_n G^{(k)}(x)}{n}-\frac{S_m G^{(k)}(x)}{m}\right|. 
\end{align*} 
From \eqref{eq:defasymadd} and the existence of \eqref{eq:convGbark}, we deduce that, for each $x\in M_*$, the sequence $\{n^{-1} G_n(x)\}$ is Cauchy, so that
the limit
\begin{equation}\label{eq:convGlimkGkk}
	\overline G(x) := \lim_{n\to\infty} n^{-1} G_n(x)
\end{equation}
exists. We further set $\overline G(x) = 0$ for $x\in M\setminus M_*$. As a consequence of \eqref{eq:defasymadd}, the inequality $\|S_n G^{(k)} - S_n (G^{(k)}) \circ \varphi\|_\infty \leq 2 \|G^{(k)}\|_\infty$ and the fact that $M_*$ is $\varphi$-invariant, we obtain that $\overline G$ is $\varphi$-invariant.\footnote{Notice that we also have $\overline G(x) = \lim_{k\to\infty} \overline G^{(k)}(x)$ for all $x\in M_*$.} Since moreover $\|\GG\|_* < \infty$, the family $\{n^{-1} G_n\}$ is uniformly bounded, so that $\overline G \in B(M)$ and, by the bounded convergence theorem, $\langle \overline G, \P\rangle =  \lim_{n\to\infty}\langle n^{-1}G_n, \P \rangle = \GG(\P)$. This completes the proof of the first statement.

If we further assume that $\P\in \EE_\varphi(M)$, then as a $\varphi$-invariant function, $\overline G$ must be equal to a constant $\P$-almost everywhere, and the claim then follows from \eqref{eq:birkgn}.
\hfill\qed 
\end{proof}

The third result concerns the variations of $\GG\in {\cal A}(M)$. 

\begin{lemma}\label{lem:approxGmn}
For any $\GG=\{G_n\}\in{\cal A}(M)$,\footnote{We note that~\eqref{4.010} is 
proved in~\cite[Lemma 2.1]{zhao_asymptotically_2011}.}
\begin{equation} \label{4.010}
\lim_{\delta\downarrow0}\limsup_{n\to\infty}\sup_{x\in M}\ 
\sup_{y,z\in B_n(x,\delta)}\frac1n\bigl|G_n(y)-G_n(z)\bigr|= 0.
\end{equation}
Moreover, if $m_\delta(n)$ satisfies 
$\lim_{\delta\downarrow0}\limsup_{n\to\infty}\frac1n m_\delta(n) = 0$, then
\begin{align}\label{eq:GmmmmGn}
\lim_{\delta\downarrow0}\limsup_{n\to\infty}
\frac1n\|G_{n-m_\delta(n)}-G_n\|_\infty  &= 0,\\
\lim_{\delta\downarrow0}\limsup_{n\to\infty}\sup_{x\in M}\ 
\sup_{y,z\in B_{n-m_\delta(n)}(x,\delta)} 
\frac 1 n\bigl|G_n(y)-G_n(z)\bigr| &= 0.
\label{4.010ss}
\end{align}
\end{lemma}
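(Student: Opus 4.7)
The plan is to use the approximating sequence $\{G^{(k)}\}\subset C(M)$ from the definition of asymptotic additivity to replace $G_n$ (and $G_{n-m_\delta(n)}$) by the additive quantities $S_nG^{(k)}$ (and $S_{n-m_\delta(n)}G^{(k)}$), exploit the uniform continuity of each $G^{(k)}$ on the compact space $M$, and then take the limits in the order: first $n\to\infty$, then $\delta\downarrow0$, and finally $k\to\infty$, using at each step respectively the definition \eqref{eq:defasymadd} of the approximating sequence, the vanishing of the modulus of continuity of $G^{(k)}$ at $0$, and the hypothesis $\lim_\delta\limsup_n n^{-1}m_\delta(n)=0$.

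For \eqref{4.010}, if $y,z\in B_n(x,\delta)$ then $d(\varphi^jy,\varphi^jz)<2\delta$ for $j=0,\dots,n-1$. Using the triangle inequality and uniform continuity of $G^{(k)}\in C(M)$ with modulus $\omega_k(\cdot)$, I bound
\[
|G_n(y)-G_n(z)|\le 2\|G_n-S_nG^{(k)}\|_\infty+\sum_{j=0}^{n-1}|G^{(k)}(\varphi^jy)-G^{(k)}(\varphi^jz)|\le 2\|G_n-S_nG^{(k)}\|_\infty+n\,\omega_k(2\delta).
\]
Dividing by $n$, taking $\limsup_n$, then $\lim_{\delta\downarrow0}$, and finally $k\to\infty$ yields the claim, since $\omega_k(2\delta)\to0$ as $\delta\to0$ for fixed $k$ and $\limsup_n n^{-1}\|G_n-S_nG^{(k)}\|_\infty\to0$ as $k\to\infty$ by \eqref{eq:defasymadd}.

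For \eqref{eq:GmmmmGn}, set $m=m_\delta(n)$. The identity $S_nG^{(k)}-S_{n-m}G^{(k)}=\sum_{j=n-m}^{n-1}G^{(k)}\circ\varphi^j$ gives $\|S_nG^{(k)}-S_{n-m}G^{(k)}\|_\infty\le m\|G^{(k)}\|_\infty$, hence
\[
\frac{1}{n}\|G_n-G_{n-m}\|_\infty\le\frac{1}{n}\|G_n-S_nG^{(k)}\|_\infty+\frac{m}{n}\|G^{(k)}\|_\infty+\frac{n-m}{n}\cdot\frac{1}{n-m}\|G_{n-m}-S_{n-m}G^{(k)}\|_\infty.
\]
For $\delta$ small enough we have $m/n<1/2$ eventually, so $n-m\to\infty$ and the last factor is controlled by $\limsup_{\ell\to\infty}\ell^{-1}\|G_\ell-S_\ell G^{(k)}\|_\infty$. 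Taking $\limsup_n$ then $\lim_{\delta\downarrow0}$ kills the middle term (even though $\|G^{(k)}\|_\infty$ depends on $k$, it is fixed before the $\delta$-limit), leaving a bound that tends to $0$ as $k\to\infty$ by \eqref{eq:defasymadd}. For \eqref{4.010ss} I combine the two previous arguments: for $y,z\in B_{n-m}(x,\delta)$ the control $d(\varphi^jy,\varphi^jz)<2\delta$ holds only for $j<n-m$, so
\[
|S_nG^{(k)}(y)-S_nG^{(k)}(z)|\le(n-m)\,\omega_k(2\delta)+2m\|G^{(k)}\|_\infty,
\]
which after the same chain of limits gives the result.

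The argument is essentially bookkeeping and I do not expect a serious obstacle; the only delicate point is the ordering of the three limits. The term $\|G^{(k)}\|_\infty$ does grow with $k$, but it is multiplied by $m_\delta(n)/n$, which by hypothesis vanishes at the $\delta$-step performed \emph{before} the $k$-step, so nothing blows up. Taking the limits in any other order would fail, which is why the statement is organized as a nested $\lim_\delta\limsup_n$ rather than a joint limit.
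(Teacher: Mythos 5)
Your proposal is correct and follows essentially the same route as the paper: approximate $G_n$ by $S_nG^{(k)}$, use the uniform continuity of $G^{(k)}$ and the bound $\|S_nG^{(k)}-S_{n-m}G^{(k)}\|_\infty\le m\|G^{(k)}\|_\infty$, and take the limits in the order $n$, $\delta$, $k$. The only cosmetic difference is that for \eqref{4.010ss} the paper reduces to \eqref{4.010} and \eqref{eq:GmmmmGn} via the triangle inequality $|G_n(y)-G_n(z)|\le 2\|G_{n-m}-G_n\|_\infty+|G_{n-m}(y)-G_{n-m}(z)|$, whereas you redo the estimate directly; both are fine.
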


\begin{proof}{} 
We first prove~\eqref{4.010}. Let $\{G^{(k)}\}\subset C(M)$ be an approximating 
sequence for $\GG$. Let $\e>0$, and fix $k$ large enough so that 
$n^{-1}\bigl\|G_n-S_nG^{(k)}\bigr\|_\infty <\e$ for all $n\geq N(k)$. Let 
$\delta>0$ be such that $|G^{(k)}(x_1)-G^{(k)}(x_2)|<\e$ for 
$d(x_1,x_2)< 2\delta$. Then, for all $n\geq N(k)$, $x\in M$, and 
$y,z\in B_{n}(x,\delta)$, we have 
\begin{align*}
\bigl|G_n(y)-G_n(z)\bigr|\leq 2n\e+\bigl|S_n G^{(k)}(y)-S_n G^{(k)}(z)\bigr|  
\leq 3n\e.
\end{align*}
Since $\e>0$ is arbitrary, this establishes~\eqref{4.010}.

To prove~\eqref{eq:GmmmmGn}, fix $\e>0$ and $k,N$ large enough so that 
$\|S_{n}G^{(k)}-G_n\|_\infty\leq n\e$ for $n\geq N$. Then, for any fixed 
$\delta>0$, we have for all $n$ large enough that $n-m_\delta(n)\geq N$, and 
hence that
\begin{align*}
\|G_{n-m}-G_n\|_\infty
&\leq\|G_{n-m}-S_{n-m} G^{(k)}\|_\infty 
+\|S_{n-m}G^{(k)}-S_{n}G^{(k)}\|_\infty+\|S_{n}G^{(k)}-G_n\|_\infty\\
&\leq(n-m)\e+\|S_{n-m}G^{(k)}-S_{n}G^{(k)}\|_\infty+n\e
\leq 2n\e+m\,\|G^{(k)}\|_\infty, 
\end{align*}
where $m=m_\delta(n)$. Using the condition on $m_\delta(n)$, this gives 
\[
\lim_{\delta\downarrow0}\limsup_{n\to\infty}
\frac1n\|G_{n-m_\delta(n)}-G_n\|_\infty\leq 2\e.
\]
Since $\e>0$ is arbitrary, we obtain~\eqref{eq:GmmmmGn}.

Finally, to prove~\eqref{4.010ss}, we observe that for each $x\in M$ and 
$y,z\in B_{n-m}(x,\delta)$,
\[
\bigl|G_n(y)-G_n(z)\bigr|\leq 2\|G_{n-m}-G_n\|_\infty
+\bigl|G_{n-m}(y)-G_{n-m}(z)\bigr|.
\]
The required relation~\eqref{4.010ss} follows now from~\eqref{4.010}, 
\eqref{eq:GmmmmGn}, and the condition on $m_\delta(n)$.
\hfill\qed 
\end{proof}

\subsection{Topological pressure}\label{sec:TP}

We now introduce a notion of \textit{topological pressure\/} associated with
$\varphi$ which generalizes the usual concept of topological pressure to
asymptotically additive potential sequences. Given $\e>0$ and an integer
$n\ge1$,  a finite set~$E\subset M$ is called $(\e,n)$\textit{-separated\/} if
$y\notin B_n(x,\e)$ for any distinct $x,y\in E$, and $(\e,n)$\textit{-spanning\/}
if $\{B_n(x,\e)\}_{x\in E}$ covers~$M$. For $\GG=\{G_n\}\in\cA(M)$, $\e>0$,
and $n\ge1$ we define
\begin{align*}
S(\GG,\e,n)&=\inf\biggl\{\sum_{x\in E}\eee^{G_n(x)}: 
E\text{ is }(\e,n)\text{-spanning}\biggr\},\\
N(\GG,\e,n)&=\sup\biggl\{\sum_{x\in E}\eee^{G_n(x)}:
E\text{ is }(\e,n)\text{-separated}\biggr\}.
\end{align*}
We shall show that 
\begin{align}
\lim_{\e\downarrow0}\limsup_{n\to\infty}\frac1n\log S(\GG,\e,n)
&=\lim_{\e\downarrow0}\liminf_{n\to\infty}\frac1n\log S(\GG,\e,n),\label{2.3}\\
\lim_{\e\downarrow0}\limsup_{n\to\infty}\frac1n\log N(\GG,\e,n)
&=\lim_{\e\downarrow0}\liminf_{n\to\infty}\frac1n\log N(\GG,\e,n).\label{2.4}
\end{align}
Moreover, these two quantities coincide and their common value, which we denote
by $\fp_\varphi(\GG)$, will be called the \textit{topological pressure\/}
of~$\GG$ with respect to~$\varphi$. This result is of course well known in the
additive case $\GG=\{S_n G\}$ when $G\in C(M)$, and we shall write
\begin{equation*}
\fp_\varphi^0(G)=\fp_\varphi(\{S_n G\}).
\end{equation*}
Besides existence, we shall also establish some basic properties of
$\fp_\varphi$. They will be proven by approximation arguments, starting from
the corresponding well-known results in the additive case.
\bep\label{Prop:TopP}
\ben
\item The relations  \eqref{2.3} and \eqref{2.4}  hold and the respective quantities coincide. Moreover, the map 
\[\cA(M)\ni \GG \mapsto\fp_\varphi(\GG)\in]-\infty,+\infty]\]
is convex, and either $\fp_\varphi(\GG)=+\infty$ for all $\GG\in \cA(M)$, or
$\fp_\varphi(\GG)\in\rr$ for all $\GG\in \cA(M)$. 

\item If $\{G^{(k)}\}$ is an approximating sequence for $\GG$, then
\begin{equation} \label{4.08n}
\fp_\varphi(\GG) = \lim_{k\to\infty}\fp^0_\varphi(G^{(k)}).
\end{equation}

\item The topological entropy of $\varphi$ satisfies
\begin{equation}\label{eq:Htopsup}
h_\mathrm{Top}(\varphi)=\fp_\varphi^0(0)
=\sup_{\QQ\in\PP_\varphi(M)}h_\varphi(\QQ)
=\sup_{\QQ\in\EE_\varphi(M)}h_\varphi(\QQ).
\end{equation}

In particular, $\fp_\varphi(\GG)$ is finite for all $\GG\in \aA(M)$ if and only if $\varphi$ has finite topological entropy.
\item If the topological entropy of~$\varphi$ is finite, then for any $\GG,\GG'\in\aA(M)$ we have
\begin{equation}\label{eq:pGGGGprime}
|\fp_\varphi(\GG)-\fp_\varphi(\GG')| \leq \|\GG-\GG'\|_*.
\end{equation}
\item For any $\GG\in\aA(M)$ we have 
\begin{equation} \label{4.05}
\fp_\varphi(\GG)=\sup_{\IP\in\PP_\varphi(M)}\bigl(\GG(\IP)+h_\varphi(\IP)\bigr). 
\end{equation}
\item If the entropy map $\cP_\varphi(M)\ni\IP\mapsto h_\varphi(\IP)$ is upper 
semicontinuous, then for  any $\P\in \cP_\varphi(M)$, 
\beq
h_\varphi(\P)=\inf_{\GG\in \cA(M)}(\fp_\varphi(\GG)- \GG(\P)),
\label{Eq:hvarExpAsympt}
\eeq
and for any  $\GG \in \cA(M)$ and $\P\in \cP_\varphi(M)$ we have
\beq
h_\varphi(\P)=\inf_{G \in C(M)}(\fp_\varphi(\GG_G)- \GG_G(\P)),
\label{Eq:hvarExpAsymptre}
\eeq
where $\GG_G  = \{G_n + S_n G\}$.
\een
\eep
\begin{proof}{} 
\textit{Parts (1) and (2)\/}. As we have already mentioned, if $\GG=\{S_n G\}$ for
some $G\in C(M)$, then it is well known (see~\cite[Sections~3.1.b 
and~20.2]{KH1995} or~\cite[Section~9.1]{walters1982}) that the four quantities
in~\eqref{2.3} and~\eqref{2.4} are equal and define $\fp_\varphi^0(G)$. To
extend this result to any $\GG=\{G_n\}\in{\cal A}(M)$, we start 
with~\eqref{2.3}. Note that for any finite set $E\subset M$ and any 
$\GG'=\{G_n'\}\in{\cal A}(M)$ we have
\begin{equation}\label{sunday}
\left|\frac1n\log\sum_{x\in E}\eee^{G_n(x)}
-\frac1n\log\sum_{x\in E}\eee^{G_n'(x)}\right|
\leq\frac1n\bigl\|G_n-G_n'\bigr\|_\infty.
\end{equation}
It follows that
\begin{equation}\label{eq:SdiffGGp}
\frac1n\left|\log S(\GG,\e,n)-\log S(\GG',\e,n)\right|
\leq\frac1n\bigl\|G_n-G_n'\bigr\|_\infty.
\end{equation}
Taking  $\GG'=\{S_n G^{(k)}\}$ for some approximating sequence $\{G^{(k)}\}$ of
$\GG$, we find
\begin{equation*}
\frac1n\left|\log S(\GG,\e,n)-\log S(\{S_n G^{(k)}\},\e,n)\right|
\leq\frac1n\bigl\|G_n-S_nG^{(k)}\bigr\|_\infty.
\end{equation*}
Relation~\eqref{eq:defasymadd} and the fact that $\fp_\varphi^{0}(G^{(k)})$ is 
well defined for all $k$ give that~\eqref{2.3} holds and that the limits are 
equal to  $\lim_{k\to\infty}\fp^0_\varphi(G^{(k)})$. Relation~\eqref{sunday} 
gives that~\eqref{eq:SdiffGGp} also holds when $S(\argdot,\e,n)$ is
replaced by $N(\argdot,\e,n)$, and the above argument yields that the four 
limits in~\eqref{2.3} and~\eqref{2.4} are equal and that~\eqref{4.08n} 
holds.\footnote{See also~\cite[Lemma 1.1]{iommi_weak_2016} 
and~\cite[Lemma 2.3]{zhao_asymptotically_2011} for proofs of~\eqref{4.08n}.}

Since $G_n\mapsto\log\sum_{x\in E}\eee^{G_n(x)}$ is convex by H\"older's 
inequality, so is $\GG\mapsto \fp_\varphi(\GG)$.

By applying~\eqref{eq:nm1Gnfini} to the asymptotically additive potential 
$\GG-\GG'$, we obtain that the right-hand side of~\eqref{eq:SdiffGGp} is bounded 
uniformly in $n$. As a consequence, we have $\fp_\varphi(\GG)=\infty\iff 
\fp_\varphi(\GG')=\infty $, which yields the last statement of Part~(1).

\textit{Parts (3) and (4)\/}. Relations~\eqref{eq:Htopsup} are well known
(see~\cite[Theorem~8.6 and Theorem~9.7~(i)]{walters1982}), 
and~\eqref{eq:pGGGGprime} immediately follows from~\eqref{eq:SdiffGGp}. 

\textit{Part (5)\/}. The variational principle~\eqref{4.05} is established
in~\cite[Theorem 3.1]{FH-2010} for asymptotically \textit{sub}additive
potential sequences. We include here a proof in the (simpler) asymptotically
additive case (see also~\cite[Theorem 7.2.1]{Bar2011}).

If $h_\varphi(\IP)=+\infty$ for some $\IP\in\PP_\varphi(M)$, then 
$\fp_\varphi^0(0)=+\infty$, so that $\fp_\varphi(\GG)=+\infty$ for all 
$\GG\in\aA(M)$. Relation~\eqref{4.05} is obvious in this case. Assume now that 
$h_\varphi(\IP)<+\infty$ for all $\IP\in\PP_\varphi(M)$. 
By~\cite[Theorem~9.10]{walters1982}, for any $G\in C(M)$,
\beq
\fp^0_\varphi(G)
=\sup_{\P\in\cP_\varphi(M)}\left(h_\varphi(\P)+\langle G,\P\rangle\right).
\label{Eq:PvarExp}
\eeq
By Lemma~\ref{lem:limiteGGPP}, the sequence 
$f_k(\P):=\langle G^{(k)},\P\rangle+h_\varphi(\IP)$ converges uniformly to 
$\GG(\IP)+h_\varphi(\IP)$ on~$\cP_\varphi(M)$. It follows that
\[
\fp_\varphi(\GG)=\lim_{k\to\infty}\fp_\varphi^0(G^{(k)}) 
=\lim_{k\to\infty}\sup_{\IP\in\PP_\varphi(M)} f_k(\IP)
=\sup_{\IP\in\PP_\varphi(M)}\lim_{k\to\infty} f_k(\IP)
=\sup_{\IP\in\PP_\varphi(M)}\bigl(\GG(\IP)+h_\varphi(\IP)\bigr),
\]
where the second equality uses~\eqref{Eq:PvarExp}.

\textit{Part (6)\/}. We start by recalling (see~\cite[Theorem~9.12]{walters1982}) that 
if $\cP_\varphi(M)\ni\IP\mapsto h_\varphi(\IP)$ is upper semicontinuous, then for 
all $\P\in \cP_\varphi$,
\beq
h_\varphi(\P)=\inf_{G\in C(M)}(\fp^0_\varphi(G)-\langle G,\P\rangle).
\label{Eq:hvarExp}
\eeq
We now prove~\eqref{Eq:hvarExpAsympt} and~\eqref{Eq:hvarExpAsymptre}.
Both ``$\leq$'' inequalities are an immediate consequence of~\eqref{4.05}.

The ``$\geq$'' inequality in~\eqref{Eq:hvarExpAsympt} is immediate 
by~\eqref{Eq:hvarExp} since for $G\in C(M)$ and $\GG=\{S_n G\}$ we have 
$\fp_\varphi(\GG)-\GG(\P)=\fp^0_\varphi(G)-\langle G,\P\rangle$.
To prove the ``$\geq$'' inequality in~\eqref{Eq:hvarExpAsymptre},  
fix~$\e>0$ and use~\eqref{Eq:hvarExp} to find $W_\e\in C(M)$ such that 
\begin{equation} \label{4.014}
h_\varphi(\Q)\ge \fp^0_\varphi(W_\e)-\langle W_\e,\Q\rangle-\e. 
\end{equation}
Consider the sequence $G_k=W_\e-G^{(k)}$, where $\{G^{(k)}\}$ is an approximating 
sequence for $\GG$. In view of~\eqref{4.04b} and~\eqref{eq:pGGGGprime}, we have
\[
\GG_{G_k}(\Q)\to \langle W_\e,\Q\rangle, \quad 
\fp_\varphi(\GG_{G_k})\to \fp^0_\varphi(W_\e)
\]
as $k\to\infty$.
Combining this with~\eqref{4.014}, we see that for a sufficiently large~$k$,  
\[
\GG_{G_k}(\Q)-\fp_\varphi(\GG_{V_k})\ge
\langle W_\e,\Q\rangle-\fp_\varphi^0(W_\e)-\e\ge -h_\varphi(\Q)-2\e.
\]
Since~$\e>0$ is arbitrary, the proof is complete.\hfill\qed
\end{proof}

\subsection{Expansiveness}
\label{s2.4}

Besides the specification
properties~\hWPS and~\hS, we shall need the following assumptions 
to formulate our main results. The first concerns the regularity of the
entropy map and is needed in the proof of both Theorems~\tA and~\tB.
The second concerns the approximation of  the pressure in terms of
periodic orbits and is only needed in the proof of Theorem~\tA.
\newcommand{\hUSCE}{\hyperlink{hyp.USCE}{\textbf{(USCE)}}\xspace}
\newcommand{\hPAP}{\hyperlink{hyp.PAP}{\textbf{(PAP)}}\xspace}
\begin{description}
\item[\hypertarget{hyp.USCE}{\textbf{(USCE)}}]\label{usce} 
{\itshape $h_\mathrm{Top}(\varphi)<\infty$, and the map 
$\Q\in\PP_\varphi(M)\mapsto h_\varphi(\Q)$ is upper semicontinuous.}
\item[\hypertarget{hyp.PAP}{\textbf{(PAP)}}]\label{pap}
{\itshape For any $n\ge1$ the set $M_n$ is finite, and for any $\GG\in \cA(M)$, 
the topological pressure satisfies}
\beq
\label{2.5}
\fp_\varphi(\GG)=\lim_{n\to\infty}\frac1n\log\sum_{x\in M_n}\eee^{G_n(x)}. 
\eeq 
\end{description}
We note that, by Proposition~\ref{Prop:TopP}, Condition~\hUSCE implies that
the pressure $\fp_\varphi(\GG)$ is finite for all $\GG\in \cA(M)$ and that the 
entropy satisfies the variational principle~\eqref{Eq:hvarExpAsympt}. In 
particular, $h_\varphi(\QQ)\in[0,\infty[$ for all $\QQ\in\PP_\varphi$.

We now discuss some criteria ensuring Conditions~\hUSCE 
and~\hPAP. 
Together with the specification property, expansiveness is often considered as 
characteristic of chaotic dynamics.

\begin{definition} \label{d2.1}
The map~$\varphi$ is said to be\/ \textup{forward expansive} if there is $r>0$ such 
that if $x,y\in M$ satisfy the inequality $d(\varphi^k(x),\varphi^k(y))\le r$ for 
all $k\in\zz_+$, then $x=y$. A homeomorphism $\varphi$ is called\/
\textup{expansive} if there is $r>0$ such that if $x,y\in M$ satisfy the inequality 
$d(\varphi^k(x),\varphi^k(y))\le r$ for all $k\in\zz$, then $x=y$.
\end{definition}

The number $r$ for which this property holds is called the expansiveness (or 
expansivity) constant of~$\varphi$. 
We note that the expansiveness constant depends on the metric~$d$, whereas 
expansiveness only depends on the induced topology on~$M$. 
\bet\label{Thm:USCh}
\ben
\item If $\varphi$ is forward expansive or expansive, then Condition~\hUSCE
holds.
\item If, in addition, $\varphi$ satisfies Condition~\hWPS, then 
Condition~\hPAP holds.
\een
\eet
\begin{proof}{}
Part~(1) follows from Corollary~7.11.1 and Theorem 8.2 in~\cite{walters1982}, whose proofs 
immediately extend from case~\hH to case~\hC (see 
also~\cite[Lemma~2.4.4]{Bar2011}). 

Part~(2) is also well known in the additive case, and its proof only requires 
notational modifications in the asymptotically additive case. We include this 
proof for completeness.

Let ${\cal I}=\zz_+$ in the forward expansive case, and ${\cal I}=\zz$ in the 
expansive case. The uniform continuity of~$\varphi$ and periodicity imply that 
there is $\delta>0$ such that if $x,y\in M_n$ and $d(x,y)\le\delta$, then 
$d(\varphi^k(x),\varphi^k(y))\le r$ for all $k\in\II$, where $r$ is the 
expansivity constant. By expansiveness, such points must coincide, and 
compactness implies that~$M_n$ may contain only finitely many points.

Let $\GG\in\cA(M)$, and let $\e\in{}]0,r[$. Then, in view of periodicity, for any 
$n\ge1$ the set $M_n$ is $(\e,n)$-separated. It follows that 
\[
\sum_{x\in M_n}\eee^{G_n(x)}\le N(\GG,\e,n),
\]
whence, recalling representation~\eqref{2.4} for the pressure, we conclude 
that
\begin{equation} \label{2.6}
\limsup_{n\to\infty}\frac1n\log \sum_{x\in M_n}\eee^{G_n(x)}\le\fp_\varphi(\GG). 
\end{equation}
On the other hand, since~$\varphi$ satisfies Condition~\hWPS, the set 
$M_n$ is $(\e,n-m_\e(n))$-spanning. It follows that 
\[
\sum_{x\in M_n}\eee^{G_n(x)}\ge S(\GG,\e,n-m_\e(n)).
\]
Combining this with representation~\eqref{2.3}, and using the condition on 
$m_\e(n)$, we see that 
\begin{equation} \label{2.7}
\liminf_{n\to\infty}\frac1n\log \sum_{x\in M_n}\eee^{G_n(x)}\ge\fp_\varphi(\GG). 
\end{equation}
Inequalities~\eqref{2.6} and~\eqref{2.7} imply the required relation~\eqref{2.5}.
\hfill\qed
\end{proof}

Recall the variational principle~\eqref{4.05}. A measure~$\IP\in\PP_\varphi(M)$ 
is called an \textit{equilibrium state\/} for $\GG\in {\cal A}(M)$ if
\begin{equation} \label{2.61}
\fp_\varphi(\GG)=\GG(\IP)+h_\varphi(\IP). 
\end{equation}
We refer the reader to Section~9.5 in~\cite{walters1982} for a discussion of 
equilibrium states. The following proposition implies, in particular, that there 
always exists at least one equilibrium state if~\hUSCE and~\hPAP are assumed.
\begin{proposition} \label{p2.8}
Assume that Conditions~\hUSCE and~\hPAP hold, and let
\begin{equation*}
\overline\IP_n = \frac1n\sum_{i=0}^{n-1} \P_n \circ \varphi^{-i},
\end{equation*}
where $\IP_n$ is defined by~\eqref{add}. If~$\IP$ is a weak limit point of the 
sequence~$\{\overline\IP_n\}$, then~$\IP$ is an equilibrium state for~$\GG$. In 
particular, if~$\GG$ is additive, then $\overline\IP_n=\IP_n$, and any limit 
point of~$\{\IP_n\}$ is an equilibrium state.
\end{proposition}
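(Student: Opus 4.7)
The plan is built around the Shannon-entropy identity for the finitely-supported measure~$\IP_n$. Since $\IP_n(\{x\}) = Z_n^{-1}\eee^{G_n(x)}$, a direct computation gives
\begin{equation*}
\log Z_n = H(\IP_n) + \langle G_n, \IP_n\rangle,\qquad H(\IP_n) := -\sum_{x\in M_n}\IP_n(\{x\})\log\IP_n(\{x\}),
\end{equation*}
which together with~\hPAP rewrites as $\fp_\varphi(\GG) = \lim_n n^{-1}\bigl(H(\IP_n)+\langle G_n,\IP_n\rangle\bigr)$. Next I would check that any weak limit point~$\IP$ of $\{\overline\IP_n\}$ lies in~$\PP_\varphi(M)$: since $\varphi^n = \Id$ on the support $M_n$ of~$\IP_n$, one has $\IP_n\circ\varphi^{-n} = \IP_n$, whence a one-line telescoping yields $\overline\IP_n\circ\varphi^{-1} = \overline\IP_n$, and weak closedness of $\PP_\varphi(M)$ gives $\IP\in\PP_\varphi(M)$.

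Pick a subsequence $\{n_j\}$ along which $\overline\IP_{n_j}\rightharpoonup\IP$. To show $n_j^{-1}\langle G_{n_j},\IP_{n_j}\rangle\to\GG(\IP)$, I would use an approximating sequence $\{G^{(k)}\}$ of~$\GG$. The crucial identity $\langle S_n G^{(k)},\IP_n\rangle = n\langle G^{(k)},\overline\IP_n\rangle$ is immediate from the definition of~$\overline\IP_n$, and combined with
\begin{equation*}
\Bigl|n^{-1}\langle G_n,\IP_n\rangle - \langle G^{(k)},\overline\IP_n\rangle\Bigr| \leq n^{-1}\|G_n - S_n G^{(k)}\|_\infty,
\end{equation*}
it allows one to let first $n_j\to\infty$ (weak convergence), then $k\to\infty$ (using~\eqref{eq:defasymadd} and Lemma~\ref{lem:limiteGGPP}) to conclude $n_j^{-1}\langle G_{n_j},\IP_{n_j}\rangle\to\GG(\IP)$. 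Substituting this back into the entropy identity gives
\begin{equation*}
\lim_{j\to\infty}\frac{1}{n_j}H(\IP_{n_j}) = \fp_\varphi(\GG) - \GG(\IP);
\end{equation*}
since the variational principle~\eqref{4.05} provides $\fp_\varphi(\GG) \geq \GG(\IP) + h_\varphi(\IP)$, establishing $\IP$ as an equilibrium state reduces to proving the reverse inequality $\limsup_j n_j^{-1}H(\IP_{n_j}) \leq h_\varphi(\IP)$.

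This last bound is the heart of the proof, and I would establish it via a Misiurewicz-type argument. Fix $\epsilon>0$ and choose a finite Borel partition $\alpha$ of~$M$ with diameters below~$\epsilon$ and $\IP(\partial A) = 0$ for every $A\in\alpha$ (possible since only countably many values of~$\epsilon$ are excluded). A standard estimate---combining the decomposition of~$\alpha^n$ into a join of translates of~$\alpha^q$ plus at most $2q$ boundary factors, with concavity of the partition entropy in the measure---yields, for all $n\geq q\geq 1$,
\begin{equation*}
\frac{1}{n}H_{\alpha^n}(\IP_n) \leq \frac{1}{q}H_{\alpha^q}(\overline\IP_n) + \frac{2q\log|\alpha|}{n}.
\end{equation*}
Since $\IP(\partial A^q) = 0$ for every atom $A^q\in\alpha^q$, the Portmanteau theorem gives $H_{\alpha^q}(\overline\IP_{n_j})\to H_{\alpha^q}(\IP)$ as $j\to\infty$; sending first $n_j\to\infty$ and then $q\to\infty$ yields $\limsup_j n_j^{-1}H_{\alpha^{n_j}}(\IP_{n_j})\leq h_\varphi(\IP,\alpha)\leq h_\varphi(\IP)$. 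The delicate residual point is to control the gap $H(\IP_n) - H_{\alpha^n}(\IP_n) \leq \log K_n(\alpha)$, where $K_n(\alpha)$ is the maximal cardinality of $M_n\cap A$ over $A\in\alpha^n$: one needs $n^{-1}\log K_n(\alpha)\to 0$ for a suitably chosen~$\alpha$. This vanishes automatically when $\varphi$ is expansive and $\alpha$ has diameter below the expansivity constant, and under the bare hypotheses~\hUSCE and~\hPAP it must be extracted from uniform continuity of~$\varphi$ together with the periodic-orbit content encoded in~\hPAP.
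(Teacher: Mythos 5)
Your route is genuinely different from the paper's, and it contains one unresolved gap which you yourself flag but do not close. The paper never introduces the Shannon entropy $H(\IP_n)$ or any partition argument. Instead it exploits convexity of $\GG\mapsto\log Z_n(\GG)$: the tangent inequality $\log Z_n(\GG+\GG')-\log Z_n(\GG)\ge\langle G_n',\IP_n\rangle$, after replacing $(\GG,\GG')$ by $(\{G_n\circ\varphi^i\},\{G_n'\circ\varphi^i\})$ and averaging over $i$ (legitimate because $M_n$ is $\varphi$-invariant, so $Z_n$ is unchanged), yields $\log Z_n(\GG+\GG')-\log Z_n(\GG)\ge\langle G_n',\overline\IP_n\rangle$. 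Dividing by $n_k$ and passing to the limit along the convergent subsequence, \textbf{(PAP)} turns the left-hand side into $\fp_\varphi(\GG+\GG')-\fp_\varphi(\GG)$ while Lemma~\ref{lem:limiteGGPP} turns the right-hand side into $\GG'(\IP)$; taking the supremum over $\GG'$ and invoking the duality formula~\eqref{Eq:hvarExpAsympt} (valid under \textbf{(USCE)}) gives $h_\varphi(\IP)\ge\fp_\varphi(\GG)-\GG(\IP)$, which together with~\eqref{4.05} is exactly~\eqref{2.61}. That argument uses nothing beyond the two stated hypotheses.

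Your first three steps (the identity $\log Z_n=H(\IP_n)+\langle G_n,\IP_n\rangle$, the invariance of $\overline\IP_n$ and hence of $\IP$, and the convergence $n_j^{-1}\langle G_{n_j},\IP_{n_j}\rangle\to\GG(\IP)$) are correct, and the Misiurewicz estimate combined with the Portmanteau argument does give $\limsup_j n_j^{-1}H_{\alpha^{n_j}}(\IP_{n_j})\le h_\varphi(\IP)$. The gap is exactly where you place it: under \textbf{(USCE)} and \textbf{(PAP)} alone there is no control on $n^{-1}\log K_n(\alpha)$. Condition \textbf{(PAP)} only constrains the exponential growth rate of $\sum_{x\in M_n}\eee^{G_n(x)}$; it says nothing about how many points of $M_n$ may cluster inside a single atom of $\alpha^n$, and uniform continuity of $\varphi$ provides no separation of distinct periodic orbits. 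Your argument does close under the additional assumption that $\varphi$ is (forward) expansive with constant $r$ and $\mathrm{diam}(\alpha)<r$ --- then periodicity extends $n$-step closeness to all times, so each atom of $\alpha^n$ meets $M_n$ in at most one point --- and that covers all the applications in the paper (cf.\ Theorem~\ref{Thm:USCh}), but it proves a strictly weaker statement than the proposition as formulated. As written, the final step is an assertion rather than a proof; to obtain the result under the stated hypotheses you would need to switch to a duality argument of the type sketched above.
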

\begin{proof}{} 
Let a sequence $n_k\to\infty$ be such that 
$\overline\IP_{n_k}\rightharpoonup\IP$. Since $\overline \P_n$ is invariant for 
all $n$, so is $\P$. In view of H\"older's inequality, the function 
$\GG \mapsto\log Z_n(\GG)\in\rr$ is convex. Moreover, for any 
$\GG,\GG'\in\aA(M)$, the function $\alpha \mapsto f(\alpha):=\log Z_n(\GG+\alpha\GG')$ is
differentiable, and its derivative at zero is given by 
$f'(0)=\langle G_n',\IP_n\rangle$. By convexity, we have $f(1)-f(0)\ge f'(0)$, 
which gives
\[
\log Z_{n}(\GG+\GG')-\log Z_{n}(\GG)\ge  \langle G_{n}',\IP_{n}\rangle. 
\]
Replacing $\GG=\{G_n\}$ and $\GG'=\{G_n'\}$ with $\{G_n\circ\varphi^i\}$ and 
$\{G_n'\circ\varphi^i\}$, respectively, using the relation 
$Z_n(\GG)=Z_n(\{G_n\circ\varphi^i\})$, and averaging with respect to~$i$, we 
derive 
\[
\log Z_{n}(\GG+\GG')-\log Z_{n}(\GG)\ge  \langle G_{n}',\overline\IP_{n}\rangle.
\]
We set $n=n_k$ in the above inequality, divide it by~$n_k$, and pass to the 
limit $k\to\infty$. By~\eqref{4.04} and the uniformity of the limit on invariant 
measures, the right-hand side converges to $\GG'(\IP)$, and in view 
of~\eqref{2.5}, the left-hand side converges to 
$\fp_\varphi(\GG+\GG')-\fp_\varphi(\GG)$. This leads to   the inequality 
$\fp_\varphi(\GG+\GG')-\fp_\varphi(\GG)\ge\GG'(\IP)$, which can be rewritten as
\[
\fp_\varphi(\GG+\GG')- ( \GG+\GG')(\IP)\ge\fp_\varphi(\GG)-\GG(\IP).
\]
Taking the supremum over $\GG'\in \cA(M)$ and invoking~\eqref{Eq:hvarExpAsympt}, 
we arrive at~\eqref{2.61}. The statement about the case where $\GG$ is additive 
is immediate, since then $\P_n$ is invariant.\hfill\qed
\end{proof}

\begin{remark}
None of the quantities appearing in~\eqref{2.61} depend on the specific choice
of potential $\GG$ within a given equivalence class (in the sense of
Remark~\ref{rem:equivrel}), and hence the equilibrium states depend only on the
equivalence class. The limit points of~$\{\overline\IP_n\}$, however, may depend
on the specific choice of~$\GG$ in the equivalence class. It is an interesting
question to describe potentials $\GG\in\aA(M)$ for which the invariant weak
limit points of~$\{\IP_n\}$ are equilibrium states.
\end{remark}

\section{Periodic Orbits Fluctuation Principle}
\label{s2}

\subsection{LDP for empirical measures}

Let $\GG\in{\cal A}(M)$. Recall that the sequence of probability measures
$\{\IP_n\}$ is defined by~\eqref{add}, and the sequence of  empirical
measures~$\{\mu_n^x\}$ by~\eqref{oc-me}. For a fixed $n\ge1$, we
regard~$\mu_n^x$ as a random variable on~$M$ with range in the space of
probability measures~$\PP(M)$ endowed with the weak topology and the
corresponding Borel structure.

\begin{theorem}\label{t1.9}
Suppose that Conditions~\hUSCE, \hWPS, \hS
and~\hPAP 
hold. Then: 
\ben
\item The LDP holds for~$\{\mu_n^\argdot\}$ under the laws~$\IP_n$, with the 
lower semicontinuous convex rate function $\I:\PP(M)\to[0,+\infty]$ defined by 
\begin{equation}\label{4.012first}
\I(\Q)=
\left\{
\begin{array}{cl}
-\GG(\Q)-h_\varphi(\Q)+\fp_\varphi(\GG)
&\quad\text{for }\Q\in\PP_\varphi(M),\\[4pt] 
+\infty&\quad\text{otherwise}.
\end{array}
\right.
\end{equation}
In other words,  for any Borel subset $\Gamma\subset\PP(M)$, we have 
\begin{equation}\label{2.17}
\begin{split}
-\inf_{\Q\in\bdot\Gamma}{\I}(\Q)
&\leq\liminf_{n\to\infty}n^{-1}\log\IP_n\{\mu_n^\argdot\in\Gamma\}\\[2mm]
&\le\limsup_{n\to\infty}n^{-1}\log\IP_n\{\mu_n^\argdot\in\Gamma\}
\le -\inf_{\Q\in\overline\Gamma}{\I}(\Q),
\end{split}
\end{equation}
where $\bdot\Gamma$ and~$\overline\Gamma$ stand, respectively, for the interior 
and closure of~$\Gamma$. 
\item  For any $\cV = \{V_n\}\in {\cal A}(M)$, the sequence $\frac 1 n V_n$ under 
the laws~$\P_n$ satisfies the  LDP with  the good convex rate function 
$I:\R\to[0,+\infty]$ defined  by the contraction relation\footnote{Recall that 
for~$\VV\in {\cal A}(M)$, the quantity~$\VV(\Q)$ is defined as \eqref{4.04}.}
\begin{equation*}
I(s)=\inf\{\I(\Q): \Q\in\PP_\varphi(M),\VV(\Q)=s\}.
\end{equation*}
In other words,  for any Borel subset $\Gamma\subset\rr$, we have 
\begin{equation}\label{2.177}
\begin{split}
-\inf_{s\in\bdot\Gamma}I(s)
&\le\liminf_{n\to\infty}n^{-1}\log\IP_n\left\{n^{-1}V_n \in\Gamma\right\}\\[2mm]
&\le\limsup_{n\to\infty}n^{-1}\log\IP_n\{n^{-1}V_n \in\Gamma\}
\le -\inf_{s\in\overline\Gamma}I(s).
\end{split}
\end{equation}
Moreover, $I$ is the Legendre transform of the function 
$\alpha\mapsto\fp_\varphi(\GG+\alpha\VV)-\fp_\varphi(\GG)$.
\een
\end{theorem}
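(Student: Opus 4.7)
The plan is to establish Part~(1) and deduce Part~(2) by a contraction and convex-duality argument. Two preliminary observations drive everything. First, the effective domain of $\I$ lies in $\PP_\varphi(M)$ (each $\mu_n^x$ with $x\in M_n$ is $\varphi$-invariant, and $\PP_\varphi(M)$ is closed), and on that domain $\I$ is lower semicontinuous by continuity of $\GG$ (Lemma~\ref{lem:limiteGGPP}) and upper semicontinuity of $h_\varphi$ (Condition~\hUSCE), and convex because both $\GG$ and $h_\varphi$ are affine on $\PP_\varphi(M)$. Second, a uniform approximation
\[
\sup_{x\in M_n}\Bigl|\tfrac{1}{n}G_n(x) - \GG(\mu_n^x)\Bigr|
\le \tfrac{1}{n}\bigl\|G_n - S_nG^{(k)}\bigr\|_\infty
+ \sup_{\QQ\in\PP_\varphi(M)}\bigl|\langle G^{(k)},\QQ\rangle - \GG(\QQ)\bigr|
\]
will be used throughout, with right-hand side vanishing in the limit $n,k\to\infty$ by~\eqref{eq:defasymadd} and Lemma~\ref{lem:limiteGGPP}; the same bound applies to~$\VV$ in place of $\GG$.

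For the upper bound of Part~(1), fix a closed $F\subset\PP(M)$. The law of $\mu_n^\argdot$ under $\IP_n$ is supported in $\PP_\varphi(M)$, so it suffices to treat $F\subset\PP_\varphi(M)$. For each $\QQ\in F$ and $\e>0$, Condition~\hUSCE and continuity of $\GG$ furnish a weak neighborhood $U_\QQ$ on which $\GG + h_\varphi \le \GG(\QQ)+h_\varphi(\QQ)+\e$. On $\{x\in M_n:\mu_n^x\in U_\QQ\}$ the preliminary approximation gives $G_n(x)\le n(\GG(\QQ)+\e)+o(n)$, while a Bowen-type separated-set count (the localized version of the pressure estimate in Proposition~\ref{Prop:TopP}) bounds the cardinality by $e^{n(h_\varphi(\QQ)+\e)+o(n)}$. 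Combining these with the asymptotic $n^{-1}\log Z_n \to \fp_\varphi(\GG)$ supplied by Condition~\hPAP and extracting a finite sub-cover of $F$ (compactness of $\PP(M)$) yields the upper bound with rate $\I$.

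The lower bound is the main obstacle. Fix an open $G\subset\PP(M)$ and $\QQ\in G\cap\PP_\varphi(M)$. By Proposition~\ref{Prop:EntropyDensity}, which uses Condition~\hS, and by the affinity of $\GG$ and $h_\varphi$ on $\PP_\varphi(M)$, an ergodic-decomposition argument combined with lower semicontinuity of $\I$ reduces the task to the case of ergodic $\QQ$. For such $\QQ$ and any small $\e,\delta>0$, the Birkhoff ergodic theorem together with the Brin--Katok local entropy formula yields a Borel set $A_n\subset M$ with $\QQ(A_n)\ge\tfrac12$ for all large $n$ such that, for every $x\in A_n$, one has $\mu_n^x\in G$ and $\QQ(B_n(x,\e)) \le e^{-n(h_\varphi(\QQ)-\delta)}$; the mass bound forces $A_n$ to contain at least $\tfrac12\, e^{n(h_\varphi(\QQ)-\delta)}$ points that are $(\e,n)$-separated. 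Applying Condition~\hWPS to each such $x_j$ produces a periodic point $y_j\in M_n\cap B_{n-m_\delta(n)}(x_j,\delta)$; Lemma~\ref{lem:approxGmn} (through~\eqref{eq:GmmmmGn} and~\eqref{4.010ss}) ensures that for $\delta$ small enough the $y_j$ are distinct, their empirical measures still lie in $G$, and $G_n(y_j)\ge n(\GG(\QQ)-\delta)+o(n)$ uniformly. Summing the $e^{G_n(y_j)}$ and dividing by $Z_n\sim e^{n\fp_\varphi(\GG)}$ via~\hPAP gives
\[
\liminf_{n\to\infty} n^{-1}\log\IP_n\{\mu_n^\argdot\in G\}
\ge \GG(\QQ)+h_\varphi(\QQ)-\fp_\varphi(\GG)-O(\delta) = -\I(\QQ)-O(\delta),
\]
and $\delta\downarrow 0$ concludes. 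The real difficulty here is juggling three notions of closeness simultaneously (weak-topology neighborhood $G$, Bowen $d_n$-neighborhood for counting, and the shadow perturbation $m_\delta(n)$), which the estimates of Lemma~\ref{lem:approxGmn} are designed to handle.

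Finally, for Part~(2), the preliminary approximation (applied to $\VV$) makes $n^{-1}V_n$ exponentially equivalent under $\IP_n$ to the continuous functional $\QQ\mapsto\VV(\QQ)$ on $\PP_\varphi(M)$. The contraction principle applied to Part~(1) thus delivers~\eqref{2.177} with $I(s)=\inf\{\I(\QQ):\VV(\QQ)=s\}$; convexity and lower semicontinuity of $I$ follow from those of $\I$ together with continuity and affinity of $\VV$, and the level sets of $I$ are compact (so $I$ is good) because $\PP(M)$ is. The Legendre-transform identification is an immediate consequence of the variational principle~\eqref{4.05}:
\[
\sup_{s\in\R}\bigl(\alpha s - I(s)\bigr)
= \sup_{\QQ\in\PP_\varphi(M)}\bigl(\alpha\VV(\QQ) - \I(\QQ)\bigr)
= \fp_\varphi(\GG+\alpha\VV) - \fp_\varphi(\GG),
\]
after which the Fenchel--Moreau theorem, applicable since $I$ is convex and lower semicontinuous, inverts this to recover $I$ from the transform of $\alpha\mapsto\fp_\varphi(\GG+\alpha\VV)-\fp_\varphi(\GG)$.
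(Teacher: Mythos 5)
Your overall architecture is sound, and your lower bound essentially coincides with the paper's: the paper also reduces to ergodic $\Q$ via Proposition~\ref{Prop:EntropyDensity} (note that what is needed there is $\I(\Q^{(m)})\to\I(\Q)$, which follows from entropy-density together with continuity of $\GG(\cdot)$, not from lower semicontinuity of $\I$ --- that inequality points the wrong way), and then runs the same Katok-type count: Birkhoff plus Brin--Katok give a set of $\Q$-measure $\ge\tfrac12$, a maximal family of Bowen balls, and \textbf{(WPS)} supplies the shadowing periodic points. One point you must repair in the count: $(\e,n)$-separation of the $x_j$ does \emph{not} guarantee that the shadowing points $y_j\in B_{n-m_\delta(n)}(x_j,\delta)$ are distinct, since two of the $x_j$ could separate only at times in $[n-m_\delta(n),n)$. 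You need separation at scale $n-m_\delta(n)$ (equivalently, pairwise disjointness of the balls $B_{n-m_\delta(n)}(x_j,\delta)$), and correspondingly you must apply Brin--Katok to $\Q(B_{n-m_\delta(n)}(x,\delta))$; since $m_\delta(n)=o(n)$ this costs only $o(n)$ in the exponent, and it is exactly how the maximal disjoint collection is set up in Proposition~\ref{appp1.11}.

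Where you genuinely diverge is in the upper bound and in Part~(2). The paper's upper bound (Proposition~\ref{p1.10}) is soft: \textbf{(PAP)} gives the limit \eqref{2.03first} for every $V\in C(M)$, and the abstract upper large-deviation bound on a compact space (\cite{kifer-1990}, \cite[Section~4.5.1]{DZ2000}) yields \eqref{2.19} with rate equal to the Legendre transform \eqref{4.011}, identified with \eqref{4.012first} via the variational principle. Your covering argument can be made to work, but the cardinality bound $\#\{x\in M_n:\mu_n^x\in U_\Q\}\le e^{n(h_\varphi(\Q)+\e)+o(n)}$ is not a ``localized version of Proposition~\ref{Prop:TopP}'' --- nothing localized is proved there --- and it is not free: it requires its own Chebyshev/duality step (bound the count by $e^{-\inf_{U_\Q}S_nV}\sum_{x\in M_n}e^{S_nV(x)}$, use \textbf{(PAP)}, and optimize over $V$ via \textbf{(USCE)} and \eqref{Eq:hvarExp}), which is precisely what the soft route packages once and for all; as written this step is a gap. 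For Part~(2), your route is a legitimate simplification of the paper's: because $\IP_n$ is carried by $M_n$, every $\mu_n^x$ in sight is invariant, so $n^{-1}V_n$ is uniformly (hence superexponentially) close to $\VV(\mu_n^{\argdot})$ and the ordinary contraction principle on the closed subset $\PP_\varphi(M)$ applies. The paper instead invokes the approximate-contraction theorem \cite[Theorem~4.2.23]{DZ2000} so that the same proof covers weak Gibbs measures, for which $\mu_n^x$ need not be invariant. Your Legendre/Fenchel--Moreau identification is correct.
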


\begin{remark} An immediate consequence of~\eqref{4.012first} is that the 
function $\I$ is affine on $\cP_\varphi(M)$. We observe also that $\I(\Q) = 0$ iff $\Q$ is an equilibrium state for $\GG$.
\end{remark}

\begin{remark}\label{rem:additivecaseknown}
In the additive case $\GG=\{S_nG\}$ and $\VV=\{S_n V\}$ for some $G,V\in C(M)$,
Theorem~\ref{t1.9} is a direct consequence of known results: Part~(1) follows
from~\cite[Theorem~5.2]{comman-2009} while Part~(2) is obtained from Part~(1) by
an application of the contraction principle~\cite[Theorem 4.2.1]{DZ2000}. When
$\GG$ is only asymptotically additive, Part~(1) is new and an approximation
argument is required to deduce Part~(2) from the contraction principle when
$\VV$ is only asymptotically additive (see Section~\ref{ss:LDPlev1}). This
applies, in particular, to the entropy production sequence $\{\sigma_n\}$
defined by~\eqref{Eq:SigmaDef}.
\end{remark}

Obviously, Part~(1) of Theorem~\ref{t1.9} yields the LDP part of 
Theorem~\tA in the asymptotically additive setting.

The proof of Theorem~\ref{t1.9} is postponed to Section~\ref{s:abstractLevel3}, 
more precisely to Theorem~\ref{appt1.17}. There some more general measures $\P_n$ 
are considered, in order to give a unified treatment to the measures $\P_n$ 
in~\eqref{add} and the weak Gibbs measures considered in Section~\ref{s4}.

\subsection{Fluctuation Theorem and Fluctuation Relations }
\label{s2.3}

In this subsection we prove the FT and FR parts of Theorem~\tA in the general 
setting of the previous subsection. To this end, in addition to 
Conditions~\hUSCE, \hWPS, \hS and~\hPAP which 
are needed for the LDP, we impose one of the following assumptions to ensure  the 
validity of FR.
\newcommand{\hCom}{\hyperlink{hyp.Com}{\textbf{($\boldsymbol{\mathcal{C}}$)}}\xspace}
\newcommand{\hRev}{\hyperlink{hyp.Rev}{\textbf{($\boldsymbol{\mathcal{R}}$)}}\xspace}
\begin{description}
\item[\hypertarget{hyp.Com}{\textbf{($\boldsymbol{\mathcal{C}}$-Commutation)}}]
\label{commutation}
{\itshape There is a homeomorphism $\theta:M\to M$ such that $\theta\circ\theta=\Id_M$ and 
$\varphi=\theta\circ\varphi\circ\theta$.}
\item[\hypertarget{hyp.Rev}{\textbf{($\boldsymbol{\mathcal{R}}$-Reversal)}}]
\label{reversal} 
{\itshape The map $\varphi$ is a homeomorphism and there is a homeomorphism 
$\theta:M\to M$ such that $\theta\circ\theta=\Id_M$ and 
$\varphi^{-1}=\theta\circ\varphi\circ\theta$.}
\end{description}
Let us remark that in both cases, if $\Q\in\PP_\varphi(M)$, then 
$\widehat\Q:=\Q\circ\theta\in\PP_\varphi(M)$. Indeed, in case~\hRev, for any $V\in C(M)$ we have
\[
\langle V\circ\varphi,\widehat\Q\rangle
=\langle V\circ\varphi\circ\theta,\Q\rangle
=\langle V\circ\theta\circ\varphi^{-1},\Q\rangle
=\langle V\circ\theta,\Q\rangle
=\langle V,\widehat\Q\rangle.
\]
A similar argument applies in case~\hCom.

Condition~\hRev is the standard dynamical system reversal condition that 
appears in virtually all works on the FT and FR. 
To the best of our knowledge, it was not previously remarked that~\hCom
also suffices to derive the FR. Since~\hCom does \textit{not} 
require that~$\varphi$ be a homeomorphism, it allows one to expand the class of 
examples for which the FR can be established. 
\begin{example}  \label{e2.13}
Let $X$ be a compact metric space and $\phi: X\rightarrow X$ a continuous map. 
Set $M=X\times X$ and $\theta(x,y)=(y,x)$. Then~\hCom holds 
for the map~$\varphi:(x,y)\mapsto(\phi(x),\phi(y))$. If~$\phi$ is a 
homeomorphism, then~\hRev holds for the  
homeomorphism~$\varphi:(x,y)\mapsto(\phi(x),\phi^{-1}(y))$.
\end{example} 
\begin{example}\label{interval}
An interesting concrete setting of Example~\ref{e2.13} is the case $X=[0,1]$.
The classical examples of interval maps $\phi:[0,1]\rightarrow [0,1]$ such as
the tent map, Farey map,  or the Pomeau-Manneville map are not bijections 
and~\hRev cannot hold whereas~\hCom applies. We refer the reader
to~\cite{CJPS_phys} for a  discussion of the  FT and FR  for this class of 
examples.
\end{example}

Observe that Condition~\hRev/\hCom implies
\begin{equation}
S_n(V\circ\theta)(x)=\sum_{0\le k<n}V\circ\theta\circ\varphi^k(x)
=\sum_{0\le k<n}V\circ\varphi^{\mp k}\circ\theta(x)
=(S_n V)(\theta_n^\mp(x))\label{eq:SnVtheta}	
\end{equation}
where 
\[
\theta_n^-=\theta\circ\varphi^{n-1}\text{ in the case }~\hRev,  
\qquad  
\theta_n^+=\theta\text{ in the case }~\hCom.
\]
Note  that the map $\theta_n^\pm$ is  an involution: in the case~\hCom this is 
immediate, and in the case~\hRev we have 
\[
(\theta_n^-)^{-1}=\varphi^{1-n}\circ\theta^{-1}=\varphi^{1-n}\circ\theta
=\theta\circ\varphi^{n-1}=\theta_n^-.
\]
For $\GG=\{G_n\}\in\aA(M)$, we define $\GG\circ\theta=\{G_n\circ\theta_n^\pm\}$.
\begin{lemma}\label{l:Qhatasym}
If $\GG$ is asymptotically additive with approximating sequence $\{G^{(k)}\}$, 
then $\GG\circ\theta$ is asymptotically additive with approximating sequence 
$\{G^{(k)}\circ\theta\}$.
\end{lemma}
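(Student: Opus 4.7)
The plan is very short: unwind the definition of asymptotic additivity for $\GG\circ\theta$ and invoke the key identity \eqref{eq:SnVtheta} already established in the preceding paragraphs, which relates $S_n$ applied to $V\circ\theta$ with $S_nV$ precomposed with the involution $\theta_n^\pm$.

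First I would note that since $\theta\in C(M)$ and $G^{(k)}\in C(M)$, the composition $G^{(k)}\circ\theta$ lies in $C(M)$, so $\{G^{(k)}\circ\theta\}\subset C(M)$ is a legitimate candidate for an approximating sequence. Next, applying \eqref{eq:SnVtheta} with $V=G^{(k)}$ gives
\[
S_n(G^{(k)}\circ\theta)=(S_nG^{(k)})\circ\theta_n^\pm,
\]
so that
\[
G_n\circ\theta_n^\pm - S_n(G^{(k)}\circ\theta)
=\bigl(G_n-S_nG^{(k)}\bigr)\circ\theta_n^\pm.
\]
The main observation is then that $\theta_n^\pm:M\to M$ is an involution, hence a bijection, so precomposition by $\theta_n^\pm$ preserves the sup norm:
\[
\bigl\|G_n\circ\theta_n^\pm-S_n(G^{(k)}\circ\theta)\bigr\|_\infty
=\bigl\|G_n-S_nG^{(k)}\bigr\|_\infty.
\]

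Finally, I would divide by $n$, take $\limsup_{n\to\infty}$, then $\lim_{k\to\infty}$, and use the fact that $\{G^{(k)}\}$ is an approximating sequence for $\GG$ in the sense of \eqref{eq:defasymadd} to conclude that the same relation holds for $\GG\circ\theta$ with approximating sequence $\{G^{(k)}\circ\theta\}$. There is no real obstacle: the only mild subtlety is to make sure the cases \hCom and \hRev are treated uniformly, which is automatic once one uses the compact notation $\theta_n^\pm$ and relies on identity \eqref{eq:SnVtheta}, where the discrepancy between $\varphi^{k}$ and $\varphi^{-k}$ is already absorbed into the definition of $\theta_n^\pm$.
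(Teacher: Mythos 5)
Your proof is correct and is essentially identical to the paper's: both reduce the claim to the identity \eqref{eq:SnVtheta}, so that $G_n\circ\theta_n^\pm-S_n(G^{(k)}\circ\theta)=(G_n-S_nG^{(k)})\circ\theta_n^\pm$, and then use that precomposition with the involution $\theta_n^\pm$ preserves the sup norm. The only remark is notational: in \eqref{eq:SnVtheta} the paper writes $\theta_n^\mp$ because the sign tracks the case \hRev versus \hCom, but in each case it is the same involution you use, so there is no discrepancy.
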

\begin{proof}{} 
By~\eqref{eq:SnVtheta}, we have
\begin{align*}
\bigl\|G_n\circ\theta^\pm_n-S_n(G^{(k)}\circ\theta)\bigr\|_\infty
=\bigl\|G_n\circ\theta^\pm_n-S_n(G^{(k)})\circ\theta^\pm_n \bigr\|_\infty
=\bigl\|G_n-S_nG^{(k)}\bigr\|_\infty, 
\end{align*}
and the result follows.\hfill\qed
\end{proof}

It follows from Lemmas~\ref{lem:limiteGGPP} and~\ref{l:Qhatasym} that
\begin{equation}\label{4.019}
(\GG\circ\theta)(\Q)=\lim_{k\to \infty}\langle G^{(k)}\circ\theta,\Q\rangle
=\lim_{k\to\infty}\langle G^{(k)},\widehat\Q\rangle
= \GG(\widehat\Q),\quad \Q\in\P_\varphi(M).
\end{equation}
In addition, we have 
\begin{lemma}
The following holds: 
\begin{align}
h_\varphi(\widehat\Q)
&=h_\varphi(\Q)\quad\mbox{for $\Q\in\P_\varphi(M)$},\label{Eq:htheta}\\[2mm]
\fp_\varphi(\GG\circ\theta)
&=\fp_\varphi(\GG)\quad\text{for }\GG\in\cA(M).\label{eq:Psymtheta}	
\end{align}
\end{lemma}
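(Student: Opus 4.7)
The plan is to deduce both relations from the fact that the involution $\theta$ conjugates the dynamics suitably (to itself in case \hCom, to $\varphi^{-1}$ in case \hRev), combined with the variational principle \eqref{4.05} already proved in Proposition~\ref{Prop:TopP}, and the fact that $Q\mapsto\widehat Q$ is a bijection of $\PP_\varphi(M)$ onto itself (observed right after the statement of the two conditions).

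For \eqref{Eq:htheta}, I would first note that in case \hCom, multiplying $\varphi=\theta\circ\varphi\circ\theta$ on the right by $\theta$ (and using $\theta\circ\theta=\Id_M$) gives $\varphi\circ\theta=\theta\circ\varphi$. Hence $\theta$ is a topological automorphism of $(M,\varphi)$, and since it is involutive it pushes $\widehat Q=Q\circ\theta$ forward to $Q$; thus $\theta$ provides a measure-theoretic isomorphism between $(M,\varphi,\widehat Q)$ and $(M,\varphi,Q)$, and entropy being a conjugacy invariant yields $h_\varphi(\widehat Q)=h_\varphi(Q)$. In case \hRev, the analogous computation gives instead $\theta\circ\varphi=\varphi^{-1}\circ\theta$, so $\theta$ is a measure-theoretic isomorphism between $(M,\varphi,\widehat Q)$ and $(M,\varphi^{-1},Q)$. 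Consequently $h_\varphi(\widehat Q)=h_{\varphi^{-1}}(Q)=h_\varphi(Q)$, where the last equality is the standard fact that an invertible measure-preserving transformation and its inverse have the same Kolmogorov--Sinai entropy.

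For \eqref{eq:Psymtheta}, I would apply the variational principle \eqref{4.05} to $\GG\circ\theta$, which by Lemma~\ref{l:Qhatasym} belongs to $\cA(M)$:
\[
\fp_\varphi(\GG\circ\theta)
=\sup_{Q\in\PP_\varphi(M)}\bigl((\GG\circ\theta)(Q)+h_\varphi(Q)\bigr).
\]
By \eqref{4.019}, $(\GG\circ\theta)(Q)=\GG(\widehat Q)$, and by the already proven \eqref{Eq:htheta}, $h_\varphi(Q)=h_\varphi(\widehat Q)$. Since $Q\mapsto\widehat Q$ is an involutive bijection of $\PP_\varphi(M)$, the change of variables $Q'=\widehat Q$ gives
\[
\fp_\varphi(\GG\circ\theta)
=\sup_{Q'\in\PP_\varphi(M)}\bigl(\GG(Q')+h_\varphi(Q')\bigr)
=\fp_\varphi(\GG),
\]
with the identity holding also in the $+\infty$ case since the change of variables is a bijection. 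I do not expect a real obstacle here: the whole argument is a short exercise in conjugacy invariance and the variational principle, the only mild point being the bookkeeping between the two distinct cases \hCom and \hRev, which differ only in whether $\theta$ conjugates $\varphi$ to itself or to its inverse.
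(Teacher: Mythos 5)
Your proof is correct and follows essentially the same route as the paper: conjugacy invariance of the Kolmogorov--Sinai entropy (plus $h_{\varphi^{-1}}=h_\varphi$ in the reversal case) for \eqref{Eq:htheta}, and then the variational principle \eqref{4.05} combined with \eqref{4.019} and the involutive bijection $\Q\mapsto\widehat\Q$ of $\PP_\varphi(M)$ for \eqref{eq:Psymtheta}. No gaps.
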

\begin{proof}{}  
Let $\psi=\theta\circ\varphi\circ\theta$. Since the Kolmogorov--Sinai entropy is
conjugacy invariant (see~\cite[Theorem 4.11]{walters1982}), 
$h_\psi(\Q)=h_\varphi(\Q\circ \theta)=h_\varphi(\widehat\Q)$. 
In case~\hCom we have $\psi=\varphi$, and~\eqref{Eq:htheta} follows. In 
case~\hRev we note that by~\cite[Theorem~4.13]{walters1982}, 
$h_\psi(\Q)=h_{\varphi^{-1}}(\Q)=h_\varphi(\Q)$. Thus~\eqref{Eq:htheta} holds in 
both cases.

In order to prove~\eqref{eq:Psymtheta}, we observe that, by~\eqref{4.05}, 
\eqref{4.019} and~\eqref{Eq:htheta}, 
\begin{align*}
\fp_\varphi(\GG\circ\theta)
&=\sup_{\Q\in\PP_\varphi(M)}\bigl((\GG\circ \theta)(\Q)+h_\varphi(\Q)\bigr)
=\sup_{\Q\in\PP_\varphi(M)}\bigl(\GG(\widehat\Q)+h_\varphi(\widehat\Q)\bigr)\\
&=\sup_{\Q\in\PP_\varphi(M)}\bigl(\GG(\Q)+h_\varphi(\Q)\bigr)
=\fp_\varphi(\GG),
\end{align*}
which completes the proof.\hfill\qed
\end{proof}

The \textit{entropy production in time~$n$\/} is defined by 
\begin{equation}\label{4.31}
\sigma_n=\sigma_n[\GG]=G_n-G_n \circ\theta_n^\pm. 
\end{equation}
Observing  that~$M_n$ is strictly invariant under $\theta_n^\pm$, the following 
result is immediate.
\begin{lemma}
Let $\GG\in\cA(M)$. Then $Z_n(\GG)=Z_n(\GG\circ\theta)=:Z_n$. Moreover, 
$\IP_n\circ\theta_n^\pm$ is the measure given by~\eqref{add} for the potential 
sequence $\GG\circ\theta$, \ie
\begin{equation*}
\IP_n\circ\theta_n^\pm
=Z_n^{-1}\sum_{x\in M_n}\eee^{G_n\circ\theta_n^\pm(x)}\delta_x.	
\end{equation*}
Finally, the measures~$\IP_n$ and~$ \IP_n \circ\theta_n^\pm$ are 
equivalent, and we have 
\begin{equation*}
\log\frac{\dd\IP_n}{\dd\IP_n\circ\theta_n^\pm}=\sigma_n.
\end{equation*}
\end{lemma}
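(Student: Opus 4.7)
The plan is built entirely around the fact that the involution $\theta_n^\pm$ restricts to a bijection of the finite set $M_n$. I would start by verifying this invariance. In case \hCom, $\theta\circ\varphi=\varphi\circ\theta$ immediately yields $\varphi^n\circ\theta=\theta\circ\varphi^n$, so $\theta$ stabilizes $M_n$ and $\theta_n^+=\theta$ is an involution on $M_n$. In case \hRev, the relation $\varphi^{-1}=\theta\circ\varphi\circ\theta$ iterated gives $\varphi^n=\theta\circ\varphi^{-n}\circ\theta$, whence $\varphi^n(x)=x$ implies $\varphi^n(\theta(x))=\theta(x)$; combined with $\varphi(M_n)\subset M_n$, this shows $\theta_n^-=\theta\circ\varphi^{n-1}$ maps $M_n$ into itself, and the involution property was already noted in the text.

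For the equality $Z_n(\GG)=Z_n(\GG\circ\theta)$, I would simply reindex the defining sum of $Z_n(\GG\circ\theta)=\sum_{x\in M_n}\eee^{G_n\circ\theta_n^\pm(x)}$ by the change of variable $y=\theta_n^\pm(x)$, which is a bijection of $M_n$ onto itself; the sum becomes $\sum_{y\in M_n}\eee^{G_n(y)}=Z_n(\GG)$. For the identification of $\IP_n\circ\theta_n^\pm$, I would test against an arbitrary $f\in C(M)$ and compute
\[
\int f\,\dd(\IP_n\circ\theta_n^\pm)=\int f\circ\theta_n^\pm\,\dd\IP_n
=Z_n^{-1}\sum_{x\in M_n}\eee^{G_n(x)}f(\theta_n^\pm(x))
=Z_n^{-1}\sum_{y\in M_n}\eee^{G_n\circ\theta_n^\pm(y)}f(y),
\]
again using the bijection $y=\theta_n^\pm(x)$. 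Comparing with \eqref{add} applied to $\GG\circ\theta$ and invoking $Z_n(\GG)=Z_n(\GG\circ\theta)$ identifies this measure as the one associated to $\GG\circ\theta$.

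Finally, for the density statement, both $\IP_n$ and $\IP_n\circ\theta_n^\pm$ are purely atomic with the same (finite) support $M_n$, and the weight of each atom $x\in M_n$ is strictly positive under both measures, namely $Z_n^{-1}\eee^{G_n(x)}$ and $Z_n^{-1}\eee^{G_n\circ\theta_n^\pm(x)}$ respectively. Hence the two measures are mutually equivalent, with pointwise Radon--Nikodym derivative
\[
\frac{\dd\IP_n}{\dd(\IP_n\circ\theta_n^\pm)}(x)=\eee^{G_n(x)-G_n\circ\theta_n^\pm(x)}=\eee^{\sigma_n(x)},
\]
by the definition \eqref{4.31} of $\sigma_n$. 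Taking logarithms yields the final identity. I do not anticipate any serious obstacle here; the entire argument is a bookkeeping exercise once the bijectivity of $\theta_n^\pm$ on $M_n$ is secured, and the only place one must be careful is distinguishing the two cases \hCom and \hRev when checking that invariance, which is handled uniformly by the relation $\varphi^n=\theta\circ\varphi^{\pm n}\circ\theta$.
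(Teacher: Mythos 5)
Your proof is correct and follows exactly the route the paper intends: the paper gives no written proof, stating only that the result is ``immediate'' once one observes that $M_n$ is strictly invariant under $\theta_n^\pm$, and your argument is precisely the careful verification of that invariance (in both cases \hCom and \hRev) followed by the resulting reindexing of the finite sums and the atom-by-atom computation of the Radon--Nikodym derivative.
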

Condition~\hPAP implies that the limit defining the entropic pressure
\[
e(\alpha)=\lim_{n\to\infty}\frac{1}{n}
\log\int_{M_n}\eee^{-\alpha\sigma_n}\d\P_n
\]
exists for all $\alpha\in\rr$ and is given by
\[
e(\alpha)=\fp_\varphi\left((1-\alpha)\GG+\alpha\,\GG\circ\theta\right) - \fp_\varphi(\GG) .
\]

For $\Q\in\cP_\varphi(M)$, let 
\begin{equation}
\ep(\Q)=\GG(\Q)-(\GG\circ\theta)(\Q).
\end{equation}

Note that 
\begin{equation}\label{eq:symmetryep}
\ep(\widehat \Q) = -\ep(\Q),
\end{equation}
and that if $G_n=S_n V$ and $\Q\in\cP_\varphi(M)$, then 
$\ep(\Q)=\langle V-V\circ\theta,\Q\rangle=\langle V,\Q-\widehat\Q\rangle$. 

\begin{theorem}\label{t1.14}
In addition to the hypotheses of Theorem~\ref{t1.9}, suppose that 
Condition~\hCom/\hRev holds. Then the rate function~$\I$ of the LDP 
for the empirical measures~\eqref{oc-me} under the laws~$\IP_n$ satisfies the
relation\footnote{Note that~\eqref{2.36} implies~\eqref{1.8} for $\Q\in
\cP_\varphi(M)$. On the other hand, when $\Q$ is not invariant, both sides 
of~\eqref{1.8} are $+\infty$, whereas in the asymptotically additive setup, 
the quantity $\ep(\Q)$ is defined only for $\Q\in\cP_\varphi(M)$.}
\begin{equation}\label{2.36}
\I(\widehat\Q)=\I(\Q)+\ep(\Q)
\end{equation}
for any $\QQ\in\PP_\varphi(M)$. 
Furthermore, the sequence  $\frac1n\sigma_n$ under the laws $\P_n$  satisfies the 
LDP~\eqref{2.177} with the good convex rate 
function given by 
\beq\label{2.38}
I(s)=\inf\{\I(\Q):\Q\in\PP_\varphi(M), \ep(\Q)=s\}.
\eeq
The rate function $I$ satisfies the relation 
\beq 
I(-s)=I(s)+s
\label{FRagain}
\eeq
for all $s\in\rr$ and  is the Legendre transform of the function $\alpha \mapsto e(-\alpha)$.
\end{theorem}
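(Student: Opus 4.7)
The theorem bundles three assertions: the fluctuation relation~\eqref{2.36} for the empirical-measure rate function $\I$; the LDP for $n^{-1}\sigma_n$ with the contracted rate function~\eqref{2.38} together with its Legendre characterization; and the scalar FR~\eqref{FRagain}. I would prove them in this order, since each claim reduces to a routine manipulation once the previous one is at hand. The key ingredients already at our disposal are the identities $\GG(\widehat\Q)=(\GG\circ\theta)(\Q)$, $h_\varphi(\widehat\Q)=h_\varphi(\Q)$, and $\ep(\widehat\Q)=-\ep(\Q)$ from~\eqref{4.019}, \eqref{Eq:htheta}, and~\eqref{eq:symmetryep}, the explicit formula~\eqref{4.012first} for $\I$, and Theorem~\ref{t1.9}(2).

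For~\eqref{2.36} I would first observe that $\Q\in\PP_\varphi(M)$ iff $\widehat\Q\in\PP_\varphi(M)$ (as noted just after the statement of~\hRev), so both sides of~\eqref{2.36} take the value $+\infty$ outside $\PP_\varphi(M)$ and both lie in the first branch of~\eqref{4.012first} inside $\PP_\varphi(M)$. Substituting $\widehat\Q$ into~\eqref{4.012first} and applying the two identities above yields
\[
\I(\widehat\Q) = -(\GG\circ\theta)(\Q) - h_\varphi(\Q) + \fp_\varphi(\GG) = \I(\Q) + \bigl(\GG-\GG\circ\theta\bigr)(\Q) = \I(\Q) + \ep(\Q).
\]
For the LDP and its Legendre form I would set $\VV:=\GG-\GG\circ\theta$; this lies in $\aA(M)$ because $\GG\circ\theta\in\aA(M)$ by Lemma~\ref{l:Qhatasym} and $\aA(M)$ is a vector space, and its defining sequence is precisely $\{\sigma_n\}$. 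By linearity of $\GG\mapsto\GG(\Q)$ (Lemma~\ref{lem:limiteGGPP}) one has $\VV(\Q)=\ep(\Q)$. Then Theorem~\ref{t1.9}(2) applied to this $\VV$ produces the LDP~\eqref{2.177} for $n^{-1}\sigma_n$ with the contracted rate function~\eqref{2.38}, and identifies $I$ as the Legendre transform of $\alpha\mapsto\fp_\varphi(\GG+\alpha\VV)-\fp_\varphi(\GG)$. The one-line rewriting $\GG+\alpha\VV=(1-(-\alpha))\GG+(-\alpha)\,\GG\circ\theta$ shows that this function coincides with $e(-\alpha)$, delivering the stated Legendre representation.

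The scalar FR~\eqref{FRagain} then follows by a change of variable $\Q'=\widehat\Q$ in the infimum defining $I(-s)$: this is a bijection of $\PP_\varphi(M)$ onto itself, by~\eqref{eq:symmetryep} the constraint $\ep(\Q')=-s$ is sent to $\ep(\Q)=s$, and by~\eqref{2.36} the objective $\I(\Q')$ is sent to $\I(\Q)+s$. Taking the infimum over this transformed problem gives $I(-s)=I(s)+s$. No step in this plan is conceptually hard: the real work (the LDP in Theorem~\ref{t1.9}, the variational principle for $\fp_\varphi$, conjugacy invariance of entropy and pressure under $\theta$) was carried out in the preceding sections, and what remains is bookkeeping. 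The only point that deserves explicit justification is the membership $\VV\in\aA(M)$, which is precisely where Lemma~\ref{l:Qhatasym} is essential.
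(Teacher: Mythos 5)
Your proposal is correct and follows essentially the same route as the paper: \eqref{2.36} via the explicit formula~\eqref{4.012first} together with~\eqref{4.019} and~\eqref{Eq:htheta}, the LDP by applying Theorem~\ref{t1.9}(2) to the asymptotically additive sequence $\{\sigma_n\}$ (Lemma~\ref{l:Qhatasym}), and~\eqref{FRagain} by the change of variable $\Q'=\widehat\Q$ in the infimum. Your explicit identification of $\fp_\varphi(\GG+\alpha\VV)-\fp_\varphi(\GG)$ with $e(-\alpha)$ is a welcome detail the paper leaves implicit.
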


\begin{proof}{} Recall that the rate function~$\I$ is given 
by~\eqref{4.012first}. By~\eqref{4.019} and~\eqref{Eq:htheta}, for any 
$\Q\in\PP_\varphi(M)$,
\[
\I(\widehat\Q)=-\GG(\widehat\Q)-h_\varphi(\widehat\Q)+\fp_\varphi(\GG)
=-(\GG\circ\theta)(\Q)-h_\varphi(\Q)+\fp_\varphi(\GG),
\]
and~\eqref{2.36} follows. 

To obtain the LDP for $n^{-1} \sigma_n$, we first observe that, by 
Lemma~\ref{l:Qhatasym}, the sequence  $\{\sigma_n\}$ is asymptotically additive 
with approximating sequence $\sigma^{(k)} = G^{(k)} - G^{(k)} \circ \theta$. The 
LDP with rate function~\eqref{2.38} then follows from Part~(2) of 
Theorem~\ref{t1.9}.

Finally, the FR~\eqref{FRagain} follows from Proposition~\ref{p4.3}, but 
it also can be directly deduced as follows. 
Combining~\eqref{eq:symmetryep} with~\eqref{2.36} and~\eqref{2.38},
we see that 
\begin{align*}
I(-s)&=\inf\{\I(\Q):\Q\in\PP_\varphi(M), \ep(\Q)=-s\} \\[2mm]
&=\inf\{\I(\widehat\Q)-\ep(\Q):\Q\in\PP_\varphi(M), \ep(\Q)=-s\}\\[2mm]
&=\inf\{\I(\Q'):\Q'\in\PP_\varphi(M),\ep(\Q')=s\}+s
=I(s)+s,
\end{align*}
where we used the fact that $\Q\in\PP_\varphi(M)$ if and only if $\widehat \Q\in\PP_\varphi(M)$.
This completes the proof of Theorem~\ref{t1.14}.\hfill\qed 
\end{proof}

\section{Weak Gibbs measures}
\label{s4}

\begin{definition}\label{d1.16}
We say that~$\IP\in\PP(M)$ is a \textup{weak Gibbs measure} for $\GG\in \cA(M)$ if 
for any $n\ge1$ and any $\e> 0$ there is $K_n(\e)\ge1$ such that
\begin{gather}
K_n(\e)^{-1}\,\eee^{G_n(x)-n\fp_\varphi(\GG)}\le \IP\bigl(B_n(x,\e)\bigr)
\le K_n(\e)\,\eee^{G_n(x)-n\fp_\varphi(\GG)},
\label{2.47} \\[6pt]
\lim_{\e\downarrow0}\limsup_{n\to\infty}\frac1n\log K_n(\e)=0,
\label{2.48}
\end{gather}
where~\eqref{2.47} holds for every $x\in M$.
\end{definition}
\begin{remark}
It is easy to see that if a probability measure $\P\in \cP(M)$ is weak Gibbs
for two potential sequences  $\GG$ and $\GG'\in \cA(M)$, then 
$\{G_n-n\fp_\varphi(\GG)\}_{n\geq 1}$ and $\{G_n'-n\fp_\varphi(\GG')\}_{n\geq 1}$ are
equivalent in the sense of Remark~\ref{rem:equivrel}. Conversely, if 
$\GG\sim\GG'$, then~$\P$ is weak Gibbs for~$\GG$ iff it is weak Gibbs for $\GG'$.
\end{remark}

We emphasize that the definition of weak Gibbs measure does not require
$\IP\in\cP_\varphi(M)$. Notice also that it follows from~\eqref{2.47} that the
support of~$\IP$ coincides with~$M$. The following lemma shows that if the
latter property is satisfied, then it suffices to require the validity
of~\eqref{2.47} almost everywhere. This observation is technically useful when
transfer operators are used to construct weak Gibbs measures; 
see~\cite[Section 2]{kessebohmer-2001}, \cite[Appendix B]{climenhaga2010}, 
and~\cite{CJPS_phys}.

\begin{lemma}
Let $\GG=\{G_n\}\in\aA(M)$, and let $\IP\in\PP(M)$. Assume that for all $\e>0$, there is a dense set $A_\e$ such that \eqref{2.47} holds for
all $x\in A_\e$ and $n\geq 1$, with~$\{K_n(\e)\}$ satisfying~\eqref{2.48}. Then
$\IP$ is weak Gibbs for $\GG$.
\end{lemma}
\begin{proof}{} Define
\begin{equation} 
\gamma(n,\e)=\sup_{x\in M}\sup_{y,z\in B_n(x,\e)} 
\frac1n\bigl|G_n(y)-G_n(z)\bigr|.
\label{eq:gammandelta}
\end{equation}
By~\eqref{4.010},  
\begin{equation}\label{4.004}
\lim_{\e \downarrow0}\limsup_{n\to\infty}\gamma(n,\e)=0.	
\end{equation}
Let  $n\geq 1$, $\e>0$, and fix $x\in M$. Since 
$\bar A_\e=M$, we can find $x'\in A_\e\cap B_n(x,\e/2)$, so that
\begin{align*}
\P(B_n(x, \e))&\geq\P(B_n(x', \e/2)) 
\geq K_n(\e/2)\,\eee^{G_n(x')-n\fp_\varphi(\GG)} 
\geq K_n'(\e)^{-1}\eee^{G_n(x)-n\fp_\varphi(\GG)},\\[2mm]
\P(B_n(x, \e))&\leq\P(B_n(x',2\e)) 
\leq K_n(2\e)\eee^{G_n(x')-n\fp_\varphi(\GG)} 
\leq K_n'(\e)\eee^{G_n(x)-n\fp_\varphi(\GG)},
\end{align*}
where 
\begin{equation*}
K_n'(\e)=\max(K_n(2\e), K_n(\e/2))\,\eee^{n\gamma(n,\e)}.
\end{equation*}
Relation~\eqref{4.004} gives that 
\[
\lim_{\e\downarrow 0}\limsup_{n\rightarrow\infty}\frac{1}{n}\log K_n'(\e)=0,
\]
and the statement follows.\hfill\qed
\end{proof}

We next show that invariant weak Gibbs measures are equilibrium states
(the converse is not true in general).

\begin{lemma}\label{lem:WGestequil}
Let $\GG=\{G_n\}\in\aA(M)$ and assume $\P$ is a $\varphi$-invariant
weak Gibbs measure for $\GG$. Then $\P$ is an equilibrium state for $\GG$.
\end{lemma}

\begin{proof}{}
By \eqref{2.47}, we have for all $x\in M$ that
\begin{equation} \label{2.102}
\lim_{\e\downarrow0}\limsup_{n\to\infty}\frac1n
\bigl(-\log\IP(B_n(x,\e))+G_n(x)\bigr)=\fp_\varphi(\GG).
\end{equation}
On the other hand, the
Brin--Katok local entropy formula and Lemma~\ref{lem:ergoasymadd} imply that for $\IP$-a.e.~$x\in M$,
\begin{equation} \label{2.103}
\lim_{\e\downarrow0}\limsup_{n\to\infty}\frac1n
\bigl(-\log \IP(B_n(x,\e))+G_n(x)\bigr)
=h_\varphi(\IP,x)+\overline{G}(x),
\end{equation}
where $h_\varphi(\IP,\argdot)$ and $\overline{G}$ are $\varphi$-invariant 
functions such that $\langle h_\varphi(\IP,\argdot),\IP\rangle=h_\varphi(\IP)$ and $\langle\overline{G},\IP\rangle=\GG(\IP)$.
Equating the right-hand sides of~\eqref{2.102} and~\eqref{2.103} and integrating 
with respect to~$\IP$, we arrive at~\eqref{2.61}.
\hfill\qed 
\end{proof}

The following result is again a special case of Theorem~\ref{appt1.17}.
\begin{theorem}\label{t1.17}
Assume that Conditions~\hUSCE and~\hS hold and that $\P$ is a weak  
Gibbs measure for $\GG\in\cA(M)$. Then the conclusions of Theorem~\ref{t1.9} hold 
with~$\IP_n$ replaced by~$\IP$.
\end{theorem}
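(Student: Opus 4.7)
The plan is to reduce Theorem~\ref{t1.17} to the LDP for the empirical measures (Part~(1) of Theorem~\ref{t1.9} with $\P_n$ replaced by $\P$), since the reference to Theorem~\ref{appt1.17} signals that this is the substantive content. Part~(2) should then follow from Part~(1) by the contraction principle: for any $\VV=\{V_n\}\in\cA(M)$ with approximating sequence $V^{(k)}\in C(M)$, the bound $\|n^{-1}V_n-\langle V^{(k)},\mu_n^\cdot\rangle\|_\infty\to 0$ as $k\to\infty$ from~\eqref{eq:defasymadd} makes $n^{-1}V_n$ exponentially equivalent under~$\P$ to a continuous functional of~$\mu_n^\cdot$. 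An approximation-of-LDP argument (as in Dembo--Zeitouni, Theorem~4.2.13) then transfers the LDP and identifies the rate function as the contraction of~$\I$, which Varadhan's lemma recognizes as the Legendre transform of the pressure.

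For the LDP of the empirical measures, the upper and lower bounds are treated separately. The upper bound exploits the upper weak Gibbs bound~\eqref{2.47}: for a closed $\Gamma\subset\PP(M)$, covering $\{x:\mu_n^x\in\Gamma\}$ by finitely many Bowen balls $B_n(x_i,\e)$ yields
\begin{equation*}
\P\{\mu_n^\cdot\in\Gamma\}\le K_n(\e)\,\eee^{-n\fp_\varphi(\GG)}\sum_i\eee^{G_n(x_i)},
\end{equation*}
where the $x_i$ whose empirical measures lie in a weak-$*$ neighborhood of a given $\Q\in\Gamma\cap\PP_\varphi(M)$ satisfy $G_n(x_i)\approx n\,\GG(\Q)$ up to the variation bound~\eqref{4.010}. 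A Katok-type count of such $x_i$ by $\eee^{n(h_\varphi(\Q)+o(1))}$, available under~\hUSCE, then yields the rate $-\inf_\Gamma\I$ after a standard compactness/partitioning argument on~$\Gamma$ and the limit $\e\downarrow 0$.

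The lower bound is the main obstacle; it is where~\hS enters, through the entropy-density Proposition~\ref{Prop:EntropyDensity} and the Brin--Katok local entropy formula. For an open $\Gamma$ and $\Q\in\Gamma\cap\PP_\varphi(M)$, I would first approximate $\Q$ by ergodic $\Q_m\in\EE_\varphi(M)$ with $\Q_m\rightharpoonup\Q$ and $h_\varphi(\Q_m)\to h_\varphi(\Q)$; the continuity of $\GG(\argdot)$ on $\PP_\varphi(M)$ from Lemma~\ref{lem:limiteGGPP} ensures $\I(\Q_m)\to\I(\Q)$. For each ergodic $\Q_m$, I would isolate a set $A_m$ of near-full $\Q_m$-measure on which simultaneously $\mu_n^x\in\Gamma$ (Birkhoff on test functions), $n^{-1}G_n(x)\to\GG(\Q_m)$ (Birkhoff on the approximating potentials $G^{(k)}$, with~\eqref{eq:defasymadd} controlling the tail uniformly), and $\Q_m(B_n(x,\e))\ge\eee^{-n(h_\varphi(\Q_m)+o(1))}$ (Brin--Katok). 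A maximal $(\e,n)$-separated subset of $A_m$ then has at least $\eee^{n(h_\varphi(\Q_m)-o(1))}$ points, each contributing at least $K_n(\e)^{-1}\eee^{G_n(x)-n\fp_\varphi(\GG)}$ to $\P\{\mu_n^\cdot\in\Gamma\}$ by the lower weak Gibbs bound. Passing to the limits $n\to\infty$, $\e\downarrow 0$, and $m\to\infty$ yields $\liminf_n n^{-1}\log\P\{\mu_n^\cdot\in\Gamma\}\ge-\I(\Q)$. The delicate point is to coordinate the Brin--Katok estimate, the Birkhoff convergence for all $G^{(k)}$, the variation control~\eqref{4.010ss}, and a weak-$*$ neighborhood of~$\Q_m$ simultaneously, in the spirit of the arguments of~\cite{FO-1988,OP-1988,OP-1989,EKW-1994,PS-2005} highlighted in Remark~\ref{r3}.
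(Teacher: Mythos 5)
Your overall strategy is sound, and two of its three pieces coincide with the paper's. The paper's actual proof of Theorem~\ref{t1.17} consists only of checking that the pair $(\P_n=\P,\,\GG)$ satisfies the two hypotheses \textbf{(C1)} and \textbf{(C2)} of the general Theorem~\ref{appt1.17}: \textbf{(C2)} (with $m_\delta(n)\equiv 0$) is literally the lower weak Gibbs bound in~\eqref{2.47}, and \textbf{(C1)} is Lemma~\ref{lem:WGPress}, proved by sandwiching $\langle\eee^{G_n'},\P\rangle$ between sums over $(\e,n)$-spanning and $(\e,n)$-separated sets using~\eqref{2.47} and the variation control~\eqref{eq:gammandelta}. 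Your lower-bound argument (entropy density via \textbf{(S)}, reduction to ergodic $\Q_m$, Birkhoff plus Brin--Katok to isolate a good set of near-full measure, a maximal disjoint family of Bowen balls, and the lower Gibbs bound on each ball) is exactly Proposition~\ref{appp1.11} and the proposition following it. Your treatment of Part~(2) by approximating $n^{-1}V_n$ with $\langle V^{(k)},\mu_n^\cdot\rangle$ and invoking the approximation theorem of \cite{DZ2000} is also the paper's Section~\ref{ss:LDPlev1}.

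The divergence is in the upper bound, and that is where your sketch has a real hole. You cover $\{x:\mu_n^x\in\Gamma\}$ by Bowen balls, apply the upper Gibbs bound, and then assert a ``Katok-type count'' of the centers near a given invariant $\Q$ by $\eee^{n(h_\varphi(\Q)+o(1))}$. That counting estimate --- that the number of $(n,\e)$-separated points whose empirical measure lies in a small weak-$*$ neighborhood of $\Q$ grows at most like $\eee^{nh_\varphi(\Q)}$ --- is true under \textbf{(USCE)}, but it is neither proved in your proposal nor elementary; the natural proof weights the separated set by $\eee^{S_nV}$, compares with $N(\{S_nV\},\e,n)$, and optimizes over $V$ via the duality~\eqref{Eq:hvarExp}, i.e.\ it reconstructs the Legendre-transform computation. (You would also need to handle neighborhoods of non-invariant $\Q$ separately, using test functions of the form $V-V\circ\varphi$.) The paper avoids all of this: once \textbf{(C1)} is known, Proposition~\ref{p1.10} gives the upper bound with rate function~\eqref{4.011} directly from Kifer's theorem \cite[Theorem~2.1]{kifer-1990} (equivalently \cite[Section~4.5.1]{DZ2000}), and the identification of~\eqref{4.011} with the entropy expression~\eqref{4.012} is a separate lemma resting on \textbf{(USCE)}. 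Since Lemma~\ref{lem:WGPress} is needed in any case to obtain the Legendre-transform description of the rate function in Part~(2), the moment-generating-function route costs nothing extra, whereas your direct route duplicates that work and currently leaves its key estimate unjustified.
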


\begin{remark}\label{rem:commentt44} The same comment as in Remark~\ref{rem:additivecaseknown} applies to
Theorem~\ref{t1.17}: in the additive case, the LDP for the empirical measures
is known (\cite[Theorem~5.2]{comman-2009}, \cite[Theorem~1]{PS-2018}) and
the LDP for $n^{-1} \sigma_n$ follows by the contraction principle.
\end{remark}

Recall that $\sigma_n$ is defined by~\eqref{4.31}. By Lemma~\ref{lem:WGPress} 
below, the limit
\[
e(\alpha):=\lim_{n\to\infty}\frac{1}{n}
\log\int_{M}\eee^{-\alpha \sigma_n}\d\P
\]
exists for all $\alpha\in\rr$ and is given by
\[
e(\alpha)=\fp_\varphi\left((1-\alpha)\GG+\alpha\,\GG\circ\theta\right)  - \fp_\varphi(\GG) .
\]
The proof of the following result is exactly the same as that of 
Theorem~\ref{t1.14}.
\begin{theorem}
If in  addition to the hypotheses of Theorem~\ref{t1.17}, 
Condition~\hCom/\hRev holds, then all the conclusions of 
Theorem~\ref{t1.14} hold with $\P_n$ replaced by $\P$.
\end{theorem}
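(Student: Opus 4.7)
The plan is to mirror the proof of Theorem~\ref{t1.14} line by line, replacing the role of the periodic-orbit measures $\P_n$ by the weak Gibbs measure $\P$, and to exploit the crucial fact that both settings produce the \textbf{same} rate function $\I$ on $\cP(M)$.

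First I would invoke Theorem~\ref{t1.17} (to be established in Section~\ref{s:abstractLevel3} via Theorem~\ref{appt1.17}), which gives the LDP for the empirical measures $\mu_n^\argdot$ under the law $\P$ with rate function~$\I$ defined exactly by~\eqref{4.012first}. Since this is literally the same functional $\I$ appearing in Theorem~\ref{t1.14}, the symmetry identity
\[
\I(\widehat\Q)=\I(\Q)+\ep(\Q),\qquad \Q\in\cP_\varphi(M),
\]
carries over verbatim: the derivation uses only $(\GG\circ\theta)(\Q)=\GG(\widehat\Q)$ (relation~\eqref{4.019}), $h_\varphi(\widehat\Q)=h_\varphi(\Q)$ (relation~\eqref{Eq:htheta}), and $\fp_\varphi(\GG\circ\theta)=\fp_\varphi(\GG)$ (relation~\eqref{eq:Psymtheta}), none of which depends on the particular reference measure for the LDP.

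Second, I would deduce the LDP for $n^{-1}\sigma_n$ under $\P$ from the contraction principle applied to Part~(1) of Theorem~\ref{t1.17}. Here I use Lemma~\ref{l:Qhatasym} to note that $\{\sigma_n\}=\{G_n-G_n\circ\theta_n^\pm\}$ is asymptotically additive with approximating sequence $\sigma^{(k)}=G^{(k)}-G^{(k)}\circ\theta$, and that by Lemma~\ref{lem:limiteGGPP} the functional $\Q\mapsto\VV(\Q)$ for $\VV=\{\sigma_n\}$ is continuous on $\cP_\varphi(M)$ with $\VV(\Q)=\ep(\Q)$. This is exactly the contraction ingredient used in Part~(2) of Theorem~\ref{t1.9}, so the LDP~\eqref{2.177} holds for $n^{-1}\sigma_n$ under $\P$, with rate function
\[
I(s)=\inf\{\I(\Q):\Q\in\cP_\varphi(M),\,\ep(\Q)=s\}.
\]
The identification of $I$ with the Legendre transform of $\alpha\mapsto e(-\alpha)$ then follows from Lemma~\ref{lem:WGPress} (which gives existence of $e(\alpha)=\fp_\varphi((1-\alpha)\GG+\alpha\,\GG\circ\theta)$ for the weak Gibbs setting), combined with the same convexity/Legendre duality argument as in Theorem~\ref{t1.9}~(2).

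Finally, the fluctuation relation $I(-s)=I(s)+s$ is obtained by the very same four-line calculation at the end of the proof of Theorem~\ref{t1.14}: using $\ep(\widehat\Q)=-\ep(\Q)$ from~\eqref{eq:symmetryep}, the symmetry $\I(\widehat\Q)=\I(\Q)+\ep(\Q)$, and the bijection $\Q\mapsto\widehat\Q$ on $\cP_\varphi(M)$, one rewrites $I(-s)=\inf\{\I(\Q):\ep(\Q)=-s\}$ as $\inf\{\I(\widehat\Q)-\ep(\Q):\ep(\Q)=-s\}=\inf\{\I(\Q'):\ep(\Q')=s\}+s=I(s)+s$. I do not anticipate a genuine obstacle here; the only nontrivial point is the existence of $e(\alpha)$ and the LDP for $n^{-1}\sigma_n$ in the weak Gibbs setting, both of which are handled either by Theorem~\ref{t1.17} and the contraction principle, or independently via Proposition~\ref{p4.3}.
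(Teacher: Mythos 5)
Your proposal is correct and takes essentially the same approach as the paper, which disposes of this theorem by observing that the proof of Theorem~\ref{t1.14} carries over verbatim once Theorem~\ref{t1.17} supplies the LDP for the empirical measures under $\P$ and Lemma~\ref{lem:WGPress} supplies the entropic pressure $e(\alpha)$. Your explicit spelling-out of the symmetry identities, the contraction step for $\{\sigma_n\}$, and the final fluctuation-relation computation matches the paper's intended argument exactly.
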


\begin{remark}\label{rem:atequilibrium2}
Our assumptions do not exclude the situation
where the sequence $\{n^{-1} \sigma_n\}$ is exponentially equivalent to 0, \ie
the situation where $I(0) = 0$ and $I(s) = +\infty$ for all $s\neq 0$. In this case
the symmetry \eqref{FRagain} is trivial, and $\ep(\P) = 0$
(note that \eqref{FRagain} always implies that $\ep(\P) \geq 0$).
It is a non-trivial question to determine whether a given system
is truly out of equilibrium, \ie if $\ep(\P) > 0$.
In the additive case, when $G_n = S_n G$, and if $\P \in \cP_\varphi(M)$,
we have $\ep(\P) = \int \sigma \d \P$, where $\sigma$ is given by \eqref{eq:defsigmaGG}.
In the context of Example~\ref{exa:anosovhisto}, when $\P$ is the SRB measure,
the inequality $\ep(\P) > 0$ is called {\em dissipativity} and is often {\em assumed}
(see~\cite{BGM_EFNC} for a model where dissipativity can be proved 
and~\cite[Proposition~11.5]{JPRB-2011} for generic results in this direction).
See also \cite{ruelle1996positivity,BFGGUChaos} for discussions and
\cite[Theorem 5.2]{MV-2003}, \cite[Corollary 10.16 (7)]{JPRB-2011}.
\end{remark}

\begin{remark} Weak Gibbs measures have been extensively studied in the recent 
literature on multifractal formalism; see~\cite{CJPS_phys} for references and 
additional information.
\end{remark}

\section{Large deviation principles}
\label{s:abstractLevel3}

\subsection{Main result and applications}

In this subsection we establish the LDP for the empirical measures~$\{\mu_n^x\}$
defined in~\eqref{oc-me} and for asymptotically additive potential sequences 
under some assumptions that cover both the sequence of measures~\eqref{add}
concentrated on periodic orbits and weak Gibbs measures.

We begin with the identification of the rate function for the LDP. Given a 
sequence $\GG=\{G_n\}\in{\cal A}(M)$ and~$V\in C(M)$, we set  
$\GG_V=\{G_n+S_n V\}$ and define a map~$\I:\PP(M)\to[0,\infty]$ by
\begin{equation}\label{4.011}
\I(\Q)=\sup_{V\in C(M)}\bigl(\langle V,\Q\rangle
-\fp_\varphi(\GG_V)+\fp_\varphi(\GG)\bigr).
\end{equation}
Note that it follows from~\eqref{eq:pGGGGprime} that for any fixed $\GG$ we 
have 
\[
|\fp_\varphi(\GG_V)-\fp_\varphi(\GG_{V'})| \leq \|V-V'\|_\infty,
\]
and in particular that the map $V\mapsto\fp_\varphi(\GG_V)$ is continuous.

\begin{lemma}
The map~$\cP(M)\ni\Q\mapsto\I(\Q)\in[0,+\infty]$ is lower semicontinuous and 
convex. Moreover, if Condition~\hUSCE holds, then
\begin{equation} \label{4.012}
\I(\Q)=
\left\{
\begin{array}{cl}
-\GG(\Q)-h_\varphi(\Q)+\fp_\varphi(\GG)
&\quad\text{for }\Q\in\PP_\varphi(M),\\[4pt] 
+\infty&\quad\text{otherwise}.
\end{array}
\right.
\end{equation}
\end{lemma}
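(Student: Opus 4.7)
{}
The plan has three parts: lower semicontinuity and convexity, the case $\Q \notin \PP_\varphi(M)$, and the formula on $\PP_\varphi(M)$.

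First, for every fixed $V \in C(M)$, the map $\Q \mapsto \langle V,\Q\rangle - \fp_\varphi(\GG_V) + \fp_\varphi(\GG)$ is affine and continuous on $\PP(M)$ for the weak topology (the second and third terms are constants in $\Q$, and the first is continuous by definition of the weak topology). As a supremum of such affine continuous maps, $\I$ is automatically lower semicontinuous and convex. Non-negativity is obtained by taking $V=0$ in the supremum.

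Next, I would show that $\I(\Q) = +\infty$ whenever $\Q$ is not $\varphi$-invariant. The idea is to restrict the supremum in~\eqref{4.011} to functions of the coboundary form $V = \lambda(W - W\circ\varphi)$ with $W\in C(M)$ and $\lambda\in\R$. For such $V$ one has $S_n V = \lambda(W - W\circ\varphi^n)$, so $\|S_n V\|_\infty \le 2\lambda\|W\|_\infty$, which gives $\|\GG_V - \GG\|_* = 0$. By Proposition~\ref{Prop:TopP}(4), $\fp_\varphi(\GG_V) = \fp_\varphi(\GG)$, so the expression inside the supremum in~\eqref{4.011} reduces to $\lambda\langle W - W\circ\varphi, \Q\rangle$. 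If $\Q$ is not $\varphi$-invariant, one can pick $W\in C(M)$ such that $\langle W - W\circ\varphi, \Q\rangle \neq 0$, and then letting $\lambda$ tend to $\pm\infty$ with the right sign drives the expression to $+\infty$.

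Finally, assume $\Q\in\PP_\varphi(M)$. The upper bound $\I(\Q) \le -\GG(\Q) - h_\varphi(\Q) + \fp_\varphi(\GG)$ follows from the variational principle~\eqref{4.05} applied to $\GG_V$: for any $V\in C(M)$,
\[
\fp_\varphi(\GG_V) \ge \GG_V(\Q) + h_\varphi(\Q) = \GG(\Q) + \langle V,\Q\rangle + h_\varphi(\Q),
\]
so rearranging and taking the supremum over $V$ yields the claimed upper bound. The matching lower bound is precisely the content of the second variational formula~\eqref{Eq:hvarExpAsymptre} in Proposition~\ref{Prop:TopP}(6), which is where Condition~\hUSCE enters: it gives
\[
h_\varphi(\Q) = \inf_{V\in C(M)}\bigl(\fp_\varphi(\GG_V) - \GG_V(\Q)\bigr)
= \inf_{V\in C(M)}\bigl(\fp_\varphi(\GG_V) - \GG(\Q) - \langle V,\Q\rangle\bigr),
\]
and rearranging identifies $-h_\varphi(\Q) - \GG(\Q) + \fp_\varphi(\GG)$ with the supremum defining $\I(\Q)$.

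I expect the main obstacle to be the non-invariant case: one must be careful that $\PP_\varphi(M)$ is the set of measures annihilated by all coboundaries $W - W\circ\varphi$, $W\in C(M)$, and that this is non-trivial (it is a consequence of the Hahn--Banach separation theorem applied to $\PP_\varphi(M)$, combined with the Riesz representation of continuous linear functionals on $C(M)$). Everything else is a direct consequence of the variational principles already established in Section~\ref{s1b}.
\hfill\qed
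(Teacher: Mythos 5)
Your proof is correct and follows essentially the same route as the paper's: lower semicontinuity and convexity from the supremum of continuous affine maps, the coboundary family $\lambda(W-W\circ\varphi)$ with $\fp_\varphi(\GG_V)=\fp_\varphi(\GG)$ to force $\I(\Q)=+\infty$ off $\PP_\varphi(M)$, and the variational principles \eqref{4.05} and \eqref{Eq:hvarExpAsymptre} on $\PP_\varphi(M)$. The only cosmetic difference is that the paper obtains the invariant case as a single identity by rewriting the supremum, and the existence of a separating $W$ in the non-invariant case needs only the uniqueness part of the Riesz representation theorem, not a Hahn--Banach separation argument.
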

An immediate consequence of~\eqref{4.012} is that the map 
$\cP_\varphi(M)\ni\Q\mapsto\I(\Q)$ is affine if~$\varphi$ satisfies~\hUSCE.

\begin{proof}{}
By definition~\eqref{4.011}, the function~$\I$ is the pointwise supremum of a 
family of continuous affine maps. Therefore it is convex and lower semicontinuous.

If $\Q\notin\PP_\varphi(M)$, then there is $V\in C(M)$ such that
$\delta:=\langle V,\Q\rangle-\langle V\circ\varphi,\Q\rangle>0$. Thus, letting
$V_m=m(V-V\circ\varphi)$ and observing that 
$\|S_n V_m\|_\infty\le 2m\|V\|_\infty$, we deduce from~\eqref{eq:pGGGGprime} that
$\fp_\varphi(\GG_{V_m})=\fp_\varphi(\GG)$. Since 
$\langle V_m,\Q\rangle=m\delta$, we conclude that the supremum in~\eqref{4.011} 
is equal to~$+\infty$.

To prove~\eqref{4.012} in the case $\Q\in\PP_\varphi(M)$, we 
rewrite~\eqref{4.011} as 
\[
\I(\Q)=\sup_{V\in C(M)}\bigl(\GG_V(\Q)-\fp_\varphi(\GG_V)\bigr) 
-\GG(\QQ)+\fp_\varphi(\GG).
\]
The required result now follows by the variational 
principle~\eqref{Eq:hvarExpAsymptre}.\hfill\qed
\end{proof}

Given a sequence  $\{\P_n\}_{n\geq 1}\subset \cP(M)$ and a function~$V\in C(M)$, 
we define 
\[
A_n(V)=\int_M\eee^{n\langle V,\mu_n^x\rangle}\IP_n(\dd x)
=\int_M\eee^{S_n V(x)}\IP_n(\dd x). 
\]
\newcommand{\hCone}{\hyperlink{hyp.C1}{\textbf{(C1)}}\xspace}%
\newcommand{\hCtwo}{\hyperlink{hyp.C2}{\textbf{(C2)}}\xspace}%
\begin{theorem}\label{appt1.17}
Suppose that Conditions~\hS and~\hUSCE hold. Let 
$\{\P_n\}\subset\cP(M)$ and  $\GG\in\aA(M)$ be such that:
\begin{itemize}
\item[\hypertarget{hyp.C1}{\textbf{(C1)}}] For all $V\in C(M)$, we have
\begin{equation} \label{2.03first}
\lim_{n\to\infty}\frac1n\log A_n(V)=\fp_\varphi(\GG_V) - \fp_\varphi(\GG).
\end{equation}
\item[\hypertarget{hyp.C2}{\textbf{(C2)}}] For any $0<\delta\ll1$ there is an integer $n_0(\delta)\ge1$ and 
sequences $K_n(\delta)\ge1$,  $m_\delta(n)\in \N$, such that
\begin{gather}
K_n(\delta)^{-1}\,\eee^{G_n(x)-n\fp_\varphi(\GG)}
\le \IP_n\bigl(B_{n-m_\delta(n)}(x,\delta)\bigr)
\quad\text{for }x\in M, n\geq n_0(\delta),\label{eq:applowerassum}\\
\lim_{\delta\downarrow0}\limsup_{n\to\infty}
\frac1n\bigl(\log K_n(\delta)+ m_\delta(n)\bigr)=0.
\label{eq:conditionKnMdelta}
\end{gather}
\end{itemize}
Then: 
\ben
\item The LDP~\eqref{2.17} holds with the rate function $\I$ given 
by~\eqref{4.012}.
\item For any $\cV=\{V_n\}\in{\cal A}(M)$, the sequence $\{\frac1nV_n\}$ satisfies 
the LDP~\eqref{2.177} with the good convex rate function $I:\R\to[0,+\infty]$ 
defined by the contraction relation  
\begin{equation*}
I(s)=\inf\{\I(\Q):\Q\in\PP_\varphi(M),\VV(\Q)=s\}.
\end{equation*}
Moreover, $I$ is the Legendre transform of the function
$\alpha\mapsto\fp_\varphi(\GG+\alpha \VV)-\fp_\varphi(\GG)$.
\een
\end{theorem}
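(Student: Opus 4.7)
The plan is to establish Part~(1) by the usual split into an upper and a lower bound LDP for empirical measures, and then to derive Part~(2) by combining the contraction principle with an exponential-equivalence argument based on asymptotic additivity. The rate function $\I$ in~\eqref{4.012} is already shown to be convex and lower semicontinuous just before the theorem statement, so the main content is the two matching bounds and the identification of $I$ with a Legendre transform.

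For the upper bound I would argue as follows. For any $V\in C(M)$ and any Borel $\Gamma\subset\PP(M)$, Chebyshev's inequality applied to $e^{n\langle V,\mu_n^x\rangle}$ and the representation of $A_n(V)$ give
\[
\P_n\{\mu_n^\cdot\in\Gamma\}\le A_n(V)\,\eee^{-n\inf_{\Q\in\Gamma}\langle V,\Q\rangle}.
\]
Taking $\limsup n^{-1}\log$ and invoking~\hCone yields an upper bound by $\fp_\varphi(\GG_V)-\fp_\varphi(\GG)-\inf_{\Q\in\Gamma}\langle V,\Q\rangle$. For any $\Q_0\in\cP(M)$ and $\eta>0$, pick $V\in C(M)$ which almost realizes the supremum in the definition~\eqref{4.011} of $\I(\Q_0)$ (truncated by $\eta^{-1}$ if $\I(\Q_0)=+\infty$); by continuity of $\Q\mapsto\langle V,\Q\rangle$ this bound extends to a weak neighborhood of $\Q_0$. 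Since $\cP(M)$ is compact, any closed $\Gamma$ is covered by finitely many such neighborhoods, giving the standard upper bound with rate $\I$.

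The lower bound is the main obstacle and follows the F\"ollmer--Orey/Orey--Pelikan scheme outlined in Remark~\ref{r3}. It suffices to prove, for every open $G\subset\cP(M)$ and every $\Q\in G\cap\cP_\varphi(M)$, that
\[
\liminf_{n\to\infty}n^{-1}\log\P_n\{\mu_n^\cdot\in G\}\ge-\I(\Q).
\]
First, by the entropy-density Proposition~\ref{Prop:EntropyDensity} (which uses~\hS) together with the continuity of $\Q\mapsto\GG(\Q)$ on $\cP_\varphi(M)$ from Lemma~\ref{lem:limiteGGPP}, we may assume $\Q$ ergodic. Fix $\eta>0$. Birkhoff's theorem applied to finitely many test functions cutting out a neighborhood of $\Q$ in $G$, together with an approximating sequence $\{G^{(k)}\}$ for $\GG$ and Lemma~\ref{lem:approxGmn}, produces a set $B\subset M$ with $\Q(B)\ge1/2$ such that, for all large $n$ and all $x\in B$, one has both $\mu_n^x\in G$ and $n^{-1}G_n(x)\ge\GG(\Q)-\eta$. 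Using a Katok-type entropy estimate (as in~\cite{EKW-1994,PS-2005}, and ultimately~\cite{katok-1980}), one extracts an $(n-m_\delta(n),2\delta)$-separated subset $F_n\subset B$ with $|F_n|\ge\eee^{n(h_\varphi(\Q)-\eta)}$ for all sufficiently small $\delta>0$ and large $n$. The Bowen balls $\{B_{n-m_\delta(n)}(x,\delta):x\in F_n\}$ are then pairwise disjoint; moreover, $y\in B_{n-m_\delta(n)}(x,\delta)$ implies $|\langle V,\mu_n^x-\mu_n^y\rangle|\le\omega_V(\delta)+2\|V\|_\infty m_\delta(n)/n$ for any $V\in C(M)$, so for small $\delta$ and large $n$ every such $y$ also satisfies $\mu_n^y\in G$. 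Summing the lower bound~\eqref{eq:applowerassum} over $x\in F_n$ gives
\[
\P_n\{\mu_n^\cdot\in G\}\ge K_n(\delta)^{-1}|F_n|\,\eee^{n(\GG(\Q)-\eta)-n\fp_\varphi(\GG)},
\]
and taking $n^{-1}\log$, then $n\to\infty$, then $\delta\downarrow 0$ and $\eta\downarrow0$ and using~\eqref{eq:conditionKnMdelta} yields $-\I(\Q)$.

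For Part~(2), the map $\Q\mapsto\langle V,\Q\rangle$ is weakly continuous when $\VV=\{S_nV\}$ is additive, and the contraction principle directly transfers the LDP from $\mu_n^\cdot$ to $n^{-1}S_nV$. For a general $\VV=\{V_n\}\in\cA(M)$ with approximating sequence $\{V^{(k)}\}\subset C(M)$, asymptotic additivity and~\eqref{eq:nm1Gnfini} give $\limsup_n n^{-1}\|V_n-S_nV^{(k)}\|_\infty\to 0$ as $k\to\infty$, so the sequences $n^{-1}V_n$ and $n^{-1}S_nV^{(k)}$ are exponentially equivalent in the sense of~\cite[Definition~4.2.10]{DZ2000} in the diagonal limit, and the LDP is transferred by~\cite[Theorem~4.2.13]{DZ2000}, with rate function $I(s)=\inf\{\I(\Q):\Q\in\PP_\varphi(M),\VV(\Q)=s\}$ (the contraction over the closed set $\{\VV=s\}\subset\cP_\varphi(M)$ being justified by the continuity of $\VV$ on $\cP_\varphi(M)$ from Lemma~\ref{lem:limiteGGPP}). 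Goodness of $I$ follows from goodness of $\I$ on the compact space $\cP(M)$. Finally, the Legendre-transform identification is obtained by combining~\eqref{4.012} with the variational principle~\eqref{4.05} applied to $\GG+\alpha\VV$:
\[
\sup_{s\in\R}\bigl(\alpha s-I(s)\bigr)=\sup_{\Q\in\cP_\varphi(M)}\bigl(\alpha\VV(\Q)+h_\varphi(\Q)+\GG(\Q)\bigr)-\fp_\varphi(\GG)=\fp_\varphi(\GG+\alpha\VV)-\fp_\varphi(\GG),
\]
and inverting this Legendre transform (possible since $I$ is convex and lower semicontinuous on $\R$) gives the stated formula.
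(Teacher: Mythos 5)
Your proposal is correct and follows essentially the same route as the paper's proof: the upper bound is extracted from \textbf{(C1)} by the standard Chebyshev/covering argument (the paper simply invokes Kifer's theorem at this step), the lower bound reduces to ergodic $\Q$ via Proposition~\ref{Prop:EntropyDensity} and then combines Birkhoff averages, an entropy-counting estimate and \eqref{eq:applowerassum} summed over a disjoint family of Bowen balls, and Part~(2) is obtained by approximating $\VV$ with the additive sequences $\{S_nV^{(k)}\}$ and contracting. The only divergences are cosmetic: the paper obtains the cardinality of the disjoint Bowen-ball family from the Brin--Katok formula together with a maximal-collection/ratio argument rather than from a Katok-type separated-set estimate (the two are interchangeable, cf.\ Remark~\ref{r3}), and in Part~(2) the applicable result is the exponentially-good-approximations theorem \cite[Theorem~4.2.16 or~4.2.23]{DZ2000} rather than \cite[Theorem~4.2.13]{DZ2000} on exponential equivalence, since for fixed $k$ the deterministic error $\limsup_n n^{-1}\|V_n-S_nV^{(k)}\|_\infty$ need not vanish and the two sequences are therefore not exponentially equivalent --- your ``diagonal limit'' caveat is exactly what the double-indexed approximation theorem formalizes, and the identification of the resulting rate function with the contraction $J(s)=\inf\{\I(\Q):\VV(\Q)=s\}$ then requires the uniform convergence $\langle V^{(k)},\Q\rangle\to\VV(\Q)$ on $\cP_\varphi(M)$ from Lemma~\ref{lem:limiteGGPP}, as you indicate.
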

The two parts of Theorem~\ref{appt1.17} are proved in Sections~\ref{subs:LDPOM}
and~\ref{ss:LDPlev1} below. Here we prove that Theorem~\ref{appt1.17} applies to
the sequences of measures~\eqref{add} and to weak Gibbs measures.

\begin{lemma}
Assume that $\varphi$ satisfies Conditions~\hWPS and~\hPAP. Then the 
measures defined by~\eqref{add} satisfy Conditions~\hCone and~\hCtwo.
\end{lemma}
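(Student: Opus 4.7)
The plan is to verify the two conditions separately, using Condition~\hPAP for \hCone and Condition~\hWPS together with Lemma~\ref{lem:approxGmn} for \hCtwo.

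For \hCone, observe that with $\P_n$ as in~\eqref{add} one has, for any $V\in C(M)$,
\[
A_n(V)=\int_M \eee^{S_nV(x)}\,\P_n(\dd x)
=\frac{1}{Z_n(\GG)}\sum_{x\in M_n}\eee^{G_n(x)+S_nV(x)}
=\frac{Z_n(\GG_V)}{Z_n(\GG)}.
\]
Since $\GG_V=\{G_n+S_nV\}$ is again in $\cA(M)$, Condition~\hPAP applied to both $\GG$ and $\GG_V$ gives $n^{-1}\log Z_n(\GG)\to\fp_\varphi(\GG)$ and $n^{-1}\log Z_n(\GG_V)\to\fp_\varphi(\GG_V)$, whence~\eqref{2.03first}. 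This step is essentially immediate.

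For \hCtwo, fix $\delta>0$ and let $\{m_\delta(n)\}$ be the sequence furnished by~\hWPS, so that $0\le m_\delta(n)<n$ for $n\ge n_0$, $\lim_{\delta\downarrow 0}\lim_{n\to\infty}n^{-1}m_\delta(n)=0$, and $M_n\cap B_{n-m_\delta(n)}(x,\delta)\ne\varnothing$ for every $x\in M$ and $n\ge n_0$. For any such $x$, pick $y=y(x,n,\delta)\in M_n\cap B_{n-m_\delta(n)}(x,\delta)$ and bound
\[
\P_n\bigl(B_{n-m_\delta(n)}(x,\delta)\bigr)
\ge\P_n(\{y\})
=\frac{\eee^{G_n(y)}}{Z_n(\GG)}.
\]
Since $x$ itself lies in $B_{n-m_\delta(n)}(x,\delta)$, the quantity
\[
\eta_\delta(n):=\sup_{x\in M}\sup_{y,z\in B_{n-m_\delta(n)}(x,\delta)}|G_n(y)-G_n(z)|
\]
controls $|G_n(y)-G_n(x)|\le\eta_\delta(n)$, and Lemma~\ref{lem:approxGmn}, specifically~\eqref{4.010ss}, yields $\lim_{\delta\downarrow 0}\limsup_{n\to\infty}n^{-1}\eta_\delta(n)=0$ (the hypothesis $\lim_\delta\limsup_n n^{-1}m_\delta(n)=0$ needed there being guaranteed by~\hWPS). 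Combining these,
\[
\P_n\bigl(B_{n-m_\delta(n)}(x,\delta)\bigr)
\ge\frac{\eee^{G_n(x)-\eta_\delta(n)}}{Z_n(\GG)}.
\]

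It remains to replace $Z_n(\GG)$ by $\eee^{n\fp_\varphi(\GG)}$ at the cost of a subexponential error. By~\hPAP, the quantity $\varepsilon_n:=\log Z_n(\GG)-n\fp_\varphi(\GG)$ satisfies $\varepsilon_n/n\to 0$, so setting
\[
K_n(\delta):=\exp\!\bigl(\eta_\delta(n)+\varepsilon_n\bigr)
\]
gives $\P_n(B_{n-m_\delta(n)}(x,\delta))\ge K_n(\delta)^{-1}\,\eee^{G_n(x)-n\fp_\varphi(\GG)}$ as required in~\eqref{eq:applowerassum}. Finally,
\[
\frac1n\bigl(\log K_n(\delta)+m_\delta(n)\bigr)
=\frac{\eta_\delta(n)}{n}+\frac{\varepsilon_n}{n}+\frac{m_\delta(n)}{n},
\]
and taking $\limsup_{n\to\infty}$ followed by $\lim_{\delta\downarrow 0}$ sends the first and third terms to $0$ by the choices above and the middle term to $0$ by~\hPAP, establishing~\eqref{eq:conditionKnMdelta}.

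There is no genuine obstacle here: the argument is an exercise in bookkeeping of subexponential error terms. The only delicate point is ensuring that the sequence $m_\delta(n)$ supplied by~\hWPS is admissible in the hypothesis of~\eqref{4.010ss}, which is why the weak \emph{periodic} specification (with its built-in control on $m_\delta(n)/n$) is the right chaoticity input at this stage.
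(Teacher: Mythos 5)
Your proof is correct and follows essentially the same route as the paper's: \hCone is immediate from \hPAP applied to $\GG$ and $\GG_V$, and \hCtwo combines the periodic point supplied by \hWPS with the oscillation bound~\eqref{4.010ss} of Lemma~\ref{lem:approxGmn} and the partition-function asymptotics from \hPAP; the paper merely routes the estimate through $G_{n-m_\delta(n)}$ and $Z_{n-m_\delta(n)}$ rather than working directly with $G_n$ and $Z_n$ as you do. The only cosmetic point is that your $K_n(\delta)=\exp(\eta_\delta(n)+\varepsilon_n)$ need not satisfy the requirement $K_n(\delta)\ge1$ in \hCtwo since $\varepsilon_n$ may be negative; replacing $\varepsilon_n$ by $|\varepsilon_n|$ (as the paper does) fixes this without affecting either \eqref{eq:applowerassum} or \eqref{eq:conditionKnMdelta}.
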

\begin{proof}{} Since 
\begin{equation*}
\log A_n=\log\sum_{x\in M_n}\eee^{G_n(x)+S_n V(x)}-\log Z_n 
=\log\sum_{x\in M_n}\eee^{G_n(x)+S_n V(x)}-\log\sum_{x\in M_n}\eee^{G_n(x)},
\end{equation*}
Condition~\hPAP yields~\hCone.

To prove~\hCtwo,  let $m_\delta(n)$ and $n_0(\delta)$ be as in the definition of 
Condition~\hWPS. Set
\[
\lambda(n,\delta) = \frac 1 n\|G_{n-m_\delta(n)} - G_n\|_\infty,
\]
and let~$\gamma(n,\delta)$ be  defined by~\eqref{eq:gammandelta}.
It follows from Lemma~\ref{lem:approxGmn} that
\begin{equation} \label{5.8}
\lim_{\delta \downarrow 0 }\limsup_{n\to\infty} 
\bigl(\gamma(n,\delta) +\lambda(n,\delta)\bigr)= 0.	
\end{equation}
By \hWPS, for any $n\geq n_0(\delta)$ the intersection 
$M_n\cap B_{n-m_\delta(n)}(x,\delta)$ contains at least one point~$x_n$, so that, by 
writing $n'(\delta)=n-m_\delta(n)$, we have
\begin{align*}
\log\IP_n\bigl(B_{n'(\delta)}(x,\delta)\bigr)
&\geq G_{n'(\delta)}(x_n)-\log Z_{n'(\delta)}\\
&\geq G_{n'(\delta)} (x)  - n'(\delta)\gamma(n'(\delta),\delta)- \log Z_{n'(\delta)}\\
&\geq G_n(x)-n\lambda(n,\delta)-n'(\delta)\gamma(n'(\delta),\delta)-\log Z_{n'(\delta)}\\
&\geq G_n(x)-n\fp_\varphi(\GG)-\log K_n(\delta),
\end{align*}
where
\[
K_n(\delta)=\exp\bigl\{n\lambda(n,\delta)+n'(\delta)\gamma(n'(\delta),\delta)
+|\log Z_{n'(\delta)}-n\fp_\varphi(\GG)|\bigr\}. 
\]
Condition~\hCtwo now follows from~\eqref{5.8}, the definition of $\fp_\varphi(\GG)$, and the condition satisfied by $m_\delta(n)$ in \hWPS. 
\hfill\qed 	
\end{proof}

\begin{lemma}\label{lem:WGPress}
Suppose that Condition~\hUSCE holds and that $\P\in\PP(M)$ is a  weak Gibbs measure 
for $\GG\in\aA(M)$. Then for all $\GG'\in\cA(M)$ we have
\begin{equation}\label{2.03zero}
\lim_{n\to\infty}\frac1n\log \left\langle\eee^{G_n'},\IP\right\rangle
=\fp_\varphi(\GG+\GG')-\fp_\varphi(\GG).
\end{equation}
\end{lemma}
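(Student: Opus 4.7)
The plan is to sandwich the integral $\langle\eee^{G_n'},\P\rangle$ between weighted sums of the form $\sum_{x\in E}\eee^{(G_n+G_n')(x)}$ over $(\e,n)$-separated (for the lower bound) and $(\e,n)$-spanning (for the upper bound) sets $E\subset M$. The weak Gibbs property of $\P$ allows one to transfer the integral to a sum over Bowen-ball centers, while Lemma~\ref{lem:approxGmn} controls the variations of both $\GG$ and $\GG'$ on Bowen balls. The sums produced are then recognized, via Proposition~\ref{Prop:TopP}, as $\frac 1 n \log$-asymptotes of the pressure $\fp_\varphi(\GG+\GG')$. Note that $\GG+\GG'\in\aA(M)$ because $\aA(M)$ is a vector space, so this pressure is well-defined and finite under \hUSCE.

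For the upper bound, fix $\e>0$ and choose a maximal $(\e,n)$-separated set $E_n(\e)$; by maximality it is also $(\e,n)$-spanning, hence $\{B_n(x,\e)\}_{x\in E_n(\e)}$ covers $M$. Applying Lemma~\ref{lem:approxGmn} to $\GG'$ yields a sequence $\gamma'(n,\e)$ with $\lim_{\e\downarrow 0}\limsup_n\gamma'(n,\e)=0$, such that $G_n'(y)\le G_n'(x)+n\gamma'(n,\e)$ on $B_n(x,\e)$. Combined with the upper weak Gibbs bound~\eqref{2.47},
\[
\int_M\eee^{G_n'}\d\P\le\sum_{x\in E_n(\e)}\eee^{G_n'(x)+n\gamma'(n,\e)}\,\P(B_n(x,\e))\le K_n(\e)\,\eee^{n\gamma'(n,\e)-n\fp_\varphi(\GG)}N(\GG+\GG',\e,n).
\]
Taking $n^{-1}\log$, then $\limsup_n$, then $\e\downarrow 0$, the error terms vanish by~\eqref{2.48} and Lemma~\ref{lem:approxGmn}, while the pressure term converges to $\fp_\varphi(\GG+\GG')$ by~\eqref{2.4} in Proposition~\ref{Prop:TopP}. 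This yields $\limsup_n n^{-1}\log\langle\eee^{G_n'},\P\rangle\le \fp_\varphi(\GG+\GG')-\fp_\varphi(\GG)$.

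For the lower bound, pick for each $n$ an $(\e,n)$-separated set $E_n(\e)$ whose sum $\sum_{x\in E_n(\e)}\eee^{(G_n+G_n')(x)}$ comes within a factor $2$ of $N(\GG+\GG',\e,n)$. The Bowen balls $\{B_n(x,\e/2)\}_{x\in E_n(\e)}$ are pairwise disjoint: if $y$ lay in two of them, the triangle inequality would place a distinct center inside $B_n(x,\e)$, contradicting separation. Using the lower weak Gibbs bound and once more the variation control $|G_n'(y)-G_n'(x)|\le n\gamma'(n,\e/2)$,
\[
\int_M\eee^{G_n'}\d\P\ge\sum_{x\in E_n(\e)}\eee^{G_n'(x)-n\gamma'(n,\e/2)}\,\P(B_n(x,\e/2))\ge K_n(\e/2)^{-1}\eee^{-n\gamma'(n,\e/2)-n\fp_\varphi(\GG)}\cdot\tfrac12 N(\GG+\GG',\e,n).
\]
Applying $n^{-1}\log$, $\liminf_n$, and finally $\e\downarrow 0$ gives the matching lower bound, proving~\eqref{2.03zero}.

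The main subtlety is that the sequence $G_n'$ is only assumed bounded measurable, not continuous, so the standard Ruelle--Walters pressure argument does not apply directly. The rescue is Lemma~\ref{lem:approxGmn}, whose content is precisely that asymptotic additivity forces the oscillations of $G_n'$ on Bowen balls to grow sub-linearly; this is the decisive ingredient enabling the sandwich above to close in the $\e\downarrow 0$ limit. A secondary point worth checking is that the above argument is insensitive to whether one uses $N$ or $S$ to represent $\fp_\varphi(\GG+\GG')$, which is exactly the content of Proposition~\ref{Prop:TopP}(1).
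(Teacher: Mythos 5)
Your proof is correct and follows essentially the same route as the paper's: sandwich $\langle\eee^{G_n'},\P\rangle$ between weighted sums over spanning/separated sets via the weak Gibbs bounds~\eqref{2.47}--\eqref{2.48}, control the oscillation of $G_n'$ on Bowen balls via Lemma~\ref{lem:approxGmn}, and identify the resulting sums with $\fp_\varphi(\GG+\GG')$ through~\eqref{2.3}--\eqref{2.4}. The only (cosmetic) difference is that you work with a single maximal $(\e,n)$-separated set and the quantity $N(\GG+\GG',\e,n)$ in both directions, whereas the paper uses spanning sets and $S(\GG+\GG',\e,n)$ for the upper bound; by Proposition~\ref{Prop:TopP}(1) these yield the same limit.
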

\begin{proof}{} 
The proof follows that of Proposition~3.2 in~\cite{kifer-1990} (see 
also~\cite[Proposition~10.3]{JPRB-2011}); since an additional limiting argument is needed, 
we include it for the sake of completeness.

For any $\e>0$, $n\ge1$, and any $(\e,n)$-spanning set $E_{\e,n}$, using~\eqref{2.47} we 
derive
\begin{align*}
\left\langle\eee^{G_n'},\IP \right\rangle
&\le\sum_{x\in E_{\e,n}}\bigl\langle\boldsymbol{1}_{B_n(x,\e)}\eee^{G_n'},\IP\bigr\rangle 
\leq\eee^{n \gamma(n,\e)}\sum_{x\in E_{\e,n}}
\eee^{G_n'(x)}\IP\bigl(B_n(x,\e)\bigr)\\
&\leq K_n(\e)\eee^{n \gamma(n,\e)-n\fp_\varphi(\GG)}\sum_{x\in E_{\e,n}}
\eee^{G_n(x)+G_n'(x)},
\end{align*}
where $\gamma(n,\e)$ is defined by~\eqref{eq:gammandelta}. It follows that 
\begin{equation*}
\limsup_{n\to\infty}\frac1n\log\left\langle\eee^{G_n'},\IP\right\rangle 
\leq\limsup_{n\to\infty}\bigl(n^{-1}\log K_n(\e)+\gamma(n,\e)
+n^{-1}\log S(\GG+\GG',\e,n)\bigr)-\fp_\varphi(\GG).
\end{equation*}
Sending $\e\to 0$ and using expression~\eqref{2.3} for $\fp_\varphi(\GG+\GG')$, we obtain 
the ``$\leq$'' inequality in~\eqref{2.03zero}. 

To prove the reverse inequality, we proceed similarly, observing that for any 
$(\e,n)$-separated set $E_{\e,n}$ we have
\begin{align*}
\left \langle \eee^{G_n'}, \IP \right\rangle
&\geq \sum_{x\in E_{\e,n}}\bigl\langle\boldsymbol{1}_{B_n(x,\e/2)}\eee^{G_n'},\IP\bigr\rangle  \geq K_n^{-1}(\e/2)\eee^{-n\gamma(n,\e/2)-n\fp_\varphi(\GG)}\sum_{x\in E_{\e,n}}\eee^{G_n(x)
+ G_n'(x)}.
\end{align*}
Taking the supremum over all $(\e,n)$-separated sets, repeating the above argument, and 
using expression~\eqref{2.4} for $\fp_\varphi(\GG+\GG')$, we obtain the desired result.
\hfill\qed 	
\end{proof}

\begin{lemma}
Let~$\P\in\PP(M)$ be a weak  Gibbs measure. Then Conditions~\hCone 
and~\hCtwo hold for $\P_n = \P$.
\end{lemma}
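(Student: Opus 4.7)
Since $\P_n = \P$ is the constant sequence equal to a single weak Gibbs measure for $\GG$, both conditions follow essentially directly from results already established.

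For Condition~\hCone, the plan is to apply Lemma~\ref{lem:WGPress} with the asymptotically additive sequence $\GG' = \{S_n V\}$, which is additive (hence trivially in $\aA(M)$). This immediately yields
\[
\lim_{n\to\infty}\frac{1}{n}\log A_n(V)
= \lim_{n\to\infty}\frac{1}{n}\log \left\langle \eee^{S_n V},\P\right\rangle
= \fp_\varphi(\GG + \{S_n V\}) - \fp_\varphi(\GG)
= \fp_\varphi(\GG_V) - \fp_\varphi(\GG),
\]
which is exactly \eqref{2.03first}.

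For Condition~\hCtwo, the plan is to make the trivial choice $m_\delta(n) \equiv 0$ and $n_0(\delta) \equiv 1$, and to inherit $K_n(\delta)$ directly from the weak Gibbs property. Indeed, by Definition~\ref{d1.16} applied to $\P$, for every $\delta > 0$, every $n \geq 1$, and every $x \in M$ we have
\[
K_n(\delta)^{-1}\,\eee^{G_n(x)-n\fp_\varphi(\GG)}
\le \P\bigl(B_n(x,\delta)\bigr)
= \P_n\bigl(B_{n-m_\delta(n)}(x,\delta)\bigr),
\]
establishing \eqref{eq:applowerassum}. The required asymptotic \eqref{eq:conditionKnMdelta} becomes
\[
\lim_{\delta\downarrow0}\limsup_{n\to\infty}\frac{1}{n}\bigl(\log K_n(\delta) + 0\bigr) = 0,
\]
which is precisely \eqref{2.48}.

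There is no real obstacle here: the weak Gibbs estimate is already strong enough that no weakening (in the form of a nonzero $m_\delta(n)$) is needed, which is exactly why the sequence $\P_n$ from \eqref{add} required a careful use of~\hWPS through $m_\delta(n)$, whereas a single weak Gibbs measure fits Condition~\hCtwo immediately. \qed
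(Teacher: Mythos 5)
Your proposal is correct and follows exactly the paper's own argument: Condition \hCone via the special case $\GG'=\{S_nV\}$ of Lemma~\ref{lem:WGPress}, and Condition \hCtwo with $m_\delta(n)\equiv 0$ directly from the lower bound in Definition~\ref{d1.16}, with \eqref{eq:conditionKnMdelta} reducing to \eqref{2.48}. Nothing is missing.
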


\begin{proof}{} \hCone follows from the special case $\GG'=\{S_n V\}$ in 
Lemma~\ref{lem:WGPress}. \hCtwo with $m_\delta(n)\equiv0$ follows from  
Definition~\ref{d1.16}. 
\hfill\qed 	
\end{proof}

\subsection{Proof of the LDP for empirical measures}
\label{subs:LDPOM}

We first give the  proof of Theorem~\ref{appt1.17} (1), which is completed in the following two steps.

\paragraph{Step 1: LD upper bound.}

\begin{proposition} \label{p1.10}
If Condition~\hCone holds, then
\begin{equation} \label{2.19}
\limsup_{n\to\infty}\frac1n\log\IP_n\{\mu_n^\argdot\in F\}
\le-\inf_{\Q\in F}{\I}(\Q)
\end{equation}
holds for any closed set $F\subset\PP(M)$.
\end{proposition}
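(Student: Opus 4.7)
The plan is to use the standard exponential Chebyshev inequality combined with a finite subcover argument, which is tailor-made for the Legendre-transform formula \eqref{4.011} defining the rate function $\I$. Since $M$ is compact metric, $\PP(M)$ is compact in the weak topology, so the closed set $F$ is automatically compact; this is the key structural input that makes the subcover step free, and means that the proof goes through on the sole basis of Condition~\hCone, without recourse to \hS, \hUSCE, or the alternative formula \eqref{4.012} for $\I$.

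First, for each $\Q\in F$ and each $\eps>0$, the definition \eqref{4.011} of $\I$ lets me choose $V=V_{\Q,\eps}\in C(M)$ with
\[
\langle V,\Q\rangle-\fp_\varphi(\GG_V)+\fp_\varphi(\GG)\ \geq\ \min(\I(\Q),1/\eps)-\eps,
\]
and to form the weak-open neighborhood $O_\Q=\{\Q'\in\PP(M):\langle V,\Q'\rangle>\langle V,\Q\rangle-\eps\}$ of $\Q$. Using $\langle V,\mu_n^x\rangle=n^{-1}S_n V(x)$ and the exponential Markov inequality, I get
\[
\P_n\{\mu_n^\argdot\in O_\Q\}\ \leq\ \P_n\bigl\{S_n V>n(\langle V,\Q\rangle-\eps)\bigr\}\ \leq\ \eee^{-n(\langle V,\Q\rangle-\eps)}A_n(V),
\]
and invoking Condition \hCone to evaluate the asymptotics of $A_n(V)$ yields
\[
\limsup_{n\to\infty}n^{-1}\log\P_n\{\mu_n^\argdot\in O_\Q\}\ \leq\ -\langle V,\Q\rangle+\eps+\fp_\varphi(\GG_V)-\fp_\varphi(\GG)\ \leq\ -\min(\I(\Q),1/\eps)+2\eps.
\]

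Finally, I extract a finite subcover $O_{\Q_1},\dots,O_{\Q_k}$ of $F$ and use the elementary relation $\limsup_n n^{-1}\log(a_n^{(1)}+\cdots+a_n^{(k)})=\max_i\limsup_n n^{-1}\log a_n^{(i)}$ to obtain
\[
\limsup_{n\to\infty}n^{-1}\log\P_n\{\mu_n^\argdot\in F\}\ \leq\ -\min\Bigl(\inf_{\Q\in F}\I(\Q),\,1/\eps\Bigr)+2\eps.
\]
Letting $\eps\downarrow 0$ gives \eqref{2.19}. I do not expect any real obstacle: hypothesis \hCone has been tailored precisely to match the Legendre-dual structure of \eqref{4.011}, so each continuous test function $V$ delivers a Chebyshev bound of exactly the right form, and compactness of $\PP(M)$ converts these pointwise-on-$F$ bounds into a uniform one. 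The only minor bookkeeping is the truncation $1/\eps$, needed to handle points $\Q\in F$ with $\I(\Q)=+\infty$.
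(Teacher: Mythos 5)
Your argument is correct and is precisely the standard proof of the ``well-known fact'' that the paper invokes here by citation (Kifer's Theorem~2.1 and \cite[Section~4.5.1]{DZ2000}): exponential Chebyshev against each test function $V$ realizing the supremum in \eqref{4.011}, plus compactness of $\PP(M)$ and a finite subcover to pass from neighborhoods to the closed set $F$. The paper simply outsources these details to the references, so there is nothing to add beyond noting that your truncation by $1/\eps$ correctly handles the points with $\I(\Q)=+\infty$.
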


\begin{proof}{}
It is a well-known fact (see~\cite[Theorem~2.1]{kifer-1990} and
also~\cite[Section 4.5.1]{DZ2000}) that the existence of limit~\eqref{2.03first} implies
inequality~\eqref{2.19} with a rate function~$\I$ given by the Legendre transform of 
$\fp_\varphi(\GG_V)-\fp_\varphi(\GG)$ with respect to $V$, \ie by~\eqref{4.011}.
\hfill\qed
\end{proof}

\paragraph{Step 2: LD lower bound.}

We need to prove that, for any open set $O\subset\PP(M)$, 
\[
\liminf_{n\to\infty}\frac1n\log\IP_n\{\mu_n^\argdot\in O\}\ge-\inf_{\Q\in O}{\I}(\Q). 
\]
This inequality will be established if we prove that, for any $\Q\in O$, 
\begin{equation} \label{2.20}
\liminf_{n\to\infty}\frac1n\log\IP_n\{\mu_n^\argdot\in O\}\ge-\I(\Q). 
\end{equation}
Moreover, we only need to consider $\Q\in\PP_\varphi(M)$,
since otherwise $\I(\Q)=+\infty$ and~\eqref{2.20} is trivially satisfied.
To prove~\eqref{2.20} for $\Q\in\PP_\varphi(M)$, we follow a strategy that goes 
back to~\cite{FO-1988}, see also~\cite{EKW-1994,PS-2005}, and consider first the special 
case $\Q\in\EE_\varphi(M)$.

\begin{proposition} \label{appp1.11}
Assume Condition~\hCtwo. Then inequality~\eqref{2.20} holds 
for any open set $O\subset \PP(M)$ and any $\Q\in O\cap\EE_\varphi(M)$.  
\end{proposition}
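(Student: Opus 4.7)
The plan is to mimic the Föllmer--Orey / Orey--Pelikan / Pfister--Sullivan strategy (cf.\ Remark~\ref{r3}): use pointwise ergodic theorems to locate a large set of ``good'' points, count them via Brin--Katok and produce many essentially disjoint Bowen balls, and then apply the Gibbs-type lower bound in hypothesis \hCtwo to each of them.

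Fix $\Q\in O\cap \EE_\varphi(M)$ and set $h=h_\varphi(\Q)$, $g=\GG(\Q)$, $P=\fp_\varphi(\GG)$, so that $-\I(\Q)=h+g-P$. Choose $V_1,\dots,V_N\in C(M)$ and $\eta>0$ with $\{\mu:|\langle V_i,\mu-\Q\rangle|<\eta,\,i=1,\dots,N\}\subset O$, and fix a small $\e>0$. The first step is to invoke, on the ergodic measure $\Q$, three pointwise convergence results: Birkhoff's theorem applied to each $V_i\in C(M)$, its asymptotically-additive version (Lemma~\ref{lem:ergoasymadd}) applied to $\GG$, and the Brin--Katok local entropy formula. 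Since $m_\delta(n)/n\to0$ by \hCtwo, the Brin--Katok limit $-n_k^{-1}\log\Q(B_{n_k}(x,r))\to h$ along $n_k=n-m_\delta(n)$ can be rescaled to give $-n^{-1}\log\Q(B_{n-m_\delta(n)}(x,r))\to h$ for $\Q$-a.e.\ $x$, any fixed small $r$. Combining these statements with Egorov's theorem yields a measurable set $A\subset M$ with $\Q(A)\ge1/2$, a radius $\delta_0>0$, and an integer $n_0$ such that, for every $0<\delta\le\delta_0$, every $x\in A$, and every $n\ge n_0$:
\begin{equation*}
|\langle V_i,\mu_n^x-\Q\rangle|<\eta/2,\qquad |n^{-1}G_n(x)-g|<\e,\qquad \Q\bigl(B_{n-m_\delta(n)}(x,4\delta)\bigr)\le e^{-n(h-\e)}.
\end{equation*}

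Next, fix $\delta\le\delta_0$ and let $E_n\subset A$ be a maximal $(n-m_\delta(n),4\delta)$-separated subset. Maximality gives $A\subset\bigcup_{x\in E_n}B_{n-m_\delta(n)}(x,4\delta)$, so combined with the measure bound above,
\begin{equation*}
\tfrac12\le\Q(A)\le|E_n|\,e^{-n(h-\e)},\qquad\text{hence}\qquad |E_n|\ge\tfrac12\,e^{n(h-\e)}.
\end{equation*}
By the separation property the balls $\{B_{n-m_\delta(n)}(x,\delta)\}_{x\in E_n}$ are pairwise disjoint. For any $y\in B_{n-m_\delta(n)}(x,\delta)$ with $x\in E_n$, uniform continuity of each $V_i$ together with $m_\delta(n)/n=o(1)$ yields
\begin{equation*}
|\langle V_i,\mu_n^y-\mu_n^x\rangle|\le \omega_{V_i}(\delta)+2\frac{m_\delta(n)}{n}\|V_i\|_\infty,
\end{equation*}
which is $<\eta/2$ once $\delta$ is small and $n$ large; together with $\mu_n^x\in O''$ this forces $\mu_n^y\in O$, i.e.\ $B_{n-m_\delta(n)}(x,\delta)\subset\{y:\mu_n^y\in O\}$.

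Finally, summing the disjoint lower bounds provided by \hCtwo,
\begin{equation*}
\IP_n\{y:\mu_n^y\in O\}\ge\sum_{x\in E_n}\IP_n\bigl(B_{n-m_\delta(n)}(x,\delta)\bigr)\ge|E_n|\,K_n(\delta)^{-1}\,e^{n(g-\e)-nP}\ge\tfrac12 K_n(\delta)^{-1}\,e^{n(h+g-P-2\e)}.
\end{equation*}
Taking $n^{-1}\log$, applying $\liminf_{n\to\infty}$ and using \eqref{eq:conditionKnMdelta} to let $\delta\downarrow0$, then finally $\e\downarrow0$, produces \eqref{2.20} and completes the proof. The main obstacles are the Egorov-type argument that simultaneously uniformizes the three pointwise limits on one set $A$, and the bookkeeping of Bowen-ball lengths: one must separate at scale $n-m_\delta(n)$ to make the $\P_n$-balls disjoint while still harvesting cardinality from a Brin--Katok bound valid at that same length (which is why the observation $m_\delta(n)/n\to0$ is crucial).
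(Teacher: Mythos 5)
Your proof is correct and follows essentially the same strategy as the paper's: Birkhoff/Lemma~\ref{lem:ergoasymadd}/Brin--Katok plus an Egorov-type argument to produce a good set of measure $\ge 1/2$, a maximal family of Bowen balls at length $n-m_\delta(n)$ whose cardinality is harvested from the Brin--Katok upper bound, and the lower bound of \hCtwo applied at the centers. The only differences are cosmetic (maximal separated set with radii $\delta$ versus $4\delta$ instead of the paper's maximal disjoint collection with radii $\delta/2$ versus $\delta$, and the set $A$ should be allowed to depend on the fixed $\delta$, which affects nothing since $\delta\downarrow 0$ is taken only after $\liminf_{n\to\infty}$).
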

\begin{proof}{}
Fix $O\subset\PP(M)$ and $\Q\in O\cap\EE_\varphi(M)$. Given $V_1,\dots,V_m\in C(M)$ and 
$\e>0$ we set 
\begin{equation*}
R_\e=R_\e(V_1,\dots,V_m)=\bigcap_{j=1}^m
\bigl\{\Q'\in\PP(M):|\langle V_j,\Q'\rangle-\langle V_j,\Q\rangle|<\e\bigr\}.
\end{equation*}
Since~$O$ is open and $\Q \in O$, we can find finitely many functions 
$V_1,\dots,V_m\in C(M)$ and a number~$\e_0>0$ such that $R_{2\e_0}\subset O$.  Let  
$\e\in]0,\e_0[$ and let $X_n^\e$ be  the set of points $x\in M$ such that 
\[
|n^{-1}G_n(x)-\GG(\Q)|<\e, \quad 
|n^{-1}S_n V_j(x)-\langle V_j,\Q\rangle|<\e\quad
\text{for }j=1,\dots,m. 
\]
Note that
\begin{equation*}
X_n^\e \subset X_n^{2\e} \subset \{x\in M:\mu_n^x\in R_{2\e}\}
\subset \{x\in M:\mu_n^x\in O\}.
\end{equation*}
Since $\Q$ is ergodic, it follows from Lemma~\ref{lem:ergoasymadd} that 
\begin{equation} \label{app2.22}
\lim_{n\to\infty}\Q(X_n^\e)=1.
\end{equation}
For $\delta>0$ and $n\ge n_0(\delta)$ we define
\begin{equation*}
Y_n^\e(\delta)=\bigl\{x\in M:\Q(B_{n-m_{\delta/2}(n)}(x,\delta))
\le\eee^{-n(h_\varphi(\Q)-\e)}\bigr\}.
\end{equation*}

Using the fact that
\[
\lim_{\delta\downarrow0}\liminf_{n\to\infty}\frac{n-m_{\delta/2}(n)}{n}
=\lim_{\delta\downarrow0}\limsup_{n\to\infty}\frac{n-m_{\delta/2}(n)}{n}=1,
\]
and invoking the ergodicity of~$\Q$, the Brin--Katok local entropy formula 
(see~\cite{BK-1983}) implies that
\[
\lim_{\delta\downarrow0}\limsup_{n\to\infty}
\frac1n\log\Q\bigl(B_{n-m_{\delta/2}(n)}(x,\delta)\bigr)=-h_\varphi(\Q)
\quad\text{for $\Q$-almost every }x\in M.
\]
Combining this with a simple measure-theoretic argument (similar to the one 
used to prove Egorov's theorem), we see that
\begin{equation} \label{app2.23}
\lim_{\delta\downarrow0}\liminf_{n\to\infty}\Q\bigl(Y_n^\e(\delta)\bigl)=1. 
\end{equation}
It follows from~\eqref{app2.22} and~\eqref{app2.23} that for all small enough 
$\delta>0$ there is an integer $n_1(\delta)\ge n_0(\delta)$ such that 
\begin{equation}\label{app2.25}
\Q\bigl(X_n^\e\cap Y_n^\e(\delta)\bigr)\ge\tfrac12\quad\text{for all }n\ge n_1.
\end{equation}
Moreover, by~\eqref{4.010ss} (applied to $G_n$ and to the potential sequences 
$\{S_n V_j\}$) and~\eqref{eq:conditionKnMdelta}, we can assume, by possibly 
decreasing $\delta$ and increasing $n_1$, that for all $n\ge n_1$, 
\begin{gather}
\sup_{x\in M}\sup_{y,z\in B_{n-m_{\delta/2}(n)}(x,\delta/2)}
\frac1n\bigl|G_{n}(y)-G_{n}(z)\bigr|<\e,\label{eq:condxmbn}\\
\sup_{x\in M}\sup_{y,z\in B_{n-m_{\delta/2}(n)}(x,\delta/2)}
\frac1n\bigl|S_nV_j(y)-S_n V_j(z)\bigr|<\e,
\quad\text{for }j=1,\dots,m,\label{eq:condxmbnSn}\\
\frac1n\log K_n(\delta/2)< \e.\label{eq:conditionKmeps}
\end{gather}
Suppose that, for sufficiently large~$n$, we have constructed points 
$x_1,\dots,x_{r_n}\in X_n^\e\cap Y_n^\e(\delta)$ such that the balls 
$B_{n-m_{\delta/2}(n)}(x_i,\delta/2)$ are pairwise disjoint, and
\begin{gather} 
\bigcup_{i=1}^{r_n}B_{n-m_{\delta/2}(n)}(x_i,\delta/2)
\subset X_n^{2\e}, \qquad 
X_n^\e\cap Y_n^\e(\delta)\subset
\bigcup_{i=1}^{r_n}B_{n-m_{\delta/2}(n)}(x_i,\delta). 
\label{app2.26}
\end{gather}
In this case, we can write 
\begin{align} 
\IP_n\{\mu_n^\argdot\in O\}\ge \IP_n(X_n^{2\e})
&\ge \sum_{i=1}^{r_n}\IP_n\bigl(B_{n-m_{\delta/2}(n)}(x_i,\delta/2)\bigr)\notag\\
&=\sum_{i=1}^{r_n}
\frac{\IP_n\bigl(B_{n-m_{\delta/2}(n)}(x_i,\delta/2)\bigr)}
{\Q\bigl(B_{n-m_{\delta/2}(n)}(x_i,\delta)\bigr)}\,
\Q\bigl(B_{n-m_{\delta/2}(n)}(x_i,\delta)\bigr).
\label{app1.27}
\end{align}
Since $x_i\in Y_n^\e(\delta)$, we have
\begin{equation} \label{eq:app2.31}
\Q\bigl(B_{n-m_{\delta/2}(n)}(x_i,\delta)\bigr)\le\eee^{-n(h_\varphi(\Q)-\e)}
\end{equation}
for $n\ge n_0$. Moreover, \eqref{eq:applowerassum} and  $x_i\in X_n^\e$ imply that
\begin{equation}
\begin{split}
\IP_n\bigl(B_{n-m_{\delta/2}(n)}(x_i,\delta/2)\bigr) &\geq K_n(\delta/2)^{-1}\eee^{ G_n(x_i) - n \fp_\varphi(\GG)}\\
& \geq K_n(\delta/2)^{-1}\eee^{ n \GG(\Q) - n\e - n\fp_\varphi(\GG)}.
\end{split}\label{eq:app2.32}
\end{equation} 
Thus, using \eqref{app1.27}, \eqref{eq:app2.31}, \eqref{eq:app2.32}, the second 
inclusion in \eqref{app2.26},  and \eqref{app2.25}, we derive
\begin{align*} 
\IP_n\{\mu_n^\argdot\in O\}
&\ge\frac12 K_n(\delta/2)^{-1}\eee^{n(\GG(\Q)+h_\varphi(\Q))
-2n\e-n\fp_\varphi(\GG) }\\
&\ge\frac12 K_n(\delta/2)^{-1}\eee^{-n\I(\Q)-2n\e },
\end{align*}
where the second inequality follows from~\eqref{4.012}. Together 
with~\eqref{eq:conditionKmeps} this gives
\[
\liminf_{n\to\infty}n^{-1}\log\IP_n\{\mu_n^\argdot\in O\}\ge-\I(\Q)-2\e. 
\]
Since~$\e\in]0,\e_0[$ can be chosen arbitrarily small, \eqref{2.20} follows. 

It remains to construct points 
$x_1,\dots,x_{r_n}\in X_n^\e\cap Y_n^\e(\delta)$ such that the balls 
$B_{n-m_{\delta/2}(n)}(x_i,\delta/2)$ are disjoint and~\eqref{app2.26} holds. 

First, it follows from~\eqref{eq:condxmbn} and~\eqref{eq:condxmbnSn} that for 
all $x\in M$ and $y\in B_{n-m_{\delta/2}(n)}(x,\delta/2)$,
\begin{gather*}
\bigl|n^{-1}G_n(y)-\GG(\Q)\bigr|\leq\e+\bigl|n^{-1}G_n(x)-\GG(\Q)\bigr|
< 2\e,\\[2mm]
\bigl|n^{-1}S_nV_j(y)-\langle V_j,\Q\rangle\bigr|\leq\e
+\bigl|n^{-1}S_nV_j(x)-\langle V_j,\Q\rangle\bigr|< 2\e
\quad \text{for } j=1, \dots, m,
\end{gather*}
and so
\beq
x\in X_n^\e\Longrightarrow B_{n-m_{\delta/2}(n)}(x,\delta/2)\subset X_n^{2\e}
\label{app2.33}
\eeq
for all $n\geq n_1$. Now let
$\mathfrak{B}=\{B_{n-m_{\delta/2}(n)}(x_i,\delta/2):1\le i\le r_n\}$ be any
maximal\footnote{Such a maximal collection exists, since~\eqref{eq:applowerassum} 
gives an absolute upper bound on $r_n$.} collection of disjoint balls included 
in~$X_n^{2\e}$ such that
$x_i\in X_n^\e\cap Y_n^\e(\delta)$. The first inclusion
in~\eqref{app2.26} follows from~\eqref{app2.33}. To prove the second one, suppose 
that $x_*\in X_n^\e\cap Y_n^\e(\delta)$ does
not belong to any of the balls $B_{n-m_{\delta/2}(n)}(x_i,\delta)$. Then
$B_{n-m_{\delta/2}(n)}(x_*,\delta/2)$ does not intersect the balls
in~$\mathfrak{B}$ and, by~\eqref{app2.33}, is included in $X_n^{2\e}$, and  the
collection~$\mathfrak{B}$ is not maximal. This completes the proof of the 
proposition.
\hfill\qed\end{proof}

The following proposition completes the proof of Part~(1) of 
Theorem~\ref{appt1.17}. 

\begin{proposition}
If, in addition to the hypotheses of Proposition~\ref{appp1.11}, 
Condition~\hS holds, then inequality~\eqref{2.20} holds for any 
open set $O\subset \PP(M)$ and any $\Q\in O\cap\PP_\varphi(M)$. 
\end{proposition}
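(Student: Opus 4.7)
The plan is to reduce to the ergodic case already established in Proposition~\ref{appp1.11} via an approximation argument based on the entropy-density property provided by Proposition~\ref{Prop:EntropyDensity}, which is available precisely because we are now assuming Condition~\hS.

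Fix an open set $O\subset\PP(M)$ and a measure $\Q\in O\cap\PP_\varphi(M)$. If $\I(\Q)=+\infty$ the inequality \eqref{2.20} is trivial, so we may assume $\I(\Q)<\infty$; by \eqref{4.012} this implies $h_\varphi(\Q)<\infty$, consistent with Condition~\hUSCE. Applying Proposition~\ref{Prop:EntropyDensity} to $\Q$, we obtain a sequence $\{\Q_m\}\subset\EE_\varphi(M)$ such that $\Q_m\rightharpoonup\Q$ and $h_\varphi(\Q_m)\to h_\varphi(\Q)$. Since $O$ is open and $\Q\in O$, we have $\Q_m\in O$ for all sufficiently large~$m$. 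For each such~$m$, Proposition~\ref{appp1.11} yields
\begin{equation*}
\liminf_{n\to\infty}\frac1n\log\IP_n\{\mu_n^\argdot\in O\}\ge-\I(\Q_m).
\end{equation*}

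It thus suffices to show that $\I(\Q_m)\to\I(\Q)$. Using the explicit formula~\eqref{4.012} valid under~\hUSCE, this amounts to verifying that $\GG(\Q_m)\to\GG(\Q)$ together with the already established convergence $h_\varphi(\Q_m)\to h_\varphi(\Q)$. The first convergence follows from Lemma~\ref{lem:limiteGGPP}, which asserts that the functional $\Q'\mapsto\GG(\Q')$ is continuous on $\PP_\varphi(M)$ endowed with the weak topology, combined with the weak convergence $\Q_m\rightharpoonup\Q$. Therefore
\begin{equation*}
\I(\Q_m)=-\GG(\Q_m)-h_\varphi(\Q_m)+\fp_\varphi(\GG)\longrightarrow-\GG(\Q)-h_\varphi(\Q)+\fp_\varphi(\GG)=\I(\Q),
\end{equation*}
and passing to the limit $m\to\infty$ in the previous inequality gives~\eqref{2.20}, completing the proof.

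There is no real obstacle here: the only non-trivial ingredients are the entropy-density property (which requires and is provided by Condition~\hS) and the continuity of the asymptotically additive functional $\GG$ on $\PP_\varphi(M)$ (Lemma~\ref{lem:limiteGGPP}), both of which are given. The role of Condition~\hS in Theorem~\ref{appt1.17}~(1) is precisely to make this reduction from the invariant to the ergodic case possible, as anticipated in Remark~\ref{r4}.
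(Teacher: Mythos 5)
Your argument is correct and is essentially identical to the paper's proof: both reduce to the ergodic case via the entropy-density property of Proposition~\ref{Prop:EntropyDensity} and then pass to the limit using formula~\eqref{4.012}, the continuity of $\Q'\mapsto\GG(\Q')$ from Lemma~\ref{lem:limiteGGPP}, and the convergence of the entropies. The preliminary reduction to $\I(\Q)<\infty$ is harmless but unnecessary, since under Condition \hUSCE the entropy is finite for every invariant measure and the limiting argument goes through regardless.
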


\begin{proof}{}
Let $\Q\in O\cap\cP_\varphi(M)$. By Proposition~\ref{Prop:EntropyDensity}, there 
exists a sequence $\{\Q^{(m)}\}\subset\EE_\varphi(M)$ such that
\[
\Q^{(m)}\rightharpoonup\Q,\qquad h_\varphi(\Q^{(m)})\to h_\varphi(\Q),
\]
as $m\to\infty$. In this case, in view of~\eqref{4.012} and the continuity 
assertion in Lemma~\ref{lem:limiteGGPP}, we have 
\begin{equation} \label{2.34}
\I(\Q^{(m)})=-\GG(\Q^{(m)})-h_\varphi(\Q^{(m)})+\fp_\varphi(\GG) 
\to -\GG(\Q)-h_\varphi(\Q)+\fp_\varphi(\GG) 
=\I(\Q) 
\end{equation}
as $m\rightarrow \infty$.
Since $O$ is open, $\Q^{(m)}\in O$ for large enough $m$, and by 
Proposition~\ref{appp1.11}, we have 
\[
\liminf_{n\to\infty}n^{-1}\log\IP_n\{\mu_n^\argdot\in O\}\ge -{\I}(\Q^{(m)}).
\]
Passing to the limit $m\to\infty$ and using~\eqref{2.34} we 
obtain inequality~\eqref{2.20}. 
\hfill\qed
\end{proof}

\subsection{Proof of the LDP for asymptotically additive sequences of functions}
\label{ss:LDPlev1}

Part (2) of Theorem~\ref{appt1.17} is a special case of Theorem~4.2.23 
in~\cite{DZ2000}.\footnote{In the notation therein, ${\cal X}=\cP(M)$, 
${\cal Y}=\rr$, $f_m(\Q)=\langle V^{(m)},\Q \rangle$ and $f(\Q)$ is defined by 
$\VV(\Q)$ when $\Q\in \cP_\varphi(M)$ and arbitrarily when 
$\Q\in\cP(M)\setminus\cP_\varphi(M)$.} For the reader's convenience, we outline 
the proof in our case.

Let  $\cV=\{V_n\}\in{\cal A}(M)$ with approximating sequence $\{V^{(k)}\}$. We 
define the random variables $\xi_n=\frac1n V_n$ and 
\[
\xi_n^k=\frac1n S_n V^{(k)}=\langle V^{(k)},\mu_n^\argdot\rangle,
\]
and consider them under the law~$\P_n$. By the definition of $V^{(k)}$ we have
\begin{equation} \label{1.121}
\lim_{k\to\infty}\limsup_{n\to\infty}\|\xi_n^k-\xi_n\|_\infty=0. 
\end{equation}
By the contraction principle~\cite[Theorem 4.2.1]{DZ2000}, for each $k$ the 
family $\{\xi_n^k\}_{n\geq 1}$ satisfies  the LDP with the good convex rate function 
\begin{equation}\label{eq:formulecontractionIk}
I_k(s)=\inf\{\I(\Q):\Q\in\cP_\varphi(M),\langle V^{(k)},\Q\rangle=s\}.
\end{equation}
We now show that $\xi_n$ satisfies the LDP with the rate function
\begin{equation*}
I(s)=\sup_{\delta>0}\liminf_{k\to\infty}\inf_{z\in B(s, \delta)} I_k(z) .
\end{equation*}

It is immediate that $I$ is lower semicontinuous. By~\eqref{1.121},  
for all $\delta > 0$,
\begin{align*}
\limsup_{n\to\infty} \frac 1 n \log  \P_n\{ \xi_n \in B(s, \delta)\}
&\leq-\liminf_{k\to\infty}\inf_{y\in  B(s, 2\delta)}I_k(y),\\
\liminf_{n\to\infty} \frac 1 n \log  \P_n\{ \xi_n \in B(s, \delta)\}
&\geq-\liminf_{k\to\infty}\inf_{y\in  B(s, \delta/2)}I_k(y).
\end{align*}
Hence we obtain
\begin{equation*}
\sup_{\delta > 0}\limsup_{n\to\infty} 
\frac 1 n \log  \P_n\{\xi_n \in B(s, \delta)\}  
= \sup_{\delta > 0}\liminf_{n\to\infty} 
\frac 1 n \log  \P_n\{ \xi_n \in B(s, \delta)\} = I(s).
\end{equation*}
A standard argument~\cite[Theorem 4.1.11]{DZ2000} implies that~$\xi_n$ satisfies 
the weak LDP with rate function~$I$. Since the family~$\{\xi_n\}$ is bounded 
(recall Remark~\ref{rem:equivrel}), $\xi_n$ actually satisfies the full LDP, 
and~$I$ is a good rate function. It remains to show that $I(s)=J(s)$, where 
\begin{equation*}
J(s)=\inf\{\I(\Q):\Q\in\cP_\varphi(M),\VV(\Q) = s\}.
\end{equation*}
Note that $J$  is lower semicontinuous, since $\Q \mapsto \VV(\Q)$ is continuous 
on $\cP_\varphi(M)$ by Lemma~\ref{lem:limiteGGPP}, and is obviously convex.
It follows from~\eqref{eq:formulecontractionIk} that 
\begin{equation*}
I(s)=\sup_{\delta > 0}\liminf_{k\to\infty}\inf\{\I(\Q):
\Q\in\cP_\varphi(M),\langle V^{(k)},\Q\rangle\in B(s,\delta)\}, 
\end{equation*}
while the lower semicontinuity of $J$ gives
\begin{equation*}
J(s)=\sup_{\delta>0}\inf_{y\in B(s,\delta)}J(s)
=\sup_{\delta>0}\inf\{\I(\Q):\Q\in\cP_\varphi(M),\VV(\Q)\in B(s,\delta)\}.
\end{equation*}
Using that $\langle V^{(k)},\Q\rangle\to\VV(\Q)$ uniformly on $\cP_\varphi(M)$ 
(recall~\eqref{4.04b}), we derive that $J(s)=I(s)$. This completes the proof of 
Part~(2) of Theorem~\ref{appt1.17}.

\section{Conditions for asymptotic additivity}
\label{sec:charac}

In this section we give some necessary and sufficient conditions for a potential 
sequence to be asymptotically additive. 

A sequence $\GG=\{G_n\}\subset B(M)$ is said to have \textit{tempered 
variation\/} if
\begin{equation}\label{eq:mildvar3}
\lim_{\e \downarrow 0}\limsup_{n\to\infty}\sup_{x\in M}\sup_{y,z\in B_n(x,\e)} 
\frac 1 n\bigl|G_n(y)-G_n(z)\bigr| = 0.	
\end{equation}
We have shown in Lemma~\ref{lem:approxGmn} that asymptotic additivity 
implies~\eqref{eq:mildvar3}. Below, we shall sometimes take~\eqref{eq:mildvar3} 
as an assumption (along with others), in order to obtain asymptotic additivity.

We recall that $\{G_n\}\subset B(M)$ is called \textit{weakly almost additive\/} 
if for all $n,m\geq 1$, 
\begin{equation}\label{sub-ad2}
-C_m + G_m +G_n\circ\varphi^m \leq G_{m+n}\leq C_m + G_m +G_n\circ\varphi^m,
\end{equation}
where $\lim_{n\rightarrow \infty}n^{-1} C_n=0$.

The main result of this section is
\begin{theorem}\label{prop:characterizationAA}
If $\GG=\{G_n\}_{n\geq 1}\subset B(M)$ satisfies any of the following conditions, 
then $\GG$ is asymptotically additive.
\begin{enumerate}[~~(1)]
\item $G_n = S_nG$ for each $n$, with $G\in C(M)$.
\item $G_n = S_nG$ for each $n$, with $G\in B(M)$, and $\GG$ has tempered variation.
\item $\GG$ is weakly almost additive and $G_n \in C(M)$ for each $n$.
\item $\GG$ is weakly  almost additive and has tempered variation.
\end{enumerate}
Moreover, if $\GG \subset C(M)$, then the following assertion is equivalent to asymptotic additivity of $\GG$.
\begin{enumerate}[~~(1)]
\setcounter{enumi}{4}
\item $\GG$ satisfies
 \begin{equation}\label{eq:WTSGk25}
\lim_{k\to\infty}\limsup_{n\to\infty}\,n^{-1} \bigl\|G_n-k^{-1} S_n G_k\bigr\|_\infty = 0.
\end{equation} 
\end{enumerate}
Finally, for $\GG \subset B(M)$, each of the following assertions is equivalent to asymptotic additivity of $\GG$.
\begin{enumerate}[~~(1)]
\setcounter{enumi}{5}
\item $\GG$ has tempered variation and satisfies~\eqref{eq:WTSGk25}.
\item $\GG$ has tempered variation and there exists a sequence $ \{G^{(k)}\} \subset B(M)$ such that
 \begin{equation}\label{eq:WTSGk2}
\lim_{k\to\infty}\limsup_{n\to\infty}\,n^{-1} \bigl\|G_n-S_n  G^{(k)}\bigr\|_\infty = 0.
\end{equation}
\item There exists a sequence $\{G^{(k)}\} \subset B(M)$ such that \eqref{eq:WTSGk2} holds, and 
 \begin{equation}\label{eq:mildvareachk}
\lim_{k\to\infty}\lim_{\e \downarrow 0}\limsup_{n\to\infty}\sup_{x\in M}\sup_{y,z\in B_n(x,\e)} 
\frac 1 n\bigl|S_n  G^{(k)}(y)-S_n   G^{(k)}(z)\bigr| = 0.	
\end{equation}
\end{enumerate}
\end{theorem}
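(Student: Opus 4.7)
The strategy is to first establish the equivalences (5)--(8), which rest on one workhorse necessity calculation and a partition-of-unity construction for sufficiency, and then deduce (1)--(4) as specific consequences. The ``$\Rightarrow$'' direction (asymptotic additivity implies the stated condition) is essentially bookkeeping from the definition, while the ``$\Leftarrow$'' direction carries the substantive content.

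For the necessity of \eqref{eq:WTSGk25}, let $\{G^{(\ell)}\}\subset C(M)$ be an approximating sequence for $\GG$. I would substitute $G_k\approx S_kG^{(\ell)}$ into $k^{-1}S_nG_k=k^{-1}\sum_{j=0}^{n-1}G_k\circ\varphi^j$, obtaining after reindexing the double sum $k^{-1}\sum_{j=0}^{n-1}\sum_{r=0}^{k-1}G^{(\ell)}\circ\varphi^{j+r}$ in which each $G^{(\ell)}\circ\varphi^m$ with $m\in\llbracket k-1,n-1\rrbracket$ appears with coefficient $1$, while the coefficients in the boundary range of width $2k$ lie in $[0,1]$. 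Comparing with $S_nG^{(\ell)}$ yields
\begin{equation*}
n^{-1}\|G_n-k^{-1}S_nG_k\|_\infty\leq n^{-1}\|G_n-S_nG^{(\ell)}\|_\infty+k^{-1}\|G_k-S_kG^{(\ell)}\|_\infty+2k\|G^{(\ell)}\|_\infty/n,
\end{equation*}
and taking $\limsup_n$, then $k\to\infty$, then $\ell\to\infty$ gives \eqref{eq:WTSGk25}. This argument uses only $G^{(\ell)}\in B(M)$, so it also shows \eqref{eq:WTSGk2}$\Rightarrow$\eqref{eq:WTSGk25}. Together with Lemma~\ref{lem:approxGmn} (tempered variation) and the fact that Birkhoff sums of continuous functions satisfy \eqref{eq:mildvareachk} by uniform continuity, this handles the ``$\Rightarrow$'' parts of (5)--(8).

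For ``$\Leftarrow$'', I would construct a continuous approximating sequence as follows. If $\GG\subset C(M)$, take $G^{(k)}=k^{-1}G_k$ directly, giving (5). For $\GG\subset B(M)$ with tempered variation, pick $\epsilon_k\to 0$ and $\delta_k>0$ (for $k$ large) so that tempered variation gives $k^{-1}|G_k(y)-G_k(z)|<\epsilon_k$ on $y,z\in B_k(x,\delta_k)$; cover $M$ by finitely many open Bowen balls $B_k(x_i^{(k)},\delta_k)$, and set
\begin{equation*}
G^{(k)}(x)=k^{-1}\sum_i\psi_i^{(k)}(x)G_k(x_i^{(k)})\in C(M)
\end{equation*}
for a subordinate continuous partition of unity $\{\psi_i^{(k)}\}$. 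Then $\|G^{(k)}-k^{-1}G_k\|_\infty<\epsilon_k$, so $n^{-1}\|S_nG^{(k)}-k^{-1}S_nG_k\|_\infty<\epsilon_k$, and combining with \eqref{eq:WTSGk25} yields asymptotic additivity, proving (6). Then (7) reduces to (6) since the necessity argument yields \eqref{eq:WTSGk25} from \eqref{eq:WTSGk2}, and (8) reduces to (7) because \eqref{eq:mildvareachk} plus \eqref{eq:WTSGk2} implies tempered variation via $|G_n(y)-G_n(z)|\leq|S_nG^{(k)}(y)-S_nG^{(k)}(z)|+2\|G_n-S_nG^{(k)}\|_\infty$.

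For the implications (1)--(4): (1) is trivial ($G^{(k)}=G$). For (2) with $G_n=S_nG$, $G\in B(M)$, a direct boundary computation gives $n^{-1}\|S_nG-k^{-1}S_nG_k\|_\infty=O(k\|G\|_\infty/n)\to 0$, so \eqref{eq:WTSGk25} holds trivially and (6) applies with the assumed tempered variation. For (3) and (4) I need to verify \eqref{eq:WTSGk25} with $G^{(k)}=k^{-1}G_k$ for weakly almost additive $\GG$; telescoping via $G_k\circ\varphi^j=G_{j+k}-G_j\pm C_j$ and $G_{n+l}=G_n+G_l\circ\varphi^n\pm C_n$ yields
\begin{equation*}
|kG_n-S_nG_k|\leq 2\sum_{l=0}^{k-1}\|G_l\|_\infty+\sum_{j=0}^{n-1}C_j+kC_n.
\end{equation*}
The main obstacle is controlling $\sum_{j=0}^{n-1}C_j$, which may grow superlinearly even when $C_j/j\to 0$; this is handled by a more careful iteration of weak almost additivity as in Lemma~\ref{p4.30}. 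Once \eqref{eq:WTSGk25} is established, (3) follows from (5) (since $k^{-1}G_k\in C(M)$) and (4) follows from (6) using the assumed tempered variation.
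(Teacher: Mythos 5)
Your architecture coincides with the paper's: the partition-of-unity regularization of $k^{-1}G_k$ over a finite cover by Bowen balls is exactly the paper's Lemma~\ref{lem:regularizationBB}, the diagonal extraction of radii $\delta_k$ from tempered variation matches the paper's construction of the sequence $\e_k$ in \eqref{eq:subseqepsk}, and the reduction chain (8) $\Leftrightarrow$ (7) $\Rightarrow$ (6) $\Rightarrow$ asymptotic additivity, with (5) handled by taking $G^{(k)}=k^{-1}G_k$, is the same. Your one genuine variation is sound and arguably cleaner: you prove the resummation estimate $n^{-1}\|S_nG^{(\ell)}-k^{-1}S_nS_kG^{(\ell)}\|_\infty\le 2k\|G^{(\ell)}\|_\infty/n$ by direct coefficient counting in the double sum, where the paper instead applies its Lemma~\ref{p4.30} to the additive sequence $\{S_nG^{(j)}\}$ (this is Lemma~\ref{lem:techGnGj}); your computation needs only boundedness of $G^{(\ell)}$, so it gives \eqref{eq:WTSGk2}~$\Rightarrow$~\eqref{eq:WTSGk25} and also disposes of case (2) directly, whereas the paper treats (2) as the case $C_n\equiv0$ of (4).

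There is, however, one genuine gap: cases (3) and (4) rest on the implication ``weakly almost additive $\Rightarrow$ \eqref{eq:WTSGk25}'' (the paper's Lemma~\ref{p4.30}), and you do not prove it. As you yourself note, telescoping \eqref{sub-ad2} one step at a time produces the error $\sum_{j=0}^{n-1}C_j$, which is only $o(n^2)$ when $C_j=o(j)$; after dividing by $kn$ this is $o(n)/k$ and need not vanish (take $C_j=j/\log j$). Deferring to ``a more careful iteration as in Lemma~\ref{p4.30}'' does not discharge the step, since that lemma is precisely part of the proof to be supplied. The missing idea is to iterate \eqref{sub-ad2} in blocks of length $k$: writing $n=kn_k+\ell$ and applying the inequality $n_k$ times with increment $k$ gives $\bigl|G_n-\sum_{r=0}^{n_k-1}G_k\circ\varphi^{rk}-G_\ell\circ\varphi^{kn_k}\bigr|\le n_kC_k$, so the accumulated error per unit time is $C_k/k\to0$ as $k\to\infty$; one then recovers the full Birkhoff sum $k^{-1}S_nG_k$ by averaging the $k$ shifted subsampled sums $\sum_{r=0}^{n_k-1}G_k\circ\varphi^{rk+\ell'}$, $\ell'=0,\dots,k-1$, using \eqref{sub-ad2} once more to compare $G_{n+\ell'}$ with $G_n\circ\varphi^{\ell'}$. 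Without this block argument, (3) and (4) remain unproved in your write-up, though everything else stands.
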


Let us mention that various partial results contained in 
Theorem~\ref{prop:characterizationAA} were known earlier (e.g., see the 
papers~\cite{barreira-2006,zhao_asymptotically_2011,Bar2011} and the references 
therein). However, the equivalence relationships stated above seem to be new.  

We start with the following lemma, which was established 
in~\cite[Proposition A.5]{FH-2010} 
and~\cite[Proposition 2.1]{zhao_asymptotically_2011} in the almost additive  
case, that is, when~$\{C_m\}$ in~\eqref{sub-ad2} is a constant sequence.
\begin{lemma}\label{p4.30} 
Assume that $\{G_n\} \subset B(M)$ is weakly almost additive. 
Then~\eqref{eq:WTSGk25} holds.
\end{lemma}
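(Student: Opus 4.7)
The plan is to compare both $G_n$ and $k^{-1}S_nG_k$ to a common block sum. Writing $n=qk+r$ with $0\le r<k$, I would first iterate~\eqref{sub-ad2} by peeling off one $G_k$ from the left at each step — using only the instance $\|G_{(j+1)k}-G_k-G_{jk}\circ\varphi^k\|_\infty\le C_k$ — to obtain
\[
\Bigl\|G_{qk}-\sum_{i=0}^{q-1}G_k\circ\varphi^{ik}\Bigr\|_\infty\le(q-1)C_k.
\]
The point of always splitting from the left is that the error constant at every step is $C_k$, and never some $C_{jk}$ with $j>1$, which would otherwise accumulate in a fatal way when summed.

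Next, since $S_{qk}G_k(x)=\sum_{a=0}^{k-1}\sum_{i=0}^{q-1}G_k(\varphi^{ik}(\varphi^ax))$, applying the previous estimate with $x$ replaced by $\varphi^ax$ gives
\[
\Bigl\|S_{qk}G_k-\sum_{a=0}^{k-1}G_{qk}\circ\varphi^a\Bigr\|_\infty\le k(q-1)C_k.
\]
To reduce each $G_{qk}\circ\varphi^a$ to $G_{qk}$ itself, I would combine the two consequences of~\eqref{sub-ad2} applied to $G_{a+qk}$, namely $\|G_{a+qk}-G_a-G_{qk}\circ\varphi^a\|_\infty\le C_a$ and $\|G_{a+qk}-G_{qk}-G_a\circ\varphi^{qk}\|_\infty\le C_{qk}$, obtaining $\|G_{qk}\circ\varphi^a-G_{qk}\|_\infty\le C_a+C_{qk}+2\|G_a\|_\infty$. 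Summing over $a=0,\dots,k-1$ and combining yields
\[
\|S_{qk}G_k-kG_{qk}\|_\infty\le k(q-1)C_k+kC_{qk}+\sum_{a=0}^{k-1}(C_a+2\|G_a\|_\infty).
\]

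To conclude, I would extend from $n=qk$ to general $n$ using $\|G_n-G_{qk}\|_\infty\le C_{qk}+\|G_r\|_\infty$ and $\|S_nG_k-S_{qk}G_k\|_\infty\le r\|G_k\|_\infty$, both of which contribute $k$-dependent quantities that are harmless once one divides by $n$. Dividing the resulting estimate by $nk$ and taking $\limsup_{n\to\infty}$ with $k$ fixed, every purely $k$-dependent bound and every factor of $C_{qk}/(qk)$ vanishes (the latter because $C_m/m\to 0$ and $qk\sim n$), and the only surviving term is $(q-1)C_k/(qk)\to C_k/k$. Sending $k\to\infty$ and invoking $C_k/k\to 0$ finishes the proof. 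The main obstacle is the very first step: one must arrange the iteration so that only the small constant $C_k$ appears at each stage and never the larger $C_{jk}$'s, which is where the asymmetric form of~\eqref{sub-ad2} — the constant depending on the length of the first factor — plays a decisive role and allows the weakly almost additive case to be treated exactly like the almost additive one.
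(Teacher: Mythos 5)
Your proof is correct and follows essentially the same strategy as the paper: iterate the weak almost additivity in blocks of length $k$ so that only the constant $C_k$ enters at each of the $\sim n/k$ steps (giving a total error of order $n\,C_k/k$), and then pass from the skipped sum $\sum_i G_k\circ\varphi^{ik}$ to the full Birkhoff sum $k^{-1}S_nG_k$ by handling the $k$ residue classes. The only (cosmetic) difference is that the paper compares $G_n$ with $G_{n+\ell}\approx G_n\circ\varphi^\ell$ and averages over $\ell$, whereas you decompose $S_{qk}G_k$ into $k$ shifted skipped sums and reduce each $G_{qk}\circ\varphi^a$ to $G_{qk}$.
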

\begin{proof}{}
Given two positive integers~$n$ and~$k$, we write~$n_k$ for the integer part 
of~$n/k$ and, for a function $V$, we let
\[
S_n^k V=\sum_{r=0}^{n-1} V\circ\varphi^{rk}. 
\]
Suppose that for any~$\e>0$ we can find~$k_\e\ge1$ such that 
\begin{equation} \label{4.38}
\limsup_{n\to\infty}\,
\bigl\|n^{-1}G_n-n^{-1}S_{n_k}^{\,k}G_k\bigr\|_\infty\le\e
\quad\text{for }k\ge k_\e. 
\end{equation}
For a fixed $k\ge k_\e$ and any $\ell\in\llbracket 1,k-1\rrbracket$, 
replacing~$x$ by~$\varphi^\ell(x)$ in~\eqref{4.38} and using an elementary 
estimate for the ergodic average, we derive 
\[
\limsup_{n\to\infty}\,
\bigl\|n^{-1}G_n\circ\varphi^\ell
-n^{-1}S_{(n+\ell)_k}^{\,k}G_k\circ\varphi^\ell\bigr\|_\infty\le\e
\quad\mbox{for $k\ge k_\e$}. 
\]
Combining this with the relation
\[
\lim_{n\to\infty}\,
\bigl\|(n+\ell)^{-1}G_{n+\ell}-n^{-1}G_n\circ\varphi^\ell\bigr\|_\infty=0,
\] 
which follows from~\eqref{sub-ad2}, we obtain
\begin{equation} \label{4.39}
\limsup_{n\to\infty}\,
\bigl\|n^{-1}G_n-n^{-1}S_{n_k}^{\,k} G_k\circ\varphi^\ell\bigr\|_\infty\le\e
\quad\mbox{for $k\ge k_\e$}. 
\end{equation}
Now note that 
\[
k^{-1}\sum_{\ell=0}^{k-1}S_{n_k}^{\,k} G_k\circ\varphi^\ell=k^{-1}S_{kn_k}G_k. 
\]
Comparing  with~\eqref{4.39}  we get that for  $k\ge k_\e$,
\[
\limsup_{n\to\infty}\,
\bigl\|n^{-1}G_n-n^{-1}k^{-1}S_{kn_k}G_k\bigr\|_\infty
=\limsup_{n\to\infty}\,
\bigl\|n^{-1}G_n-n^{-1}k^{-1}S_{n}G_k\bigr\|_\infty\le\e. 
\]
Since~$\e>0$ is arbitrary, the relation~\eqref{eq:WTSGk25} follows. 

We now prove~\eqref{4.38}. Let us fix an integer $k\ge1$ and write, for 
$n$ large,  $n=kn_k+\ell$, where $0\le \ell\le k-1$. Applying 
inequality~\eqref{sub-ad2} consecutively~$n_k$ times, we derive
\[
G_n\le k_n 
C_k+\sum_{r=0}^{n_k-1}G_k\circ\varphi^{rk}+G_\ell\circ\varphi^{kn_k}. 
\]
This gives 
\[
\limsup_{n\to\infty}\sup_{x\in M}
\bigl(n^{-1}G_n(x) -n^{-1}S_{n_k}^{\,k}G_k(x)\bigr)\le k^{-1}C_k.
\]
Replacing~$G_n$ by~$-G_n$, we derive
\[
\liminf_{n\to\infty}\inf_{x\in M}
\bigl(n^{-1}G_n(x)-n^{-1}S_{n_k}^{\,k}(G_k)(x)\bigr)\ge -k^{-1}C_k.
\]
Combining the last two inequalities and recalling that 
$n^{-1} C_n\to 0$, we arrive at~\eqref{4.38}. 
\hfill\qed
\end{proof}

\begin{lemma}\label{lem:techGnGj}
Let $\GG=\{G_n\}\subset B(M)$ be such that there exists 
$\{G^{(k)}\} \subset B(M)$ satisfying~\eqref{eq:WTSGk2}. 
Then~\eqref{eq:WTSGk25} holds.
\end{lemma}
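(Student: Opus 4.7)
The plan is to use the given sequence $\{G^{(k)}\}\subset B(M)$ as an intermediate approximation: if we can show that $k^{-1}G_k$ behaves, up to small asymptotic error, like some $G^{(j)}$, then the desired approximation $G_n\approx k^{-1}S_nG_k$ will follow from the one already available, namely $G_n\approx S_nG^{(j)}$.

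Concretely, for any $j,k,n\ge 1$, a triple triangle inequality gives
\begin{equation*}
\|G_n-k^{-1}S_nG_k\|_\infty
\le \|G_n-S_nG^{(j)}\|_\infty
+\|S_nG^{(j)}-k^{-1}S_nS_kG^{(j)}\|_\infty
+k^{-1}\|S_n(S_kG^{(j)}-G_k)\|_\infty.
\end{equation*}
The first summand is directly controlled by the hypothesis~\eqref{eq:WTSGk2}. For the third summand, using the elementary bound $\|S_nF\|_\infty\le n\|F\|_\infty$, one gets $n^{-1}k^{-1}\|S_n(S_kG^{(j)}-G_k)\|_\infty\le k^{-1}\|S_kG^{(j)}-G_k\|_\infty$, and the right-hand side is exactly the quantity whose $\limsup$ over $k$ (with $n=k$) is $O(\eps)$ when $j$ is chosen appropriately. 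For the second summand, writing
\begin{equation*}
S_nS_kG^{(j)}(x)=\sum_{i=0}^{n-1}\sum_{\ell=0}^{k-1}G^{(j)}(\varphi^{i+\ell}(x))
=\sum_{m=0}^{n+k-2}c_m\,G^{(j)}(\varphi^m(x)),
\end{equation*}
one checks that $c_m=k$ for $m\in\llbracket k-1,n-1\rrbracket$ and $0\le c_m<k$ for at most $2(k-1)$ remaining indices. Thus
\begin{equation*}
\|S_nG^{(j)}-k^{-1}S_nS_kG^{(j)}\|_\infty\le 2(k-1)\|G^{(j)}\|_\infty,
\end{equation*}
and after normalization by $n^{-1}$ this vanishes as $n\to\infty$ for fixed $j,k$.

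Combining the three estimates, fix $\eps>0$ and choose $j=j(\eps)$ so that $\limsup_n n^{-1}\|G_n-S_nG^{(j)}\|_\infty<\eps$. For this $j$ and for any $k$, taking $\limsup_{n\to\infty}$ in the triangle inequality yields
\begin{equation*}
\limsup_{n\to\infty}n^{-1}\|G_n-k^{-1}S_nG_k\|_\infty
\le \eps+0+k^{-1}\|S_kG^{(j)}-G_k\|_\infty.
\end{equation*}
Now let $k\to\infty$; by the very hypothesis applied to the index $k$ (in place of $n$) for this fixed $j$, the last term has $\limsup_{k\to\infty}$ bounded by $\eps$. Hence $\limsup_{k\to\infty}\limsup_{n\to\infty}n^{-1}\|G_n-k^{-1}S_nG_k\|_\infty\le 2\eps$, and since $\eps$ is arbitrary the limit equals $0$.

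There is no deep obstacle: the only genuinely technical point is the index bookkeeping in the second summand, where one must remember that $S_nS_kG^{(j)}$ is not simply $kS_nG^{(j)}$ but differs from it only on a window of width $O(k)$ at each end of the orbit segment. All other ingredients are the triangle inequality, the trivial bound $\|S_nF\|_\infty\le n\|F\|_\infty$, and the hypothesis itself applied twice (once for the exponent $n$ and once for the exponent $k$).
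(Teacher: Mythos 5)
Your proof is correct and follows essentially the same route as the paper's: the identical three-term triangle inequality around $S_nG^{(j)}$ and $k^{-1}S_nS_kG^{(j)}$, with the first and third terms controlled by two applications of~\eqref{eq:WTSGk2} (once in the index $n$, once in the index $k$). The only difference is the middle term, which the paper handles by invoking Lemma~\ref{p4.30} for the additive sequence $\{S_nG^{(j)}\}$, whereas you bound it directly by counting the $O(k)$ boundary indices where $S_nS_kG^{(j)}$ and $kS_nG^{(j)}$ disagree --- a correct, self-contained alternative that in fact shows this term is $O(k/n)$, hence vanishes for each fixed $k$.
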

\begin{proof}{} 
For all $j\geq 1$ we have
\begin{align*}
\|G_n-k^{-1}S_n G_k\|_\infty 
\leq\|G_n-S_n G^{(j)}\|_\infty
&+\|S_n G^{(j)}-k^{-1}S_n S_k G^{(j)}\|_\infty\\
&+k^{-1}\|S_n S_k G^{(j)}-S_n G_k\|_\infty.
\end{align*}

Applying Lemma~\ref{p4.30} to $\{S_n G^{(j)}\}$ we obtain that for each 
$j$,
\begin{equation*}
\lim_{k\to\infty}\limsup_{n\to\infty}\frac1n
\|S_n G^{(j)}-k^{-1}S_n S_k G^{(j)}\|_\infty = 0.
\end{equation*}
Fix now $\e > 0$. If~$j$ is large enough, then by~\eqref{eq:WTSGk2} we 
have 
\begin{align*}
\limsup_{n\to\infty} \frac  1 n	\|G_n - S_n G^{(j)}\|_\infty 
&\leq \epsilon,\\
\limsup_{k\to\infty}\limsup_{n\to\infty}\frac1{kn}
\|S_n S_k G^{(j)}-S_n G_k\|_\infty  
&\leq \limsup_{k\to\infty}\frac1{k}\| S_k G^{(j)}-G_k\|_\infty\leq\e.
\end{align*}
We thus obtain
\begin{equation*}
\limsup_{k\to\infty}\limsup_{n\to\infty}\frac1n
\|G_n-k^{-1}S_n G_k\|_\infty \leq 2\epsilon.
\end{equation*}
Since $\epsilon$ is arbitrary, this completes the proof.\hfill\qed
\end{proof}

\begin{lemma}\label{lem:regularizationBB}
Let $f\in B(M)$ be such that for some fixed $n,\e,\alpha$ we have
\begin{equation*}
\sup_{x\in M} \sup_{y\in B_n(x, \e)} |f(y)-f(x)|	\leq \alpha.
\end{equation*}
Then there exists a continuous function $g$ such that 
$\|f-g\|_\infty	\leq \alpha$.
\end{lemma}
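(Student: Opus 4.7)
The plan is to construct $g$ as a continuous average of the values of $f$ at the centers of a finite cover of $M$ by Bowen balls, using a subordinate partition of unity, and to leverage the oscillation bound to control $\|f-g\|_\infty$.

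First, I would observe that each Bowen ball
\[
B_n(x,\e) = \bigcap_{k=0}^{n-1} (\varphi^k)^{-1}\bigl(\{z\in M: d(z,\varphi^k(x))<\e\}\bigr)
\]
is open in $(M,d)$, being a finite intersection of preimages of open balls under the continuous iterates of $\varphi$, and trivially contains $x$. Thus $\{B_n(x,\e)\}_{x\in M}$ is an open cover of the compact space $M$, from which I extract a finite subcover $\{B_n(x_i,\e)\}_{i=1}^N$. Since $(M,d)$ is a compact metric (hence normal) space, there exists a continuous partition of unity $\{\psi_i\}_{i=1}^N \subset C(M)$ subordinate to this cover, with $0\le\psi_i\le 1$, $\mathrm{supp}(\psi_i) \subset B_n(x_i,\e)$, and $\sum_{i=1}^N \psi_i \equiv 1$ on $M$.

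I would then define
\[
g(x) = \sum_{i=1}^N \psi_i(x)\, f(x_i),
\]
which is clearly an element of $C(M)$. The key observation is the symmetry of the Bowen pseudometric $d_n(x,y) = \max_{0\le k\le n-1} d(\varphi^k(x),\varphi^k(y))$: for any $x\in M$, whenever $\psi_i(x)>0$ one has $x\in B_n(x_i,\e)$, equivalently $x_i\in B_n(x,\e)$, and so the standing hypothesis applied at $x$ gives $|f(x_i)-f(x)|\le\alpha$. Using $\sum_i\psi_i(x)=1$, this yields
\[
|g(x) - f(x)| = \left|\sum_{i=1}^N \psi_i(x)\,(f(x_i)-f(x))\right| \le \sum_{i=1}^N \psi_i(x)\,|f(x_i)-f(x)| \le \alpha,
\]
uniformly in $x\in M$, which is the desired bound $\|f-g\|_\infty\le\alpha$.

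There is no substantial obstacle in this argument; the only routine point is invoking the existence of a continuous partition of unity subordinate to a finite open cover of a compact metric space, which is standard.
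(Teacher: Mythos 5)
Your proof is correct and follows essentially the same route as the paper's: a finite cover of $M$ by open Bowen balls (equivalently, a finite $(n,\e)$-spanning set), a subordinate continuous partition of unity, and the definition $g=\sum_i\psi_i f(x_i)$ together with the oscillation bound. The symmetry remark about $d_n$ is harmless but not needed, since the hypothesis is a supremum over all centers and all points of the corresponding Bowen balls.
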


\begin{proof}{}
Let $E = \{x_1, x_2, \dots, x_r\}$ be any finite $(n,\e)$-spanning set. 
Let $\rho_1, \dots, \rho_r$ be a partition of unity subordinated to the 
collection $\{B_n(x_i, \e): i=1, \dots, r\}$ (\ie $\rho_i $ is 
continuous, vanishes outside $B_n(x_i, \e)$, and $\sum_i \rho_i =1$).
We claim that the continuous function
\begin{equation*}
g = \sum_{i=1}^r \rho_i f(x_i)	
\end{equation*}
satisfies the required properties. 
Let  $x\in M$ and let $J\subset \{1, \dots, r\}$ be the largest set such 
that $x\in \bigcap_{j\in J} B_n(x_j, \e)$. We then  have
\begin{equation*}
|g(x) -f(x)| = |	\sum_{j\in J} \rho_j (x)f(x_j) -f(x)| 
= |\sum_{j\in J} \rho_j(x) (f(x_j) -f(x))| \leq \alpha,
\end{equation*}
and the result follows.\hfill\qed
\end{proof}

\begin{proof}{ of Proposition~\ref{prop:characterizationAA}}
In case~(1) we can obviously choose $G^{(k)} = G$ as an approximating 
sequence for $\GG$. Next, (2) is a special case of~(4) with 
$C_n \equiv 0$. In case~(3), we obtain by Lemma~\ref{p4.30} 
that~\eqref{eq:WTSGk25} holds so that we are in case~(5).
In case~(4), \eqref{eq:WTSGk25} also holds by Lemma~\ref{p4.30}, and 
hence we find ourselves in the case~(6).

That~(5) implies asymptotic additivity is immediate, with 
$G^{(k)} = k^{-1} G_k$ as an approximating sequence. The reverse 
implication follows immediately from Lemma~\ref{lem:techGnGj} applied to 
$G_n$ and any approximating sequence $\{G^{(k)}\}$.

We now prove that~(6) implies asymptotic additivity. First, it follows 
from~\eqref{eq:mildvar3} that there exists a sequence 
$\{\e_k\}_{k\geq 1}$ such that
\begin{equation}\label{eq:subseqepsk}
\lim_{k\to \infty} \sup_{x\in M}\sup_{y\in B_{k}(x,\e_k)}\frac1k
|G_{k}(x)-G_{k}(y)| = 0.
\end{equation}
Indeed, by~\eqref{eq:mildvar3}, for each $\ell \in \mathbb N$, there 
exists $\overline \e(\ell)$ and $k_0(\ell)$ such that for all 
$k \geq k_0(\ell)$ we have $\sup_{x\in M} \sup_{y\in B_{k}(x,\overline \e(\ell))}\frac 1 {k}|G_{k}(x)-G_{k}(y)| \leq \ell^{-1}$.  Let  $\{\ell_k\}$  be such that $\ell_k\to\infty$ and $k\geq k_0(\ell_k)$ for all $k$ large enough. Setting  $\e_k = \overline \e(\ell_k)$ 
establishes~\eqref{eq:subseqepsk}. 

Let $G^{(k)}$ be  the regularization of $\frac 1 {k} G_{k}$ obtained in 
Lemma~\ref{lem:regularizationBB} with respect to the Bowen balls 
$B_k(x, \e_k)$, with $\e_k$ as in \eqref{eq:subseqepsk}.  The function 
$G^{(k)}$ is continuous and
\begin{equation}\label{eq:limitGkGpk}
\lim_{k\to\infty}\|k^{-1} G_{k} - G^{(k)}\|_\infty = 0.
\end{equation}
Since
\begin{align*}
\frac 1 n\|G_n - S_n G^{(k)}\|_\infty& \leq \frac  1n
\|G_n - k^{-1} S_n G_{k}\|_\infty +  \frac 1 {n}\left\|  
S_n \left[k^{-1} G_{k} -  G^{(k)}\right ]\right\|_\infty,
\end{align*}
relations  \eqref{eq:WTSGk25} and~\eqref{eq:limitGkGpk} give  that $G_n$ 
is asymptotically additive.

Next,  Lemma~\ref{lem:techGnGj} immediately implies that~(7) is a 
special case of~(6). Finally, assuming~\eqref{eq:WTSGk2}, it is easy to 
see that~\eqref{eq:mildvar3} and~\eqref{eq:mildvareachk} are equivalent. 
Thus, (7) and~(8) are equivalent.

We have shown that (8) $\iff$ (7) $\Longrightarrow$ (6) 
$\Longrightarrow \GG \in \cA(M)$. Since by Lemma~\ref{lem:approxGmn} 
asymptotic additivity implies~(7), the statements (6), (7), (8) are all 
equivalent to $\GG \in \cA(M)$.
\hfill\qed\end{proof}

\begin{remark}\label{rem:remcharac} 
Note that by the characterization given in (5), if $G_n\in C(M)$ for 
all~$n$, then the approximating sequence can be chosen to be 
$G^{(k)} = k^{-1} G_k$. Moreover, the proof gives that when these 
functions are not continuous, $G^{(k)}$  can be chosen as a 
regularization of $k^{-1} G_k$. By the finiteness 
of~\eqref{eq:nm1Gnfini}, this specific choice of $G^{(k)}$ satisfies
$\sup_{k\geq 1}\|G^{(k)}\|_\infty < \infty$ (this is not true of all
approximating sequences). Finally, if $\GG = \{G_n\} \subset B(M)$ is
asymptotically additive, then there exists an asymptotically additive 
potential sequence $\{G_n'\}  \subset C(M)$ in the same class as $\GG$ 
in the sense of Remark~\ref{rem:equivrel}, \ie such that 
$\limsup_{n\to\infty} n^{-1}\|G_n-G_n'\|_\infty =0$. Indeed, it 
suffices to take an approximating sequence $\{G^{(k)}\} \subset C(M)$ 
for $\GG$, and then to define $G_n' = S_n G^{(k_n)}$ for some 
well-chosen sequence $k_n\to\infty$ (which is obtained with an argument
similar to that leading to~\eqref{eq:subseqepsk}).
\end{remark}

\begin{remark}\label{rem:openQaa} 
The reader may check that $G_n = \log n$ gives a sequence which is weakly almost
additive but not almost additive.\footnote{We recall that a sequence is almost
additive if~\eqref{sub-ad2} holds with $C_n$ independent of $n$.} Moreover,
choosing $G_n = \sqrt{n}$ when $n$ is even, and $G_n = 0$ when $n$ is odd, gives
a sequence which is asymptotically additive but not weakly almost additive. We
note that these two potential sequences are actually equivalent (in the sense of
Remark~\ref{rem:equivrel}) to the potential which is identically zero. It
remains an open question whether one can find an asymptotically additive
potential $\GG$ such that there is no additive potential in the same equivalence
class. { {\bf Note added in Feb.~2026}: the answer to this question
was proved to be negative in \cite{cuneo_asympt_2019}.
}

\end{remark}

\subsection*{Frequently used notation}
\addcontentsline{toc}{section}{Frequently used notation}

\begin{longtable}{p{3.2cm}p{12.cm}}
\textbf{(S)} & specification property, page~\pageref{Cond:S}\\
\textbf{(WPS)} & weak periodic specification property, page~\pageref{Cond:WPS}\\
\textbf{(USCE)} & upper semicontinuity of entropy, page~\pageref{usce}\\
\textbf{(PAP)} & periodic approximation of pressure, page~\pageref{pap}\\ 
\textbf{(C)} & $\varphi$ is continuous, page~\pageref{C}\\
\textbf{(H)} & $\varphi$ is a homeomorphism, page~\pageref{H}\\
\textbf{($\boldsymbol{\mathcal{C}}$-Commutation)} & commutation hypothesis, page~\pageref{commutation}\\
\textbf{($\boldsymbol{\mathcal{R}}$-Reversal)} & reversal hypothesis, page~\pageref{reversal} \\
$M$ & compact metric space\\
$\varphi$ & continuous mapping of the space~$M$ into itself\\
$M_n$ & set of fixed points of the mapping~$\varphi^n$\\
$C(M)$ & space of continuous functions $V:M\to\R$ with the supremum norm~$\|\cdot\|_\infty$\\
$B(M)$ & space of bounded measurable functions $V:M\to\R$ with the norm~$\|\cdot\|_\infty$\\
$\aA(M)$ & space of asymptotically additive sequences of functions, page~\pageref{eq:defasymadd}\\
$\PP(M)$ & set of Borel probability measures on $M$, with the topology of weak convergence and the corresponding Borel $\sigma$-algebra\\
$\PP_\varphi(M)$ & set of invariant measures for a mapping~$\varphi$\\
$\EE_\varphi(M)$ & set of ergodic invariant measures for~$\varphi$\\
$h_\textrm{Top}(\varphi)$ & topological entropy of~$\varphi$\\
$h_\varphi(\Q)$ & Kolmogorov--Sinai entropy of~$\varphi$\\
$\fp_\varphi(\GG)$ & topological pressure of $\GG\in\aA(M)$ with respect to $\varphi$, page~\pageref{sec:TP}\\
$\mu_n^x$ & empirical measures, page~\pageref{oc-me}\\
$\sigma_n$ & the entropy production in time~$n$, page~\pageref{4.31}
\end{longtable}

\addcontentsline{toc}{section}{References}

\newcommand{\etalchar}[1]{$^{#1}$}
\def\polhk#1{\setbox0=\hbox{#1}{\ooalign{\hidewidth \lower1.5ex\hbox{`}\hidewidth\crcr\unhbox0}}} \def\pre{{Phys. Rev. E\ }}\def\polhk#1{\setbox0=\hbox{#1}{\ooalign{\hidewidth \lower1.5ex\hbox{`}\hidewidth\crcr\unhbox0}}}
\providecommand{\bysame}{\leavevmode \hbox to3em{\hrulefill}\thinspace}
\providecommand{\og}{``}
\providecommand{\fg}{''}
\providecommand{\smfandname}{and}
\providecommand{\smfedsname}{eds.}
\providecommand{\smfedname}{ed.}
\providecommand{\smfmastersthesisname}{Master Thesis}
\providecommand{\smfphdthesisname}{Thesis}

\end{document}